%% file: main.tex
\def\confversion{0}
\def\ifconf{\ifnum\confversion=1}
\def\ifnotconf{\ifnum\confversion=0}

\documentclass[11 pt]{article}
\def\showauthornotes{1}
\def\showkeys{0}
\def\showdraftbox{0}
\def\confversion{0}
\def\widemargin{0}
\def\ipadcompile{0}

\input{macros}

\input{abbrev}

\input{defs}

\definecolor{cadmiumgreen}{rgb}{0.0, 0.42, 0.24}

\usepackage{graphicx}
\usepackage{booktabs}

\allowdisplaybreaks
\begin{document}
	
	\title{Low Soundness Linearity Testing on the Half-Slice}

\author{Haakon Larsen \thanks{The University of Iowa, \; \tt{haakon-larsen@uiowa.edu}.} \and
Tushant Mittal\thanks{Stanford University, \; {\tt{tushant@stanford.edu}}. TM is a postdoctoral fellow supported by the NSF grants CCF-2143246 and CCF-2133154. } \and Silas Richelson  \thanks{University of California, Riverside, \; \tt{silas@cs.ucr.edu}.}
\and Sourya Roy\thanks{The University of Iowa, \; \tt{sourya-roy@uiowa.edu}.}}

	\date{}
	
	\maketitle
	\draftbox
	\begin{abstract}	
	Let \(f: \slice \to \{ 0,1  \} \) be a Boolean function on the Boolean half-slice, \(\slice\), \ie elements of $\{0,1\}^n$ with Hamming weight $n/2$.  We show that if \(f(x)+f(y)=f(x+y)\) holds with probability $\frac{1+\delta}{2}$ over a uniform pair $(x,y)$ such that $x,y,x+y\in T$, then $f$ agrees with some linear function on at least $\frac{1+\delta}{2}-o(1)$ fraction of the points in $T$.  More generally, we show that if \(f\) passes the natural \(k\)-query BLR  test with probability  \( \frac{1+\delta}{2} \) for any $k\geq3$, then it must agree with some affine function at \(\frac{1+\delta^{\frac{1}{k-2}}}{2}-o(1)\)  fraction of the points in $T$.

\medskip	

The only other known linearity test for the slice in the low soundness regime (\ie when $\delta$ can be arbitrarily small) was given by Kalai, Lifshitz, Minzer, and Ziegler [FOCS'24].  Our result improves upon this result in two significant ways: firstly, it works for $k=3$ queries, instead of requiring $k\geq4$; secondly, our result is sharper, \eg when $k=4$, we are able to conclude an agreement of \(\frac{1+\sqrt{\delta}}{2}-o(1)\) instead of \(\frac{1+c\sqrt{\delta}}{2}\) for $c\approx.0035$.  In particular, our result matches (up to the $o(1)$ term) the conclusion one obtains over the full hypercube via the classical BLR analysis.

	\medskip	
	
	Our main technical contribution is a new dense model theorem using bounds on Krawtchouk polynomials.  Using these Krawtchouk polynomial bounds, we also obtain a simple $k$-query test ($k\geq 5$) that avoids any use of the dense model machinery. This simplified test naturally extends to the slice over the $q$-ary hypercube, giving the first such result over larger alphabets.
	\end{abstract}

\pagenumbering{gobble} 
\newpage
\pagenumbering{gobble} 
\tableofcontents
\newpage
\pagenumbering{arabic} 
\input{intro}

\input{kravchuk}

\input{approximator}

\input{general}

\input{qslice}

\newcommand{\etalchar}[1]{$^{#1}$}

\appendix
 \end{document}

%% file: macros.tex
\usepackage{xspace,enumerate}
\usepackage{amsmath,amssymb}
\usepackage{amsthm}
\ifnum\confversion=0
	\usepackage[toc,page]{appendix}
\fi
\usepackage[T1]{fontenc}
\usepackage[utf8]{inputenc}

\usepackage{thmtools}
\usepackage{thm-restate}
\usepackage{color,graphicx}
\usepackage{boxedminipage}
\usepackage{makecell}
\usepackage{tabularx,multirow}
\usepackage{csquotes}
\usepackage{mdframed}

\ifnum\widemargin=0
\usepackage[top=1in, bottom=1in, left=1in, right=1in]{geometry}
\else
\usepackage[top=1in, bottom=1in, left=1.25in, right=1.25in]{geometry}
\fi

\ifnum\showkeys=1
\usepackage[color]{showkeys}
\fi

\definecolor{darkred}{rgb}{0.5,0,0}
\definecolor{darkgreen}{rgb}{0,0.35,0}
\definecolor{darkblue}{rgb}{0,0,0.55}

\usepackage[dvipsnames]{xcolor}

\usepackage[pdfstartview=FitH,pdfpagemode=UseNone,colorlinks,linkcolor=NavyBlue,filecolor=blue,citecolor=OliveGreen,urlcolor=NavyBlue,pagebackref]{hyperref}

\usepackage{enumitem}
\usepackage{comment}
\usepackage[color=asparagus]{todonotes}

\usepackage{tikz-cd}

\usepackage[capitalise,nameinlink]{cleveref}

\usepackage{microtype}

\usepackage{mathtools,dsfont, bbm}

\usepackage{palatino}
\usepackage[scaled=.95]{helvet}

\ifnum\ipadcompile=1
\usepackage{eulervm}
\else
\usepackage{eulerpx}
\fi

\DeclareFontEncoding{LGR}{}{}
\DeclareSymbolFont{sfitgreek}{LGR}{cmss}{m}{it}
\SetSymbolFont{sfitgreek}{bold}{LGR}{cmss}{bx}{it}
\DeclareMathSymbol{\sfpi}{\mathord}{sfitgreek}{`p}

\usepackage{tcolorbox}

\DeclareMathAlphabet{\mathpazocal}{OMS}{zplm}{m}{n}
\DeclareRobustCommand*{\mathcal}[1]{\mathpazocal{#1}}
\ifnum\showauthornotes=1
\newcommand{\Authornote}[3]{{\sf\small\color{#3}{[#1: #2]}}}
\newcommand{\Authorcomment}[2]{{\sf \small\color{gray}{[#1: #2]}}}
\newcommand{\Authorfnote}[2]{\footnote{\color{red}{#1: #2}}}
\else
\newcommand{\Authornote}[3]{}
\newcommand{\Authorcomment}[2]{}
\newcommand{\Authorfnote}[2]{}
\fi

\ifnum\showdraftbox=1
\newcommand{\draftbox}{\begin{center}
  \fbox{%
    \begin{minipage}{2in}%
      \begin{center}%
        \begin{Large}%
          \textsc{Working Draft}%
        \end{Large}\\
        Please do not distribute%
      \end{center}%
    \end{minipage}%
  }%
\end{center}
\vspace{0.2cm}}
\else
\newcommand{\draftbox}{}
\fi

\declaretheorem[numberwithin=section]{theorem}
\declaretheorem[sibling=theorem]{lemma}
\declaretheorem[sibling=theorem]{claim}

\declaretheorem[sibling=theorem]{fact}
\declaretheorem[sibling=theorem]{corollary}

\theoremstyle{definition}
\declaretheorem[sibling=theorem]{definition}

\declaretheorem[sibling=theorem]{notation}

\newtheorem{algo}[theorem]{Algorithm}

\def\FullBox{\hbox{\vrule width 6pt height 6pt depth 0pt}}

\def\qed{\ifmmode\qquad\FullBox\else{\unskip\nobreak\hfil
\penalty50\hskip1em\null\nobreak\hfil\FullBox
\parfillskip=0pt\finalhyphendemerits=0\endgraf}\fi}

\def\qedsketch{\ifmmode\Box\else{\unskip\nobreak\hfil
\penalty50\hskip1em\null\nobreak\hfil$\Box$
\parfillskip=0pt\finalhyphendemerits=0\endgraf}\fi}

%% file: abbrev.tex
\DeclareMathOperator{\agr}{agr}

\newcommand{\agrHom}{\agr{(f,\phi)}}
\newcommand{\tagrHom}{\widetilde{\agr}{(f,\phi)}}

\newcommand{\iotaslice}[1]{\widehat{\iota}_{q,t}(#1)}

\newcommand{\slice}{T}

\newcommand{\qslice}{T_{\scriptscriptstyle \frac{q-1}{q}n}}

\newcommand{\qcube}{\mathbb{F}_q^n}

\newcommand{\krawt}{\mathcal{K}}

\newcommand{\indicator}[1]{\mathds{1}_{#1}\xspace}
\newcommand{\tindicator}[1]{\iota_{#1}}

\newcommand{\Hom}{\textup{Hom}}

\newcommand{\ep}{\varepsilon}

\makeatletter
\newcommand{\stackalign}[1]{
	\vcenter{
		\Let@ \restore@math@cr \default@tag
		\baselineskip\fontdimen10 \scriptfont\tw@
		\advance\baselineskip\fontdimen12 \scriptfont\tw@
		\lineskip\thr@@\fontdimen8 \scriptfont\thr@@
		\lineskiplimit\lineskip
		\ialign{\hfil$\m@th\scriptstyle##$&$\m@th\scriptstyle{}##$\crcr
			#1\crcr
		}
	}
}

\let\latexcirc=\circ
\newcommand{\ccirc}{\mathbin{\mathchoice
  {\xcirc\scriptstyle}
  {\xcirc\scriptstyle}
  {\xcirc\scriptscriptstyle}
  {\xcirc\scriptscriptstyle}
}}
\newcommand{\xcirc}[1]{\vcenter{\hbox{$#1\latexcirc$}}}\let\circ\ccirc

\def\to{\rightarrow}

\def\epsilon{\varepsilon}

\def\phi{\varphi}
\def\cal{\mathcal}

\def\implies{\Rightarrow}

\newcommand{\ie}{i.e.,\xspace}
\newcommand{\eg}{e.g.,\xspace}

\newcommand{\R}{{\mathbb R}}
\newcommand{\E}{{\mathbb E}}
\newcommand{\C}{{\mathbb C}}
\newcommand{\N}{{\mathbb{N}}}

\newcommand{\F}{{\mathbb F}}

\DeclarePairedDelimiter\parens{\lparen}{\rparen}
\DeclarePairedDelimiter\abs{\lvert}{\rvert}
\DeclarePairedDelimiter\norm{\lVert}{\rVert}

\DeclarePairedDelimiter\braces{\lbrace}{\rbrace}
\DeclarePairedDelimiter\brackets{\lbrack}{\rbrack}
\DeclarePairedDelimiter\angles{\langle}{\rangle}
\DeclarePairedDelimiterXPP\lnorm[1]{}\lVert\rVert{_2}{#1}

\DeclareMathDelimiter{\given}
      {\mathbin}{symbols}{"6A}{largesymbols}{"0C}

\newcommand{\prob}{\mathsf{Pr}}
\newcommand{\Esymb}{\mathbb{E}}

\newcommand{\Psymb}{\mathrm{Pr}}

\DeclarePairedDelimiterXPP{\Prob}[1]
 {\prob}{\lparen}{\rparen}{}
 {\renewcommand{\given}{\;\delimsize\vert\nonscript\;\mathopen{}}#1}

\makeatletter
\def\Pr#1{%
    \ProbabilityRender{\Psymb}{#1}%
}

\def\Ex#1{%
    \ProbabilityRender{\Esymb}{#1}%
}

\def\condPE#1#2{%
	\@ifnextchar\bgroup
	{\ConditionalProbabilityRender{\widetilde{\Esymb}}{#1}{#2}}
	{\ProbabilityRender{\widetilde{\Esymb}}{#1 \given #2}}
}

\def\ConditionalProbabilityRender#1#2#3#4{
	\renderwithdist{#1}{#2}{#3 \given #4}	
}

\def\ProbabilityRender#1#2{%
  \@ifnextchar\bgroup%
  {\renderwithdist{#1}{#2}}
   {\singlervrender{#1}{#2}}
}
\def\singlervrender#1#2{%
   {\mathchoice
       {{#1}\brackets*{#2}}
       {{#1}[ #2 ]}
       {{#1}[ #2 ]}
       {{#1}[ #2 ]}
   }
}
\def\renderwithdist#1#2#3{%
   \@ifnextchar\bgroup
   {\superfancyrender{#1}{#2}{#3}}
   {\mathchoice
      {\underset{#2}{#1}\brackets*{#3}}
      {{#1}_{#2}[ #3 ]}
      {{#1}_{#2}[ #3 ]}
      {{#1}_{#2}[ #3 ]}
     }
}
\def\superfancyrender#1#2#3#4#5{
   \ensuremath{\mathchoice
      {\underset{#1}{{#1}}\left#4 #3 \right#5}
      {{#1}_{#2}#4 #3 #5}
      {{#1}_{#2}#4 #3 #5}
      {{#1}_{#2}#4 #3 #5}
   }
}
\makeatother

 \newcommand\SetSymbol[1][]{%
     \nonscript\:#1\vert
     \allowbreak
     \nonscript\:
     \mathopen{}}
  \DeclarePairedDelimiterX\Set[1]\{\}{%
     \renewcommand\given{\SetSymbol[\delimsize]}
     #1
}

\newcommand{\ip}[2]{\angles*{#1 , #2}}

\newcommand{\calH}{{\cal H}}

\DeclareMathOperator{\Cay}{Cay}

\DeclareMathOperator{\Tr}{Tr}

\definecolor{asparagus}{rgb}{0.53, 0.66, 0.42}
\definecolor{cambridgeblue}{rgb}{0.64, 0.76, 0.68}
\definecolor{celadon}{rgb}{0.67, 0.88, 0.69}
\definecolor{charcoal}{rgb}{0.21, 0.27, 0.31}

%% file: defs.tex
\newif\ifnotes
\notestrue
\ifnotes

\newcommand{\bu}{{u}}
\newcommand{\bx}{{x}}
\newcommand{\by}{{y}}
\newcommand{\bz}{{z}}

\newcommand{\calK}{\krawt} %

\newcommand{\calO}{\mathcal{O}}
\newcommand{\calP}{\mathcal{P}}

\newcommand{\sfP}{{\sf P}}
\newcommand{\sfQ}{{\sf Q}}

\newcommand{\boldone}{{1\!\!1}}

\newcommand{\niceand}{\text{ }\&\text{ }}

\newcommand{\statement}{\underline{\sf statement}}
\newcommand{\zo}{\{0,1\}}

\newcommand{\fourierinner}[2]{\ip{\widehat{#1}^{#2}}{{\bf 1}}}

%% file: intro.tex
\section{Introduction}
\label{sec:intro}

Given oracle access to a Boolean function $f:\zo^n\rightarrow\zo$, we are interested in checking if $f$ is \emph{close} to being a linear function using a small number of queries.  In theoretical computer science, this question is popularly known as the \emph{linearity testing} problem.  It has been widely studied due to its key role in many diverse applications.  The original work in the area is the legendary Blum-Luby-Rubinfeld (BLR) test~\cite{BLR90} which states that if $f$ passes the check $f(\bx+\by)=f(\bx)+f(\by)$ with non-trivial probability over a random pair $(\bx,\by)$, then $f$ must correlate with some linear function.

Let $n\in\N$ be an even integer and let $|\bx|$ denote the Hamming weight of $\bx \in \zo^n$. The ``half slice'' of the Boolean hypercube $\zo^n$ is defined as,
\[
\slice ~:=~ \Big\{\,\bx\in\zo^n \,:\, |\bx| = \frac{n}{2} \,\Big\}.
\] 
  Motivated by questions in hardness amplification, PCP, and direct product testing, David, Dinur, Goldenberg, Kindler, and Shinkar~\cite{DDGKS17} initiated the study of linearity testing over $\slice$.  They proved that if $f:\slice\rightarrow\zo$ passes the BLR check $f(\bx+\by)=f(\bx)+f(\by)$ with probability $1-\ep$ over a random pair $(\bx,\by)$ such that $\bx,\by,\bx+\by\in \slice$, then $f$ agrees with some linear function on $1-\calO(\ep)$ fraction of the $\bx\in \slice$.  That is, they proved a linearity testing theorem over $\slice$ in the ``$99\%$ regime''.

The corresponding question in the ``$1\%$ regime'' asks whether a function which passes the above BLR test over $\slice$ with probability that is noticeably better than $1/2$ must have non-trivial correlation with a linear function.  This question was left open by~\cite{DDGKS17} and was essentially untouched until the recent work of Kalai, Lifshitz, Minzer, and Ziegler~\cite{KLMZ24} who proved that if $f:T\rightarrow\zo$ passes the $4$-query BLR check $f(\bx+\by+\bz)=f(\bx)+f(\by)+f(\bz)$ with probability $\frac{1+\delta}{2}$ over a random triple $(\bx,\by,\bz)$ such that $\bx,\by,\bz,\bx+\by+\bz\in T$, then $f$ must agree with some affine function on $\frac{1+c\sqrt{\delta}}{2}$ fraction of the inputs in $T$, for a constant $c>0$.

This was exciting progress, and the argument in~\cite{KLMZ24} is extremely elegant with several novel components.  However, two sub-optimal features of their result are 1) it uses four queries, rather than the original BLR test, which uses three; and 2) the constant $c$ is quite small ($c\approx.0035$).  In particular, even if $f$ passes the four query BLR test over $T$ with $99\%$ probability, the result of~\cite{KLMZ24} only guarantees that $f$ agrees with some linear function on $50.2\%$ of the inputs in $T$.  This is in contrast to the original BLR linearity test where it is known (via the analysis of~\cite{BCHKS96}) that if $f$ passes the $3$-query BLR test with probability $\frac{1+\delta}{2}$, then $f$ agrees with some linear function on a $\frac{1+\delta}{2}$ fraction of the inputs in $\zo^n$.  Our main result extends the BLR linearity test to the half slice.

\begin{restatable}[Three-Query Test]{theorem}{mainthree}\label{thm:main}
  Let $n\in\N$ be a multiple of $4$, and suppose $f:T\rightarrow\zo$ is such that \[
  \Pr{\bx,\by\sim\zo^n}{f(\bx+\by)=f(\bx)+f(\by)\;\big|\;\bx,\by,\bx+\by\in T} ~=~ \frac{1+\delta}{2}.
  \]
  Then, there exists some $\bu\in\zo^n$ such that $\Pr{\bx\sim\zo^n}{f(\bx)=\bx\cdot\bu\;\big|\;\bx\in T}~\geq~\frac{1+\delta}{2}-o(1)$.
\end{restatable}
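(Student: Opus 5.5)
We would follow the Fourier-analytic BLR argument, with the distribution on triples in the slice replaced by a "dense model" on the whole cube. Write $F(\bx)=(-1)^{f(\bx)}$ on $T$, extended by $0$ off $T$, and let $\mu=|T|/2^n$. The acceptance probability equals $\frac12+\frac12\,\Ex{}{F(\bx)F(\by)F(\bx+\by)\mid \bx,\by,\bx+\by\in T}$. Hence
\[
\delta=\frac{2^{2n}}{N_3}\,\Ex{\bx,\by}{F(\bx)F(\by)F(\bx+\by)}=\frac{2^{2n}}{N_3}\sum_{\bu}\widehat F(\bu)^3,
\]
where $N_3$ is the number of pairs with $\bx,\by,\bx+\by\in T$. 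Since $n\equiv 0 \pmod 4$, such pairs exist: $\bx,\by$ must overlap in exactly $n/4$ coordinates. Correlation with $\chi_\bu$ on $T$ is $\widehat F(\bu)/\mu$, and the agreement is $\frac12(1+\widehat F(\bu)/\mu)$. So it suffices to show $\max_\bu \widehat F(\bu)/\mu\ge \delta-o(1)$.

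\textbf{Step 1: normalize.} Over the full cube we would bound $\sum\widehat F^3\le \max\widehat F\cdot\sum \widehat F^2=\max\widehat F\cdot\mu$, using Parseval. This gives $\delta\le \frac{2^{2n}\mu}{N_3}\max_\bu\widehat F(\bu)$. Now $N_3/2^{2n}=\Pr{}{\bx\in T}\Pr{}{\by\in T}\Pr{}{\bx+\by\in T\mid \bx,\by\in T}=\mu^2\cdot p$, where $p=\Theta(n^{-1/2})\gg\mu$. This bound therefore loses a factor $p/\mu\approx\sqrt n$, and the naive argument fails. The fix is to use the structure of the constraint $|\bx+\by|=n/2$.

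\textbf{Step 2: reweight via Krawtchouk polynomials (the main obstacle).} Conditioned on $\bx,\by\in T$, the event $\bx+\by\in T$ depends only on $|\bx\wedge\by|$. Equivalently, one can write
\[
\Ex{}{F(\bx)F(\by)F(\bx+\by)\,\mathbb 1[\bx+\by\in T]}=\sum_{\bu}\widehat F(\bu)^2\,\widehat{G}(\bu),
\]
where $G=F\cdot\mathbb 1_T$ and the $T$-structure enters through the slice's spherical functions. A cleaner route is to apply the Fourier expansion of $\mathbb 1_T$ to the third factor: $\widehat{\mathbb 1_T}(\bu)=2^{-n}\krawt_{n/2}(|\bu|)$. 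The plan is to build a dense model $\tilde F$ on $\zo^n$: a bounded function of density $\approx p$ whose three-term correlation $\sum\widehat{\tilde F}{}^3$ matches ours up to $o(1)$ relative error, and whose Fourier coefficients are $\widehat{\tilde F}(\bu)\approx (p/\mu)\widehat F(\bu)$ for every $\bu$ that matters.

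Concretely, we would convolve $F$ with the weight-$n/2$ sphere (or a noise-smoothed version of it). Because $\krawt_{n/2}(w)$ vanishes for odd $w$ and is tightly bounded by roughly $\binom{n}{n/2}\cdot\binom{n/2}{w/2}/\binom{n}{w}$ otherwise, contributions from $\bu$ with $|\bu|$ outside a window of size $o(n)$ around $0$ and $n$ are negligible. Inside the window the Krawtchouk values are essentially constant, up to the sign pattern $(-1)^{w/2}$. We expect the hardest part to be these uniform Krawtchouk estimates and the handling of parity/sign. We would try to verify the needed bounds first via the generating function $(1-z)^{n/2}(1+z)^{n/2}=(1-z^2)^{n/2}$, which gives $\krawt_{n/2}(w)$ in closed form.

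\textbf{Step 3: conclude.} With the dense model, the BCHKS-style argument runs with correct normalization. We get $\delta-o(1)\le \max_\bu \widehat{\tilde F}(\bu)/p$ from $\sum\widehat{\tilde F}{}^3\le\max\widehat{\tilde F}\sum\widehat{\tilde F}{}^2$, using the Parseval mass $\approx p$. Transferring back via $\widehat{\tilde F}(\bu)/p\approx\widehat F(\bu)/\mu$ yields some $\bu$ with correlation at least $\delta-o(1)$ on $T$. Finally, the linear function $\bx\mapsto \bx\cdot\bu$ and its complement differ on $T$ by the affine shift $\bx\cdot\bu+1=\bx\cdot(\bu+\mathbf 1)+n/2$. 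Since $n/2$ is even, complementing is the same as replacing $\bu$ by $\bu+\mathbf 1$. The sign ambiguity is therefore absorbed, and the claimed agreement $\frac{1+\delta}{2}-o(1)$ follows.
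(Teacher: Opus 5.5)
Your skeleton (run the BCHKS argument on a dense model of $F$, then transfer back) is the paper's, but Step 2 is where the whole difficulty lies, and as sketched it fails. Convolving $F$ with the weight-$n/2$ sphere, or a noise-smoothed version, multiplies $\hat F(u)$ by a fixed multiplier. For the sphere this multiplier is $\hat T(u)=(-1)^{|u|/2}\binom{n/2}{|u|/2}/\binom{n}{|u|}$ for even $|u|$, which is exactly your generating-function closed form. It equals $1$ at $u\in\{\mathbf 0,\mathbf 1\}$ and is at most $1/(n-1)$ in absolute value elsewhere: $-1/(n-1)$ at weight $2$, $O(n^{-2})$ at weight $4$. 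So the Krawtchouk values are not ``essentially constant'' on any window. The convolution kills the very coefficient you want to find, which can sit at any weight (e.g.\ $f(x)=x\cdot u$ with $|u|=n/3$).

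The density is also wrong. For the normalized indicator $T$ one has $p/\mu=\langle \hat T^3,\mathbf 1\rangle\to 2$, not $\sqrt n$; the loss in Step 1 is the factor $p$. Suppose $\tilde F$ is bounded with $\sum\widehat{\tilde F}{}^2\approx d$ and $\widehat{\tilde F}\approx (d/\mu)\hat F$. Then Step 3 gives correlation only $\delta\, dp/\mu$, so you need $d\approx \mu/p\approx 1/2$. With $d=p$, your stated requirements give roughly $4\mu\delta\to 0$. The model must have \emph{constant} density. In the paper it is the set $S$ of strings whose weight has the parity of $n/2$, with $\hat S=1$ on $\{\mathbf 0,\mathbf 1\}$ and $0$ elsewhere.

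More fundamentally, positing a bounded $g$ with $\hat g(u)\approx \hat F(u)$ pointwise is not enough. With the paper's normalization ($F=\tau^{-1}(-1)^f$ on $T$, $0$ off $T$), $\sum_u \hat F(u)^2=\tau^{-1}\to\infty$. So pointwise $o(1)$ errors do not control $\sum_u\hat F(u)^3-\sum_u\hat g(u)^3$. Proving this cubic transfer is the approximator lemma (\cref{lem:main}), and your proposal has no mechanism for it. The paper gets it in three steps:
\begin{itemize}
\item[(i)] The Reingold--Trevisan--Tulsiani--Vadhan dense model theorem with majorant $\nu=\frac12(T-S)+1$. Its pseudorandomness hypothesis reduces by Cauchy--Schwarz to $\langle(\hat T-\hat S)^4,\mathbf 1\rangle=o(1)$. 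This gives $g$ supported on $S$ with $|g|\le 2$ and $\langle F-g,h\rangle=o(1)$ for convolutions $h$ of bounded functions.
\item[(ii)] A Dodos--Kanellopoulos-type induction upgrading this to $\langle(\hat F-\hat g)^t\hat h_{t+1}\cdots\hat h_k,\mathbf 1\rangle=o(1)$. It uses $\|((F-g)\ast(F-g))^2\|_2=O(1)$, which comes from $\|(T\ast T)^2\|_2=O(1)$.
\item[(iii)] For $k=3$, the generic Cauchy--Schwarz step would cost a factor $\|F-g\|_2^8$. The paper instead proves the extra condition $\langle F-g,F\ast F\rangle=o(1)$. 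It does so by refining the regularity partition along level sets of $F\ast F$, using $(T\ast T)(x)<4$ for $n/3\le|x|\le 2n/3$.
\end{itemize}
Only then do $\sum\hat g^3\ge 2\delta-o(1)$ and $\sum\hat g^2\le 2$ give $\max_u\hat F(u)\ge\delta-o(1)$.

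Separately, your closing identity is false. For $n/2$ even, $x\cdot(u+\mathbf 1)\equiv x\cdot u$ on $T$, so complementation cannot be absorbed this way. It is not needed anyway, since $\sum\hat g^3\le\max_u\hat g(u)\sum\hat g^2$ already gives a positive coefficient.
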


\subsection{Outline} 
\newcounter{introeqs}
\setcounter{introeqs}{1}

\paragraph{Notation.} Let $h:\zo^n\rightarrow\R$ be any function. Given $\bu\in\zo^n$,  the Fourier coefficient, $\hat h(\bu)$, is defined as $\hat h(\bu)~:=~\E_{\bx\sim\zo^n}\bigl[h(\bx)\cdot(-1)^{\bx\cdot\bu}\bigr]$ . We will use the following shorthand: 
\[
	\ip{\hat h^k}{{\bf 1}}~:=~\sum_{\bu\in\zo^n}\hat h(\bu)^k. 
\]
 We write $A\lesssim B$ to indicate that $A\leq B+\eta$ for some quantity $\eta=o(1)$ which tends to $0$ as $n\rightarrow\infty$.

\paragraph{Plan.} The goal of this overview section is to introduce the two key ideas that we use (on top of the classical BLR analysis) to prove our main theorem.  They are:
\begin{enumerate}
\item some basic facts about the Fourier analysis of the half slice;
\item an ``approximator lemma'' which says that our function $f:T\rightarrow\zo$ is approximated by a global function $g:\zo^n\rightarrow\R$ defined on the entire hypercube.
\end{enumerate}

\noindent In the rest of this overview, we introduce these ideas and prove our main theorem assuming: 1) a specific bound on the Fourier coefficients of $T$ (equation (\ref{eq:introkrav}) below); and 2) our approximator lemma.  We conclude with a comparison to~\cite{KLMZ24} (which also used an approximator lemma).

\paragraph{The BLR Linearity Test.} To begin, we recall how the analysis for the $3$-query BLR test works on the full hypercube, as this forms the skeleton for the proof of Theorem~\ref{thm:main}.  Assume $f:\zo^n\rightarrow\zo$ is such that $\Pr{\bx,\by\sim\zo^n}{f(\bx+\by)=f(\bx)+f(\by)}=\frac{1+\delta}{2}$, and let $F:=(-1)^{f(\bx)}$.  Then, 
\[\delta ~=~ \E_{\bx,\by\sim\zo^n}\bigl[F(\bx)F(\by)F(\bx+\by)\bigr] ~=~ \ip{\hat F^3}{{\bf 1}}.
\] 
Since $\ip{\hat F^2}{{\bf 1}}=1$ by Parseval (as it is $\{\pm 1\}$-valued), we obtain:
\begin{equation}
 \max_{\bu} {\hat F}(\bu) ~\geq~ \frac{\ip{\hat F^3}{{\bf 1}}}{\ip{\hat F^{2}}{{\bf 1}}} ~=~  \delta \;.\end{equation}
But if $\bu\in\zo^n$ is such that $\hat F(\bu)\geq\delta$, then we have,
\[\Pr{\bx\sim\zo^n}{f(\bx)=\bx\cdot\bu} ~=~ \frac{1+\hat F(\bu)}{2} ~\geq~ \frac{1+\delta}{2}.\]  Therefore, if $f$ passes the $3$-query BLR test with probability $\frac{1+\delta}{2}$, then $f$ has at least as much agreement with some linear function.

\paragraph{A Simple Suboptimal $5$-query Linearity Test on the Half Slice} 
In order to illustrate the connection between linearity testing theorems on the slice and the Fourier analysis of $T$, we prove a simple but suboptimal $5$-query linearity testing theorem on the slice.  Let $k\geq5$ be a query parameter and $f:T\rightarrow\zo$ be a function that passes the $k$-query BLR test on $T$ with probability $\frac{1+\delta}{2}$. Define $F(\bx):=T(\bx)\cdot(-1)^{f(\bx)}$, where $T$ is the indicator function of the half slice normalized by its density, namely $T(\bx):=\tau^{-1}\boldone_{\bx\in T}$, where $\tau:= 2^{-n}\cdot\binom{n}{n/2}$. It is easy to see that,
\[
\Pr{x\sim \slice}{f(x)= {u\cdot x}} ~=~ \frac{1+\hat{F}(u)}{2}. 
\]
Therefore, finding a linear function with a large correlation is the same as finding the largest Fourier coefficient of $F$. A direct calculation yields the following useful relation that relates the test passing probability with the Fourier coefficients of $F$ and $T$: 
\begin{equation}\label{eq:ftdelta}
	\ip{\hat F^k}{{\bf 1}} ~=~ \delta\cdot\ip{\hat T^k}{{\bf 1}} .
\end{equation}
Therefore, for any $r$ such that $k \geq 2r +1$, one can obtain,
\begin{align}
\label{eq:strategy}		  \max_{\bu} {\hat F}(\bu)^{k-2r} ~\geq~ \delta\cdot \frac{\ip{\hat T^k}{{\bf 1}}}{\ip{\hat F^{2r}}{{\bf 1}}} \;.
\end{align}
If $k \geq 5$, we may pick $r=2$. Furthermore, using~\cref{eq:ftdelta}, we can approximate $\ip{\hat F^4}{{\bf 1}} \leq \ip{\hat T^4}{{\bf 1}}$.
\begin{align}
\label{eq:simple}	  \max_{\bu} {\hat F}(\bu)^{k-4} ~\geq~ \delta\cdot \frac{\ip{\hat T^k}{{\bf 1}}}{\ip{\hat F^{4}}{{\bf 1}}} ~\geq~ \delta\cdot \frac{\ip{\hat T^k}{{\bf 1}}}{\ip{\hat T^{4}}{{\bf 1}}}.
\end{align}
To handle the above expression, we observe that there is an exact expression for the Fourier coefficients of $T$ in terms of the \textit{Krawtchouk polynomials}, $\krawt_t(x)$. 

\paragraph{The Krawtchouk Polynomials and the Fourier Coefficients $\{\hat T(\bu)\}_{\bu}$.} Direct calculation shows that for $\bu\in\zo^n$, $\hat T(\bu)=\binom{n}{n/2}^{-1}\cdot\calK_{n/2}(|\bu|)$, where $|\bu|$ is the Hamming weight of $\bu$ and where $\calK_t(x)$ is the Krawtchouk polynomial: \[\calK_t(x):=\sum_{r=0}^{\min\{t,x\}}(-1)^r\binom{x}{r}\binom{n-x}{t-r}.\] The Krawtchouk polynomials form an orthogonal system with respect to the binomial distribution.  They are relevant in many contexts and are well studied.  Our focus on the middle slice means that in this work we will be mostly interested in the ``middle Krawtchouk polynomial'' $\calK_{n/2}(x)$.  Specifically, we prove in~\cref{sec:krav} (\cref{clm:krav}) that for all $k\geq3$, \begin{equation}
\sum_{x=1}^{n-1}\binom{n}{x}\cdot\bigg|\calK_{n/2}(x)\bigg/\binom{n}{n/2}\bigg|^k=o(1) \;\implies \; \ip{\hat T^k}{{\bf 1}} = 2 + o(1) \label{eq:introkrav}	
\end{equation}

As a direct corollary, we obtain a $5$-query linearity testing theorem by plugging the bound above into~\cref{eq:simple}. Thus, we have obtained a non-trivial test by using essentially nothing except some basic facts about the Fourier coefficients of $T$. This highlights the importance of understanding the Fourier analysis of the slice for trying to prove linearity testing theorems on the slice. Moreover, this proof readily generalizes to the slice over general finite fields, $\F_q^n$. Over general fields, the test is no longer the BLR test but a variant due to Kiwi~\cite{Kiwi03} (this is needed even for the full hypercube). Following a similar strategy as above, we show that Kiwi's test also generalizes to the slice over a general field (\cref{sec:general}, \cref{thm:sliceq}), but we will not discuss this in this outline.

Coming back to the Boolean setting, the drawback of this simple test is that it is not quantitatively optimal. It does not match the guarantee one gets from the $5$-query BLR analysis on the full hypercube where one concludes that some $\bu\in\zo^n$ exists with $\hat F(\bu)\geq\delta^{1/3}$. Nor does it improve on the result from~\cite{KLMZ24}, which uses one fewer query and establishes the stronger conclusion $|\hat F(\bu)|=\Omega(\sqrt{\delta})$.  Nevertheless, it demonstrates that a non-trivial linearity testing theorem on the slice can be obtained using nothing but some basic facts about the Fourier coefficients of $T$.  Indeed, the Fourier analysis of $T$ plays a critical role in our optimal linearity test on the slice. To overcome these drawbacks, we use a more involved approach.

\paragraph{Optimal Linearity Testing on the Half Slice via the Approximator Lemma} The above approach fails to work directly for $k \leq 4$ because then we are forced to pick $r=1$ in~\cref{eq:strategy}. However, for a sparse set $T$, such as the slice, 
\[\ip{\hat F^2}{{\bf 1}} ~\geq~ {\tau^{-1}} ~=~ \Omega(\sqrt{n})\] (compared to the situation on the full hypercube where we had $\ip{\hat F^2}{{\bf 1}}=1$).
This prevents the above strategy from working as is. 
To circumvent this, one can try to approximate $F$ by a bounded $g$. The overview of the approach is as follows. Let $g:\zo^n\rightarrow\R$ such that:   (1) $\ip{\hat F^k}{{\bf 1}}\lesssim\ip{\hat g^k}{{\bf 1}}$, (2) $\big|\hat F(\bu)-\hat g(\bu)\big|=o(1)$ for all $\bu\in\zo^n$, and (3) $\ip{\hat g^2}{{\bf 1}} \leq \calO(1)$. Then, one can carry out the above strategy using $g$ in the following manner:
\begin{align*}
\ip{\hat g^k}{{\bf 1}} ~\geq~  \ip{\hat F^k}{{\bf 1}} ~&=~ \delta\cdot\ip{\hat T^k}{{\bf 1}}&&[\text{Property\;} (1)]\\
	  \max_{\bu} {\hat{F}}(\bu)^{k-2} ~\approx~ \max_{\bu} {\hat g}(\bu)^{k-2} ~\geq~ \frac{\ip{\hat g^k}{{\bf 1}}}{\ip{\hat g^2}{{\bf 1}}}~&\geq~ \delta\cdot \frac{\ip{\hat T^k}{{\bf 1}}}{\ip{\hat g^2}{{\bf 1}}}&&[\text{Property\;} (2)]\\
	   ~&\geq~ \calO(\delta). &&[\text{Property\;} (3)\; \text{and \cref{eq:introkrav}}]
\end{align*}
The heart of our result is the following ``approximator lemma'' that constructs such a function $g$.

\begin{restatable}[Approximator Lemma]{lemma}{approximator}
 \label{lem:main}
 Let $n,k\in\N$ with $n$ even, $k\geq3$ be such that at least one of $k$ and $n/2$ is even. Let $F:\zo^n\rightarrow\R$ be any function such that $|F|=T$, where $T$ is the normalized indicator function of the half slice. Then there exists a function $g:\zo^n\rightarrow\R$ such that:
  \begin{enumerate}
  \item $\ip{\hat F^k}{{\bf 1}}\lesssim\ip{\hat g^k}{{\bf 1}}$;
  \item $\ip{\hat g^2}{{\bf 1}}\leq2$;
  \item $\big|\hat F(\bu)-\hat g(\bu)\big|=o(1)$ for all $\bu\in\zo^n$.
  \end{enumerate}	
\end{restatable}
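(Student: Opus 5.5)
Our approach is to build $g$ as a Fourier truncation of $F$ adapted to the structure of $\hat T$. By the Krawtchouk facts (\cref{clm:krav}), $\hat T(\bu)=\binom{n}{n/2}^{-1}\calK_{n/2}(|\bu|)$ is essentially supported on $|\bu|\in\{0,n\}$: there it equals $1$ and $(-1)^{n/2}$, and elsewhere the weighted $k$-th moments are $o(1)$. Since $|F|=T$, we write $F=T\cdot G$ with $G:T\to\{\pm1\}$. Then $\hat F=\hat T*\hat G$ in the appropriate normalization, i.e.\ $\hat F(\bu)=\E_{\bx\sim\zo^n}[T(\bx)G(\bx)(-1)^{\bx\cdot\bu}]$.

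\textbf{Construction.} Since $|\bx|=n/2$ on $T$, every $\bx\in T$ satisfies $\bx\cdot\mathbf{1}=n/2$. Hence $\hat F(\bu+\mathbf 1)=(-1)^{n/2}\hat F(\bu)$, so $\hat F$ is symmetric under $\bu\mapsto\bu+\mathbf1$, up to sign. We will define $\hat g(\bu):=\hat F(\bu)\cdot\boldone[|\hat F(\bu)|\ge\epsilon]$ for a slowly vanishing threshold $\epsilon=\epsilon(n)$, or more robustly a smoothed version of this. Property 3 is then immediate, since $|\hat F-\hat g|\le\epsilon=o(1)$.

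\textbf{Property 2.} We must bound the mass carried by the large coefficients. We will use the duality $\sum_{\bu\in S}\hat F(\bu)^2=\ip{F}{h_S}$ with $h_S=\sum_{\bu\in S}\hat F(\bu)\chi_\bu$. Applying Cauchy--Schwarz with respect to the measure $T$ gives $\le \|h_S\|_{L^2(T)}$, and $\|h_S\|_{L^2(T)}^2=\sum_{\bu,\bv\in S}\hat F(\bu)\hat F(\bv)\hat T(\bu+\bv)$. By the Krawtchouk bound, $\hat T(\bu+\bv)$ is negligible unless $\bv\in\{\bu,\bu+\mathbf1\}$, where it contributes $1$ each. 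This yields $\|h_S\|_{L^2(T)}^2\le 2\sum_S\hat F^2+(\text{error})$, hence $\sum_S\hat F^2\lesssim 2$. The error term is controlled because $|S|\le\epsilon^{-2}\ip{\hat F^2}{\mathbf1}$ is not too large, together with pointwise bounds $|\calK_{n/2}(x)|/\binom{n}{n/2}$ that are small for $1\le x\le n-1$ (decaying like $x^{-1/2}$-ish near the middle region). We expect this to be the main obstacle: the number of large coefficients can be large ($\ip{\hat F^2}{\mathbf 1}\approx\tau^{-1}$), so a crude union bound fails. Instead, we will bound the off-diagonal sum $\sum_{\bu\ne\bv}|\hat F(\bu)\hat F(\bv)\hat T(\bu+\bv)|$ via Hölder, using $\sum_\bw|\hat T(\bw)|^k=2+o(1)$ from \cref{eq:introkrav} together with $\|\hat F\|_\infty\le1$. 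The parity condition (one of $k$, $n/2$ even) should enter here and in Property 1 to guarantee that the $\pm$ signs from $\bu+\mathbf1$ combine constructively. If the hard threshold is too lossy, we will pick $\epsilon$ randomly within a range to avoid concentration at the boundary.

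\textbf{Property 1.} We have $\ip{\hat F^k}{\mathbf1}-\ip{\hat g^k}{\mathbf1}=\sum_{|\hat F|<\epsilon}\hat F^k\le\epsilon^{k-2}\ip{\hat F^2}{\mathbf1}$ in absolute value, which is too weak since $\ip{\hat F^2}{\mathbf1}\approx\tau^{-1}$. We will therefore instead expand $\ip{\hat F^k}{\mathbf1}$ as the $k$-fold additive-tuple sum $\E[F(\bx_1)\cdots F(\bx_{k-1})F(\bx_1+\dots+\bx_{k-1})]$ and split each $F=F_{\ge\epsilon}+F_{<\epsilon}$. Terms containing a small-coefficient factor are bounded by $\epsilon^{k-2}$ times a mixed quantity of the form $\sum|\hat F_{<\epsilon}|^2\cdot|\hat T\ast\cdots|$. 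Using $\hat F=\hat T\ast\hat G$ and \cref{eq:introkrav}, this mixed quantity is $O(1)$ rather than $\tau^{-1}$, giving an $o(1)$ error.
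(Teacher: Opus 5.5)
\textbf{The gap is Property~1, the heart of the lemma.}

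Let $\mathcal{L}:=\{\bu:|\hat F(\bu)|\ge\epsilon\}$ and $\hat g=\hat F\cdot\boldone_{\mathcal{L}}$. Then $g$ and $F-g$ have disjoint Fourier supports. So when you expand $\E[F(\bx_1)\cdots F(\bx_{k-1})F(\bx_1+\cdots+\bx_{k-1})]$, every cross term vanishes identically. The error is exactly $\sum_{\bu\notin\mathcal{L}}\hat F(\bu)^k$, and there is no ``mixed quantity'' to exploit.

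Moreover $\sum_{\bu\notin\mathcal{L}}\hat F(\bu)^2=\tau^{-1}-\ip{\hat g^2}{{\bf 1}}\ge\tau^{-1}-2-o(1)=\Theta(\sqrt n)$, already for $F=T$. So your claim that this quantity becomes $O(1)$ is false. For even $k$ no cancellation is possible, and for odd $k$ you give no source of it. The bound you actually have is $\epsilon^{k-2}\tau^{-1}$, which forces $\epsilon^{k-2}\sqrt n=o(1)$. A slowly vanishing threshold is therefore the wrong regime.

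For $k\ge4$ your outline can in fact be repaired along these lines:
\begin{itemize}
\item Take $\epsilon=n^{-1/3}$, so the Property~1 error is $\epsilon^{k-2}\tau^{-1}=O(n^{-1/6})$.
\item Make your Property~2 duality argument quantitative. Bound the off-diagonal Gram term by Young's inequality: $|E|\le\|a\|_p^2\,\|\hat T'\|_r$, with $a=|\hat F|\boldone_{\mathcal{L}}$, $p=\frac{2r}{2r-1}$, and $\hat T'=\hat T\cdot\boldone_{\bu\notin\{{\bf 0},{\bf 1}\}}$.
\item Use $\|a\|_p^2\le|\mathcal{L}|^{1-1/r}M$ and the self-bounding count $|\mathcal{L}|\le M\epsilon^{-2}$. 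The count $\epsilon^{-2}\tau^{-1}$ you suggest is too lossy at $k=4$.
\item Use $\|\hat T'\|_r^r=O(n^{2-r})$ for $r\ge4$, from the proof of \cref{clm:krav}.
\end{itemize}
This gives $M^2\le 2M+O\bigl(M^{2-1/r}n^{(4-r)/(3r)}\bigr)$, hence $M\le 2+o(1)$ for $r\ge5$; then rescale $g$ by $1-o(1)$. That would be a genuinely more elementary route than the paper's for $k\ge4$.

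\textbf{For $k=3$ the two requirements are incompatible.} This is the main case (\cref{thm:main}). Property~1, via the only bound you have, needs $\epsilon=o(n^{-1/2})$. The off-diagonal estimate for Property~2 needs $\epsilon\ge n^{-1/2+\eta}$ for some fixed $\eta>0$; at $\epsilon\approx n^{-1/2}$ the Young bound grows like $n^{1/r}$. For the truncation, the signed tail $\sum_{\bu\notin\mathcal{L}}\hat F(\bu)^3$ is exactly $\ip{F-g}{F\ast F}$, and nothing in your proposal controls it.

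This is precisely what the paper engineers differently, without truncating in Fourier space:
\begin{itemize}
\item It applies the dense model theorem (\cref{thm:rttv}) with a partition refined by the level sets of $F\ast F$, using $T\ast T<4$ on the middle weights (\cref{clm:ttbound}).
\item This yields a \emph{bounded} $g$ supported on $S$, so Property~2 is just $\|g\|_2^2\le 4\cdot\frac12$.
\item The same $g$ satisfies $\ip{F-g}{h}=o(1)$ for all $h\in\calH$, and additionally $\ip{F-g}{F\ast F}=o(1)$.
\item It then boosts this by the Dodos--Kanellopoulos-type Cauchy--Schwarz argument (\cref{clm:dodos}) to $\ip{(\hat F-\hat g)^t\hat h_{t+1}\cdots\hat h_k}{{\bf 1}}=o(1)$, and Property~1 follows by telescoping.
\end{itemize}

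\textbf{Minor points.} The pairing $\bu\leftrightarrow\bu+{\bf 1}$ is constructive regardless of parity, since $\hat F(\bu)\hat F(\bu+{\bf 1})\hat T({\bf 1})=\hat F(\bu)^2$ always. Also, off $\{{\bf 0},{\bf 1}\}$ one has $|\hat T|=O(1/n)$, exponentially small in the middle weights, not $x^{-1/2}$-like.
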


As a direct consequence of this approximator, we obtain an optimal $k$-query test over the slice for any $k \geq 3$.

\paragraph{\cref{thm:main} (Restated for General $k$).}\emph{Let $n,k\in\N$ with $n$ even, $k\geq3$ be such that at least one of $k$ and $n/2$ is even. Let $f:T\rightarrow\zo$ be a function which passes the $k$-query BLR test on the slice with probability $\frac{1+\delta}{2}$, for $\delta>0$.  Then there exists a $ u \in  \{0,1\}^n$ and a $b \in \zo$, such that
 	\[ \Pr{\vec x \sim \slice}{f(x) ~=~ u \cdot x  + b} ~\geq~ \frac{1+{\delta}^{\frac{1}{k-2}}}{2} - o(1)\;.\]
}

\paragraph{Remarks.} Some remarks are in order.
\begin{enumerate}
\item The analysis for the $k$-query BLR test on the hypercube (\ie the generalization of the argument presented at the beginning of the section) promises agreement of $\frac{1+\delta^{\frac{1}{k-2}}}{2}$ with an \textit{affine function} rather than a linear function, \ie a function of the form $u\cdot x + b$ for some  $(\bu,b)\in\zo^n\times\zo$. Thus, our main theorem gives a linearity testing theorem on the half slice which matches, up to the $o(1)$ term, the corresponding theorem on the full hypercube, for any $k\geq3$.
  \item The requirement that at least one of $\frac{n}{2}$ and $k$ is even is necessary for the $k$-query BLR test to be defined on the half slice.  Indeed, the identity $|\bx+\by|=|\bx|+|\by|-2|\bx\wedge\by|$ implies that when $k$ is odd, $|\bx_1+\cdots+\bx_{k-1}|$ is even for all $\bx_1,\dots,\bx_{k-1}\in\zo^n$ of the same Hamming weight.  Thus, when $k$ is odd the only way that $\bx_1,\dots,\bx_{k-1},\bx_1+\cdots+\bx_{k-1}$ can all be in the half slice is if $\frac{n}{2}$ is even.
\end{enumerate}

\paragraph{Comparison with~\cite{KLMZ24}.} Kalai \emph{et al.}~\cite{KLMZ24} also used an approximator lemma based on a ``dense model theorem'' for the slice.  Specifically, they identified a subset $S\subseteq\zo^n$ of constant density which approximates $T$ in a certain technical sense, and then they invoked a general theorem of Dodos and Kanellopoulos~\cite{DK16} which converts such an $S$ into an approximator lemma.  Our discussion above about the Fourier coefficients of the half slice indicate that $\hat T(\bu)=1$ if $\bu={\bf 0},{\bf 1}$, and $\hat T(\bu)=o(1)$ when $\bu\notin\{{\bf 0},{\bf 1}\}$.  Thus, the function $S:\zo^n\rightarrow\R$ whose Fourier coefficients are \[\hat S(\bu)=\left\{\begin{array}{cl}1, & \bu={\bf 0},{\bf 1}\\ 0, & \bu\notin\{{\bf 0},{\bf 1}\}\end{array}\right.\] makes a natural choice for an approximation of $T$.  This function $S$ turns out to be the weighted indicator function of the set $S:=\big\{\bx\in\zo^n:|\bx|\equiv \frac{n}{2}\text{ mod }2\big\}$ of strings whose Hamming weight has the same parity of $n/2$, which has density $1/2$ in $\zo^n$.  It turns out, however, that obtaining an approximator lemma using the theorem from~\cite{DK16} requires starting with a function which approximates $T$ in a higher Gowers uniformity norm, which the above $S$ does not satisfy.  For this reason,~\cite{KLMZ24} uses instead the normalized indicator function of the set $\big\{\bx\in\zo^n:|\bx|\equiv \frac{n}{2}\text{ mod }2^{k-1}\big\}$ which has density roughly $2^{-(k-1)}$ in $\zo^n$.  This leads to extra complications (which~\cite{KLMZ24} handles with a beautiful argument) and is the source of both suboptimal aspects of their result.

In this work, we specialize the techniques from~\cite{DK16} directly to the half slice to establish our improved approximator lemma, which yields an optimal linearity testing theorem on the half slice.  We are able to show that the above set $S$ of strings whose weight has the same parity as $n/2$ approximates $T$ well enough to establish our approximator lemma.  This requires us to go beyond the argument in~\cite{DK16} and control terms which are bounded in the special case of the half slice, but which are not bounded in general.  Our argument makes heavy use of the Fourier analysis of $T$.  See~\cref{sec:krav} for a more detailed overview of this part of our result.

\subsection{Relevant Prior Work}
\paragraph{Linearity testing beyond Boolean hypercube}
Linearity testing has been studied in many domains apart from the Boolean hypercube. As mentioned,~\cite{DDGKS17,KLMZ24} studied this question over the Boolean slice. Researchers have also considered other groups, going beyond the traditional hypercube setting.  In this direction,  Kiwi~\cite{Kiwi03} gave a linearity tester for functions, \(f: \F^n_q \to \F_q\), over vector spaces. Interestingly, the classic BLR test is not known to be effective in this setting. Recently,~\cite{MR26} gave a general framework for studying the linearity testing problem over finite groups and gave various new testing results over cyclic groups,  finite simple groups, etc.

\paragraph{Boolean function analysis on the slice} In recent years, there has been an increased interest in extending various results from Boolean function analysis to the slice setting. For instance, Filmus and Ihringer~\cite{FI19} showed that constant-degree functions over the slice are Juntas, generalizing a result by Nisan and Szegedy~\cite{NS94}. The famous \textit{invariance principle} has also been ported to the slice setting by Filmus, Kindler, Mossel, and Wimmer~\cite{FKMW18}. To achieve this result, they used certain explicit orthogonal basis functions constructed by Filmus~\cite{Filmus16} in a previous work. In a separate research, \cite{Flimus16FKN} also proved a version of the prominent Friedgut--Kalai--Naor (FKN) theorem over the slice.

%% file: kravchuk.tex
\section{Bounds for Binary Krawtchouk Polynomials}
\label{sec:krav}
In this section, we will present a self-contained proof of~\cref{eq:introkrav}. We prove a bound over the general $q$-ary hypercube~(\cref{claim:qkraw_mean}) later, but to make the Boolean case self-contained, we give its proof here. This will suffice to give our simple $5$-query test.

To start off, we have that the following expression of the Fourier coefficients of the half-slice, $\ip{\hat{T}^k}{{\bf 1}}$, can be computed using the Krawtchouk polynomials as the slice is invariant under permutations (and therefore its indicator is a symmetric function). The following is the exact expression, which is a standard fact:
\begin{equation}\label{eqn:krawt_expression}
\hat T(\bu)~=~\binom{n}{n/2}^{-1}\cdot\krawt_{n/2}(|\bu|);\quad \text{where} \quad\krawt_t(x)~:=~\sum_{r=0}^{\min\{t,x\}}(-1)^r\binom{x}{r}\binom{n-x}{t-r}.
\end{equation}
 
The proof of \cref{eqn:krawt_expression} relies on the following straightforward bounds on these \textit{Krawtchouk polynomials}, $\krawt_{n/2}(x)$. 

\begin{claim}[Krawtchouk Bounds for Boolean Slice]
	\label{claim:boolkraw_mean} The following bounds hold:
	\begin{enumerate}
		\item For any odd weight $u$, $\hat{T}(u)= 0$.
		\item For $|u|=2, n-2$, the Fourier coefficient satisfies, $\;|\hat{T}(u) | ~\leq~ \calO\parens[\big]{\frac{1}{n}}$.
		\item For any $u$, the Fourier coefficients satisfy, $\; |\hat{T}(u) |^2 ~\leq~ \calO\parens{\sqrt{n}}  \cdot \binom{n}{|u|}^{-1}$ .
	\end{enumerate}	
\end{claim}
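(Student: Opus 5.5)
We treat the three parts separately. Each uses the formula $\hat T(u)=\binom{n}{n/2}^{-1}\krawt_{n/2}(|u|)$ from \cref{eqn:krawt_expression}, together with the interpretation $\hat T(u)=\E_{x\sim T}[(-1)^{x\cdot u}]$.

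\textbf{Part 1 (odd weight).} We use the symmetry $x\mapsto \bar x := x+{\bf 1}$, which maps $T$ bijectively to itself. It satisfies $\bar x\cdot u = x\cdot u + |u| \pmod 2$. Hence if $|u|$ is odd, the pairing $x\leftrightarrow\bar x$ gives $(-1)^{\bar x\cdot u}=-(-1)^{x\cdot u}$. The expectation $\E_{x\sim T}[(-1)^{x\cdot u}]$ therefore vanishes.

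\textbf{Part 2 ($|u|=2$ or $n-2$).} For $|u|=2$, say $u=e_i+e_j$, we have $(-1)^{x\cdot u}=1$ exactly when $x_i=x_j$. Under the uniform distribution on $T$,
\[
\Pr{}{x_i=x_j}=2\cdot\frac{(n/2)(n/2-1)}{n(n-1)}=\frac{n/2-1}{n-1},
\]
so $\hat T(u)=2\cdot\frac{n/2-1}{n-1}-1=-\frac{1}{n-1}=\calO(1/n)$. For $|u|=n-2$ we write $u={\bf 1}+u'$ with $|u'|=2$. Since every $x\in T$ has $x\cdot{\bf 1}=n/2$, we get $\hat T(u)=(-1)^{n/2}\hat T(u')$, which has the same magnitude.

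\textbf{Part 3 (general bound).} We use orthogonality of the Krawtchouk polynomials, or equivalently Parseval applied to the normalized indicator $T$. Parseval gives
\[
\sum_{u}\hat T(u)^2=\E_{x\sim\zo^n}[T(x)^2]=\tau^{-1}=2^n\binom{n}{n/2}^{-1}.
\]
Because $\hat T(u)$ depends only on $|u|$, restricting the sum to a single weight level $w=|u|$ yields
\[
\binom{n}{w}\hat T(u)^2\le \tau^{-1}.
\]
It remains to bound $\tau^{-1}$, and Stirling gives $\binom{n}{n/2}\ge 2^n/\sqrt{2n}$. Thus $\tau^{-1}\le\sqrt{2n}=\calO(\sqrt n)$, so $|\hat T(u)|^2\le\calO(\sqrt n)\binom{n}{|u|}^{-1}$. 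The only nonroutine point is noticing that Parseval at a single level suffices. The estimate is loose for middle levels but is exactly what is stated.
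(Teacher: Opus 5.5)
Your proof is correct and follows the paper's plan. Parts 1 and 2 are direct computations, and Part 3 is Parseval restricted to a single weight level (using nonnegativity of the summands) together with $\binom{n}{n/2}\ge 2^n/\sqrt{2n}$.

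The one difference is that you compute with $\hat T(u)=\E_{x\sim T}[(-1)^{x\cdot u}]$ rather than manipulating the sum defining $\krawt_{n/2}$. Your complement involution $x\mapsto x+{\bf 1}$ is the combinatorial form of the identity $\krawt_{n-t}(x)=(-1)^x\krawt_t(x)$ at $t=n/2$, which is what actually forces the odd levels to vanish. Your reflection $\hat T({\bf 1}+u')=(-1)^{n/2}\hat T(u')$ also carries the correct sign. The paper's stated $\krawt_{n/2}(n-i)=\krawt_{n/2}(i)$ fails when $n/2$ is odd: for $n=6$ one gets $\krawt_3(4)=4=-\krawt_3(2)$. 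Finally, your exact value $\hat T(u)=-\frac{1}{n-1}$ at $|u|=2$ matches the paper's $-\frac{4}{n}\binom{n-2}{n/2-1}\binom{n}{n/2}^{-1}$.
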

\begin{proof}
 Observe that, $\krawt_{n/2}(n-i) = \krawt_{n/2}(i)$ for any $i$ as the terms are the same. But from the expression, $\krawt_{n/2}(n-i) = (-1)^i\krawt_{n/2}(i)$. This implies that $\krawt_{n/2}(i) = 0$ for odd $i$. 
The second claim is also a direct calculation.
\begin{align*}
	\krawt_{n/2}(2) ~&=~    \binom{n-2}{n/2} (-1)^{n/2} + 2\binom{n-2}{n/2-1} (-1)^{n/2-1} + \binom{n-2}{n/2-2} (-1)^{n/2-2} \\[0.6em]
	~&=~    \binom{n-2}{n/2-1} \cdot(-1)^{n/2-1}\cdot\parens[\Bigg]{ \frac{n/2-1}{n/2} (-1) + 2 + \frac{n/2-1}{n/2} (-1) } \\[0.6em]
	~&=~   - \binom{n-2}{n/2-1} \cdot \frac{4}{n}\\[0.6em]
	\implies\;  |\hat{T}(u)| ~&=~ |\krawt_{n/2}(2)|\cdot {n \choose n/2}^{-1} ~\leq~ \calO \parens[\Big]{\frac{1}{n}} \; .
\end{align*}
The last claim follows from Plancharel~(\cref{def:fourier}) when applied to the function, $T$, which implies that,
\[ \ip{\hat{T}^2}{{\bf 1}} ~=~ \norm{T}_2^2 ~=~ \tau^{-1} ~=~ \calO\parens{\sqrt{n}}.\]
Since $\hat{T}(u)$ only depends on $|u|$, we can rewrite the Fourier expression by grouping together terms of the same weight and denoting by $\hat{T}(x)$, the coefficient for any vector $u$ such that $|u| =x$. Thus, 
\[\ip{\hat{T}^2}{{\bf 1}} ~=~\sum_{u \in \zo^n} \hat{T}(u)^2 ~=~  \sum_{x=0}^n {n \choose x} \cdot \hat{T}(x)^2  ~=~ \tau^{-1} ~=~ \calO\parens{\sqrt{n}}  \;. \]
	
	Since the terms are non-negative for each $x$, the above inequality holds for every summand.
\end{proof}

\begin{lemma}[Fourier Bound]\label{clm:krav}
	Let $n \geq 10^8$ and $k\geq 3$ be integers. Then we have,
    \[ \ip{\hat{T}^k}{{\bf 1}} ~=~ 2 + O(n^{-\frac{1}{2}})  .\] Or equivalently, for any $k\geq3$, $\ip{(\hat T-\hat S)^k}{{\bf 1}}~=~o(1)$.
\end{lemma}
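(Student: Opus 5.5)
We need to bound $\sum_{x=1}^{n-1}\binom{n}{x}|\hat T(x)|^k$ by $O(n^{-1/2})$, where $\hat T(x)$ denotes the common value of $\hat T(\bu)$ for $|\bu|=x$. The two weights $x=0$ and $x=n$ contribute exactly $1+1=2$. Indeed, $\hat T({\bf 0})=1$, and $\hat T({\bf 1})=(-1)^{n/2}$; when $k$ is odd the paper assumes $n/2$ even, so that the $k$-th power is $+1$.

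Since $|\hat T|\le 1$ everywhere, it suffices to prove the bound for $k=3$, because $|\hat T|^k\le|\hat T|^3$. By \cref{claim:boolkraw_mean}(1), only even $x$ contribute, and by the symmetry $\krawt_{n/2}(n-x)=\pm\krawt_{n/2}(x)$ it is enough to treat $2\le x\le n/2$.

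I will split the range of $x$ into three regimes.

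\textbf{Regime 1: $x=2$.} By part (2) of the claim, this term is at most $\binom n2\cdot O(n^{-3})=O(1/n)$.

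\textbf{Regime 2: large weights, $x\ge x_0$ for a constant $x_0$.} Use part (3) of the claim, which gives $|\hat T(x)|\le O(n^{1/4})\binom nx^{-1/2}$. Then
\[
\binom nx|\hat T(x)|^3\le O(n^{3/4})\binom nx^{-1/2}.
\]
Summing over $x_0\le x\le n/2$, the sum is dominated by its first term, which is $O(n^{3/4}\cdot n^{-x_0/2})$. Taking $x_0=4$ gives $O(n^{-5/4})$. So the only remaining terms are $x=4$ and possibly a few other small $x$.

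\textbf{Regime 3: small constant weights.} This is the main obstacle, since the Plancherel bound is too weak here. For $x=4$, part (3) gives only $O(n^{3/4}\cdot n^{-2})\cdot n^{0}$ after multiplying by $\binom n4\approx n^4$, i.e.\ $\binom n4\cdot O(n^{3/4})\binom n4^{-3/2}=O(n^{3/4-2})$. That is actually fine, so in fact only $x=2$ needs the explicit computation, and $x_0=4$ already works.

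To double-check the general bound in Regime 2: the term is $n^{3/4}\binom nx^{-1/2}$, and at $x=4$ we have $\binom n4^{-1/2}\approx n^{-2}$, giving $n^{-5/4}$. The terms decrease geometrically until $x\approx n/2$, where they are exponentially small. Hence the total over $4\le x\le n-4$ is $O(n^{-5/4})$; the remaining weights $x=n-2$ and $n-1$ are handled by symmetry and by the fact that odd weights vanish.

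Altogether, $\ip{\hat T^k}{{\bf 1}}=2+O(1/n)$, which is comfortably within $2+O(n^{-1/2})$.

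Finally, the equivalent formulation follows immediately. Since $\hat S$ equals $\hat T$ at ${\bf 0}$ and ${\bf 1}$ (using the parity conventions above) and $\hat S$ vanishes elsewhere,
\[
\ip{(\hat T-\hat S)^k}{{\bf 1}}=\sum_{x\ne 0,n}\binom nx\hat T(x)^k,
\]
which is exactly the tail sum just bounded. The explicit constant condition $n\ge 10^8$ would come from tracking the hidden constants in the geometric-series and Stirling ($\tau^{-1}\le\sqrt{n}$) estimates; I would not optimize these.
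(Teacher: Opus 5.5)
Your proof is correct and follows the paper's argument. Odd levels vanish, the level-$2$ (and $n-2$) term is computed explicitly, and every level $4\le x\le n-4$ is controlled by the Plancherel bound $\binom{n}{x}\hat T(x)^2=O(\sqrt n)$.

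Your differences from the paper are minor refinements of the same route:
\begin{itemize}
\item the reduction to $k=3$ via $|\hat T|\le 1$;
\item the sharper exponent $n^{1/4}$ in place of the paper's $n^{1/2}$, which gives $2+O(1/n)$;
\item the explicit remark that $\hat T({\bf 1})^k=(-1)^{kn/2}$ needs the parity assumption, which the paper leaves implicit.
\end{itemize}

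One small correction: the terms $\binom{n}{x}^{-1/2}$ are not geometrically decreasing all the way to $x=n/2$, since the ratio tends to $1$ there. The conclusion still holds, because bounding each term by $\binom{n}{6}^{-1/2}=O(n^{-3})$ for $6\le x\le n/2$ already gives an $O(n^{-2})$ tail.
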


\begin{proof}[Proof]  
	By~\cref{eqn:krawt_expression} we have that, 
\[
    \ip{\hat{T}^k}{{\bf 1}}  ~=~   \sum^{n}_{x=0} \binom{n}{x} \cdot \parens[\bigg]{\frac{\krawt_{n/2}(x)}{\binom{n}{n/2}}}^{k} ~=:~ \sum_{x=0}^n A_x .
\]
We will now use~\cref{claim:boolkraw_mean} to bound the terms. We assume here that $k\geq 3$.
\begin{align*}
    A_0 ~&=~ A_n ~=~  1,\\
     A_1 ~&=~ A_{n-1} ~=~ 0,\\
    A_2 ~&=~ A_{n-2} ~=~ O(n^{2-k}),\\
         A_3 ~&=~ A_{n-3} ~=~ 0,\\
    A_x ~&\leq~ n^{\frac{k}{2}} \cdot \binom{n}{x}^{1-\frac{k}{2}} \leq~ O\parens[\Big]{n^{\frac{k}{2}} \cdot n^{(1-\frac{k}{2})\min(x, n-x)}} , \qquad \forall\, x \in [4,n-4] ,\\
\implies \sum_{x=4}^{n-4}   A_x   ~&\leq~   O(n^{4-\frac{3k}{2}}) .
\end{align*}
Thus,  
\[\ip{\hat{T}^k}{{\bf 1}} ~=~ A_0+ A_n + \sum_{x=3}^{n-3}   A_x  ~=~ 2+ O(n^{4-\frac{3k}{2}}). \qedhere
\]
\end{proof}

%% file: approximator.tex
\section{Proving the Approximator Lemma}
\label{sec:proof}
In this section, we prove our main technical result $-$ the approximator lemma.

\approximator*

  \subsection{Overview of the Proof of Lemma~\ref{lem:main}}
  \label{sec:approxoverview}
  The proof of Lemma~\ref{lem:main} involves two main ingredients, encapsulated in Claims~\ref{clm:rttv} and~\ref{clm:dodos}, below.  Throughout this section, we let $S\subset\zo^n$ denote the set of strings whose Hamming weight has the same parity as $n/2$; we let $\calH$ be the set of functions $h:\zo^n\rightarrow\R$ which are $(k-1)$-fold convolutions of bounded functions, $h=h_2\ast\cdots\ast h_k$ where $h_2,\dots,h_k:\zo^n\rightarrow[-1,1]$; and we write $o(1)$ for quantities which tend to $0$ as $n\rightarrow\infty$ with $k$ fixed.

  The first main ingredient in the proof of~\cref{lem:main} establishes that $F$ can be approximated in a weaker sense than what is needed for~\cref{lem:main}.  This claim follows from the main theorem of~\cite{RTTV08}, we give the derivation in~\cref{sec:rttv}.
  
  \begin{claim}[{\bf Weak Approximator via Regularity}]
    \label{clm:rttv}
    Let $n,k\in\N$ be such that $n$ is even, $k\geq3$, and so that at least one of $n/2$ and $k$ is even.  Let $F:\zo^n\rightarrow\R$ be such that $|F|=T$.  Then there exists $g:\zo^n\rightarrow[-2,2]$ which is supported on $S$ (\emph{i.e.}, $g(\bx)=0$ when $\bx\notin S$) such that $\ip{F-g}{h}=o(1)$ for all $h\in\calH$.  Moreover, if $k=3$ then $g$ additionally satisfies $\ip{F-g}{F\ast F}=o(1)$.
  \end{claim}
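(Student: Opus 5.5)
We derive the claim from the dense model / regularity theorem of Reingold, Trevisan, Tulsiani and Vadhan. We apply it with base measure the uniform distribution on $S$ and with a family $\mathcal F$ of bounded test functions. In a general form convenient here, the theorem says the following. Let $\mu$ be the normalized indicator of $S$, so that $\mu=2\cdot\boldone_S$ has density $1/2$. Let $\nu$ be a nonnegative function with $\nu\le C\mu$ up to a small distinguishing error against $\mathcal F$. Then there is $g$ with $0\le g\le C\mu$ and $|\ip{\nu-g}{h}|\le\epsilon$ for all $h\in\mathcal F$.

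Since $F$ is signed, we write $F=T_+-T_-$, where $T_\pm=T\cdot\boldone_{F=\pm T}$. We apply the theorem to the two pieces $T_+$ and $T_-$ separately, each with $C=1$, so each resulting model satisfies $0\le g_\pm\le \boldone_S\cdot 2$. Setting $g=g_+-g_-$ gives a function in $[-2,2]$ supported on $S$, as required.

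The hypothesis of RTTV is a pseudorandomness condition. For every $h$ in the closure of $\mathcal F$ under sums of a few functions and thresholds (the complexity class RTTV requires), we need
\[
\ip{T-S}{h}=o(1),
\]
where $S=2\cdot\boldone_S$ is normalized. This reduces to showing that $T$ is indistinguishable from $S$ by $\mathcal H$ and by the $k=3$ extra test $F\ast F$. For $h=h_2\ast\cdots\ast h_k\in\calH$, Parseval and the fact that $\widehat{h_2\ast\cdots\ast h_k}=\prod_i \hat h_i$ give
\[
|\ip{T-S}{h}|=\Big|\sum_u(\hat T-\hat S)(u)\prod_{i=2}^k\hat h_i(u)\Big|.
\]
By H\"older with exponents $k$ (on $\hat T-\hat S$) and $k$ on each $\hat h_i$, this is at most
\[
\Big(\sum_u|\hat T-\hat S|^k\Big)^{1/k}\prod_i\Big(\sum_u|\hat h_i|^k\Big)^{1/k}.
\]
Each factor $\sum_u|\hat h_i|^k\le\sum_u\hat h_i^2\le1$. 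The first factor is $o(1)$ by \cref{clm:krav}, applied with absolute values. The absolute-value version follows from the proof of \cref{clm:krav}, since the bounds on $A_x$ there are bounds on absolute values. The case of odd $k$ needs no separate treatment, because that proof bounds $|A_x|$ directly.

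There are three further technical points. First, RTTV needs the test class closed under the operations in its boosting argument, which produces $g$ as a function of boundedly many $h$'s. I will therefore use the ``bounded-complexity closure'' form of the theorem, in which only distinguishing by $\mathcal F$ itself is required for density/model approximations. Equivalently, I use the min-max / Frieze–Kannan style energy-increment form. That form gives $g=\Pi_{[0,2\boldone_S]}(\sum_j c_jh_j)$ with $O(\epsilon^{-2})$ terms and needs only $\ip{T_\pm}{h}\le\ip{S}{h}$-type domination. Second, to check domination, I will use that for $h\ge0$ in the class, $\ip{T_\pm}{h}\le\ip{T}{h}\approx\ip{S}{h}$. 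Third, $\calH$ is closed under negation, so the same bound handles $\mathcal F\cup-\mathcal F$.

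For $k=3$, add the single test function $F\ast F$. It is not bounded, so the main obstacle is controlling $\ip{T-S}{F\ast F}$ and $\ip{T_\pm-g_\pm}{F\ast F}$, using that the energy-increment iteration is allowed to include this one unbounded test. The first term equals $\sum_u(\hat T-\hat S)(u)\hat F(u)^2$. To bound it, I use $|\hat F(u)|\le\hat T(0)=1$ and $\sum_u\hat F(u)^2=\tau^{-1}$. I split $u$ by weight: for $|u|\in\{0,n\}$ the difference $\hat T-\hat S$ vanishes, and for the remaining $u$ I use $|\hat T(u)|\le O(1/n)$ at weights $2$ and $n-2$ together with the binomial bound of \cref{claim:boolkraw_mean}. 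This gives $\max_{u\notin\{0,1\}}|\hat T(u)|\cdot\tau^{-1}=O(n^{-1})\cdot O(\sqrt n)=o(1)$. For the remaining weights $4\le|u|\le n-4$, the bound $|\hat T(u)|^2\le O(\sqrt n)\binom n4^{-1}$ is even smaller, and $\hat S$ vanishes off $\{0,1\}$. The iteration bound must also tolerate the norm of $F\ast F$; since $\|F\ast F\|_\infty\le\|T\ast T\|_\infty=O(1)$ up to constants, the energy increment still terminates, and this yields the claimed $\ip{F-g}{F\ast F}=o(1)$.
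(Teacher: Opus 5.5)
\paragraph{Gap 1: the dense model hypothesis is only checked for single test functions.}
No version of the dense model theorem asks only for indistinguishability against $\calH$ itself. Your ``bounded-complexity closure form'' in which ``only distinguishing by $\mathcal F$ itself is required'' is not an available theorem. The min-max (or energy-increment) step produces a distinguisher that is a \emph{nonlinear} function of elements of $\calH$. Examples are $(h^*)_+$ or a threshold of $h^*$ with $h^*\in\mathrm{conv}(\calH)$, or, in your formulation, the truncation $\Pi_{[0,2\boldone_S]}(\sum_j c_jh_j)$. Your domination check $\ip{T_\pm}{h}\le\ip{T}{h}\approx\ip{S}{h}$ is therefore needed for such functions, not for $h\in\calH$. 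This is exactly why \cref{thm:rttv} requires $\ip{\nu-1}{h_1\cdots h_r}=o(1)$ for products of arbitrarily many $h_i\in\calH$.

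Your H\"older estimate only handles a single $H=h_2\ast\cdots\ast h_k$. It does not extend to products, because $\widehat{H_1H_2}=\hat H_1\ast\hat H_2$ is a convolution, not a pointwise product. The paper fills this with \cref{fact:cs1}: writing $H_i=h_i\ast h_i'$, \cref{clm:firstcs} plus Cauchy--Schwarz gives $\ip{\varphi}{H_1\cdots H_r}^4\le\ip{\hat\varphi^4}{{\bf 1}}$ uniformly in $r$, applied with $\varphi=\frac12(T-S)$.

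The $\ell^\infty$--$\ell^1$ pairing from your own $k=3$ paragraph would also close this. For $k\ge3$ one has $\sum_u|\hat H(u)|\le\|h_2\|_2\|h_3\|_2\le1$, and the Fourier $\ell^1$ norm is submultiplicative under pointwise products. Hence $|\ip{T-S}{H_1\cdots H_r}|\le\max_{u\notin\{{\bf 0},{\bf 1}\}}|\hat T(u)|=O(1/n)$. Some such argument has to be made, and your proposal does not make it.

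Relatedly, your relative dense model theorem with reference $2\boldone_S$ (``$\nu\le C\mu$ up to a small distinguishing error'') is not precisely stated. The paper avoids it: it applies the standard theorem with $\nu=\frac12(T-S)+1$, obtains $w\in[-1,1]$, and sets $g=S\cdot w$, using $S=1+\chi_{{\bf 1}}$ and $\chi_{{\bf 1}}h\in\calH$.

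\paragraph{Gap 2: the $k=3$ argument rests on a false boundedness claim.}
The statement $\|F\ast F\|_\infty\le\|T\ast T\|_\infty=O(1)$ is false. In fact $(F\ast F)({\bf 0})=(T\ast T)({\bf 0})=\E[T^2]=\tau^{-1}=\Theta(\sqrt n)$. \cref{clm:ttbound} only gives $T\ast T<4$ on the band $\frac n3\le|\bx|\le\frac{2n}3$.

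So $F\ast F$ cannot simply be added to the boosting as a bounded test. Adding it would also require pseudorandomness of $T$ against products involving $F\ast F$, where $\|\hat F^2\|_1=\tau^{-1}$ defeats the $\ell^1$ argument. Moreover, your (correct) bound on $\ip{T-S}{F\ast F}$ is not the quantity the claim needs. The claim needs $\ip{T_\pm-g_\pm}{F\ast F}=o(1)$, which you never establish.

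The paper keeps $F\ast F$ out of the test class entirely:
\begin{itemize}
\item It refines the partition $\sfP^+\cup\sfP^-\cup\overline T$ by the level sets of a $\zeta$-discretization $\varphi$ of $F\ast F$, truncated at $\pm4$.
\item The ``furthermore'' clause of \cref{cor:rttv} then gives $\ip{\Psi-w}{\varphi}=o(1)$ for free.
\item It bounds $\ip{\Psi-w}{F\ast F-\varphi}$ by $2\zeta$ plus a remainder. The remainder is controlled via $|\Psi-w|\le\frac12(T-S)+2$, $T\ast T\le\tau^{-1}$, and Chernoff on the exponentially rare set where $|F\ast F|>4$.
\end{itemize}
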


  \noindent The function $g:\zo^n\rightarrow[-2,2]$ guaranteed by Claim~\ref{clm:rttv} already satisfies two of the properties needed for~\cref{lem:main}; property (1), that $\ip{\hat F^k}{{\bf 1}}\leq\ip{\hat g^k}{{\bf 1}}+o(1)$, is the only one which is still outstanding.  Additionally, Claim~\ref{clm:rttv} promises that for all $h=h_2\ast\cdots\ast h_k\in\calH$, \[\ip{(\hat F-\hat g)\hat h_2\cdots\hat h_k}{{\bf 1}}=\ip{F-g}{h}=o(1).\]  This should be thought of as guaranteeing that $g$ approximates $F$ in a weaker sense than what is needed for Lemma~\ref{lem:main}.  The second main ingredient promises that $g$ approximates $F$ in a stronger sense, which turns out to be enough for Lemma~\ref{lem:main}.

  \begin{claim}[{\bf Boosting the Approximator}]
    \label{clm:dodos}
  For all $t=1,\dots,k$, and $h_{t+1},\dots,h_k:\zo^n\rightarrow[-1,1]$, 
  \[\big|\ip{(\hat F-\hat g)^t\hat h_{t+1}\cdots\hat h_k}{{\bf 1}}\big| ~=~o(1).\]
  \end{claim}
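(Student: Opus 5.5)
We prove the claim by induction on $t$, following the Dodos--Kanellopoulos strategy. The idea is to repeatedly replace one factor $(\hat F-\hat g)$ by a Fourier transform of a bounded function, at the cost of an error that we control through the Fourier analysis of $T$.

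\textbf{Base case.} For $t=1$ the statement is exactly the conclusion of Claim~\ref{clm:rttv}, since
\[\ip{(\hat F-\hat g)\hat h_2\cdots\hat h_k}{{\bf 1}}=\ip{F-g}{h_2\ast\cdots\ast h_k}.\]

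\textbf{Inductive step.} Assume the bound holds for $t-1$ and all bounded $h$'s. Write $D:=F-g$ and set
\[\Phi:=\ip{\hat D^t\hat h_{t+1}\cdots\hat h_k}{{\bf 1}}=\ip{D}{D^{\ast(t-1)}\ast h_{t+1}\ast\cdots\ast h_k}.\]
The inner function $G:=D^{\ast(t-1)}\ast h_{t+1}\ast\cdots\ast h_k$ is not bounded, because $D$ contains $F$, which has size $T\sim\sqrt n$. So we cannot simply apply the $t-1$ case with $h_t:=G$. Instead we expand $D=F-g$ in each of the $t-1$ convolution slots, giving $2^{t-1}$ terms. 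Each term is $\ip{D}{F^{\ast a}\ast g^{\ast b}\ast h_{t+1}\ast\cdots}$ with $a+b=t-1$.

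Terms with $a=0$ are handled directly. There $g^{\ast b}\ast h_{t+1}\ast\cdots\ast h_k$ is a $(k-1)$-fold convolution of functions bounded by $2$. After rescaling this lies in $\calH$, so Claim~\ref{clm:rttv} gives $o(1)$.

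\textbf{Terms with $a\ge1$.} For these we decompose $F=D+g$ and use the Fourier picture. The plan is to use the pointwise bound $|\hat F(\bu)|\le|\hat T(\bu)|$, where $\hat T$ is concentrated on $\{{\bf 0},{\bf 1}\}$ with tiny mass elsewhere in every $\ell^k$ norm for $k\ge3$ (Lemma~\ref{clm:krav}). Concretely, split $\hat F=\hat F\cdot\mathbb 1_{\{{\bf 0},{\bf 1}\}}+\hat F\cdot\mathbb 1_{\text{else}}$.

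The first piece corresponds to the function $\hat F({\bf 0})+\hat F({\bf 1})(-1)^{|\bx|}$. This is bounded by $2$, so it can be absorbed into the bounded $h$'s.

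For the second piece we use Hölder:
\[\Bigl|\sum_\bu \hat D^{t}\cdots\hat F_{\text{else}}\cdots\Bigr|\le \|\hat D\|_{k}^{t}\,\|\hat F_{\text{else}}\|_{k}\prod\|\hat h_j\|_k.\]
Here $\|\hat h_j\|_k\le\|\hat h_j\|_2\le1$. We also have $\|\hat D\|_k\le\|\hat T\|_k+\|\hat g\|_2=O(1)$ by Lemma~\ref{clm:krav}, and $\|\hat F_{\text{else}}\|_k^k\le\sum_{\bu\ne{\bf 0},{\bf 1}}|\hat T(\bu)|^k=o(1)$. So every occurrence of $F$ reduces either to a bounded function or to an $o(1)$ error. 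After this reduction all remaining terms are of the form $\ip{\hat D^{s}\,\hat h'_{s+1}\cdots\hat h'_k}{{\bf 1}}$ with bounded $h'$ and $s\le t-1$, which the induction hypothesis shows are $o(1)$.

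\textbf{Main obstacle.} The step I expect to be hardest is the bookkeeping in the Hölder estimate. We must make sure that every term carries at least one factor that is small in $\ell^k$. We must also make sure that the remaining factors use exponents whose norms are controlled: $\ell^2$ for the bounded functions $g,h$, and $\ell^k$ with $k\ge3$ for $\hat F$.

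The delicate point is that $\|\hat F\|_2\sim n^{1/4}$ is large. Hence $\hat F$ may only ever appear measured in $\ell^{k}$ with $k\ge3$, never in $\ell^2$. If a term would force $\hat F$ into $\ell^2$, I would instead regroup it so that it has the form $\ip{F}{F\ast\cdots}$. I would then use the additional $k=3$ guarantee $\ip{F-g}{F\ast F}=o(1)$ from Claim~\ref{clm:rttv}, which is exactly why that guarantee is included.
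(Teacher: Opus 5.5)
\textbf{The gap: the pointwise bound on $\hat F$ is false.} Your whole treatment of the terms with $a\ge1$ rests on $|\hat F(\bu)|\le|\hat T(\bu)|$, hence on $\sum_{\bu\notin\{{\bf 0},{\bf 1}\}}|\hat F(\bu)|^k=o(1)$. Both fail. Take $f(\bx)=v\cdot\bx$ for any $v\notin\{{\bf 0},{\bf 1}\}$, so that $F=T\cdot\chi_v$. Then $\hat F(\bu)=\hat T(\bu+v)$, so $\hat F(v)=\hat T({\bf 0})=1$, while $|\hat T(v)|=o(1)$ by Claim~\ref{claim:boolkraw_mean}. The part of $\hat F$ off $\{{\bf 0},{\bf 1}\}$ therefore has $\ell^k$-norm at least $1$.

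This cannot be patched. Your decomposition would give $|\hat F(\bu)|=o(1)$ for every $\bu\notin\{{\bf 0},{\bf 1}\}$ and every $F$ with $|F|=T$. That would mean no function on the slice correlates with a nontrivial character, which is exactly what the test is built to detect. It would also collapse your induction to the case $t=1$.

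More generally, writing $F$ as a bounded function plus an $\ell^k$-small error is essentially the conclusion being proved. For even $k$, the case $t=k$ says precisely $\|\hat F-\hat g\|_k=o(1)$, so this route is circular. The only valid comparison between $\hat F$ and $\hat T$ is in aggregate: for even $k$, $\sum_\bu\hat F(\bu)^k\le\sum_\bu\hat T(\bu)^k$ by counting $k$-cycles, and this does not localize away from $\{{\bf 0},{\bf 1}\}$. For $k=3$, even your claim $\|\hat D\|_3=\calO(1)$ is unsupported, since $\|\hat F\|_2\sim n^{1/4}$.

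\textbf{The missing idea: a majorant in physical space.} Never estimate $\hat F$ on its own. Instead dominate $|F-g|$ pointwise by the majorant $T-S+4$, using $|F|=T$, $g$ supported on $S$, and $|g|\le2$. The non-constant part of this majorant is Fourier-small: $\ip{(\hat T-\hat S)^4}{{\bf 1}}=o(1)$. With $G=(F-g)^{\ast t}\ast h_{t+2}\ast\cdots\ast h_k$, weighted Cauchy--Schwarz gives
\[\ip{F-g}{G}^2\le 4\ip{T-S}{G^2}+16\ip{{\bf 1}}{G^2}.\]
\begin{itemize}
\item \emph{First term.} It is $o(1)$ by the bound $\ip{\phi}{G^2}^4\le\ip{\hat\phi^4}{{\bf 1}}\cdot\calO(1)$ of Fact~\ref{fact:cs}. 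That bound needs $\|((F-g)\ast(F-g))^2\|_2=\calO(1)$, which is (P1) of Claim~\ref{clm:measure}.
\item \emph{Second term, bounded part.} Truncate at $C=4^{k-1}+1$. Then $\ip{\boldone_{|G|\le C}}{G^2}=C\ip{\varphi}{G}$, where $\varphi=\boldone_{|G|\le C}\,G/C$ is bounded. Moreover $\ip{\varphi}{G}=\ip{(\hat F-\hat g)^t\hat\varphi\,\hat h_{t+2}\cdots\hat h_k}{{\bf 1}}$, which the induction hypothesis covers. This is how the unboundedness of $G$, which you correctly flagged, is actually resolved.
\item \emph{Second term, tail.} The set where $|G|>C$ is empty unless $t=k-1$. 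In that case it has $o(1)$ measure by a Parseval bound on $(T-S+4)^{\ast(k-1)}$, which is (P2).
\end{itemize}

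\textbf{The case $k=3$, $t=3$.} Here Fact~\ref{fact:cs} is too weak. The extra guarantee $\ip{F-g}{F\ast F}=o(1)$ enters through the identity
\[\ip{(\hat F-\hat g)^3}{{\bf 1}}=\ip{F-g}{F\ast F}-2\ip{(\hat F-\hat g)^2\hat g}{{\bf 1}}-\ip{(\hat F-\hat g)\hat g^2}{{\bf 1}},\]
not through any Fourier bound on $\hat F$.
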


  \noindent We prove Claim~\ref{clm:dodos} in Section~\ref{sec:dodos} using an argument similar to the one in~\cite{DK16}.  We now prove Lemma~\ref{lem:main} assuming Claims~\ref{clm:rttv} and~\ref{clm:dodos}.

  \begin{proof}[Proof of Lemma~\ref{lem:main}]
    Let $g:\zo^n\rightarrow[-2,2]$ be the function promised by Claim~\ref{clm:rttv}. 
Since $g$ is supported on $S$, we have $\ip{\hat g^2}{{\bf 1}}=\ip{g^2}{{\bf 1}}\leq2$, by Parseval and using that $|g|\leq2$ and $S$ has density $\frac{1}{2}$ in $\zo^n$.  Furthermore, for all $\bu\in\zo^n$ we have \[\big|\hat F(\bu)-\hat g(\bu)\big|=\big|\ip{F-g}{\chi_{\bu}}\big|=o(1),\] since $\chi_{\bu}=\chi_{\bu}\ast\cdots\ast\chi_{\bu}\in\calH$.
All that remains to prove now is the first property, \ie 
\[\ip{\hat F^k-\hat g^k}{{\bf 1}}~=~\sum_{r=0}^{k-1}\ip{(\hat F-\hat g)\hat F^r\hat g^{k-r-1}}{{\bf 1}}~=~o(1).\] 
 We prove by induction on $r$ that for all $r=0,\dots,k-1$, 
 \[
 \big|\ip{(\hat F-\hat g)^t\hat F^r\hat g^{k-r-t}}{{\bf 1}}\big|~=~o(1),
 \] holds for all $t=1,\dots,k-r$, which suffices.  The base case when $r=0$ follows readily from Lemma~\ref{clm:dodos}.  Indeed, as $(g/2):\zo^n\rightarrow[-1,1]$, we have that for all $t=1,\dots,k$, 
 \[
 \big|\ip{(\hat F-\hat g)^t\hat g^{k-t}}{{\bf 1}}\big|~=~2^{k-t}\cdot\big|\ip{(\hat F-\hat g)^t(\hat g/2)^{k-t}}{{\bf 1}}\big|=2^{k-t}\cdot o(1)~=~o(1).
 \]  Now, if we assume that $r\in\{0,1,\dots,k-2\}$ is such that $\big|\ip{(\hat F-\hat g)^t\hat F^r\hat g^{k-r-t}}{{\bf 1}}\big|=o(1)$ for all $t=1,\dots,k-r$, then for any $t\in\{1,\dots,k-r-1\}$, we have:
 \[\big|\ip{(\hat F-\hat g)^t\hat F^{r+1}\hat g^{k-r-t-1}}{{\bf 1}}\big|~=~\Big|\ip{(\hat F-\hat g)^t\hat F^r\hat g^{k-r-t}}{{\bf 1}}+\ip{(\hat F-\hat g)^{t+1}\hat F^r\hat g^{k-r-t-1}}{{\bf 1}}\Big|~=~o(1),\] and so our induction is established and Lemma~\ref{lem:main} follows.
  \end{proof}

  \subsection{Auxiliary Tools}
  \label{sec:auxtools}

  In the previous section, we reduced Lemma~\ref{lem:main} to Claims~\ref{clm:rttv} and~\ref{clm:dodos}, which we prove in Sections~\ref{sec:rttv} and~\ref{sec:dodos} below.  In this section, we prove some auxiliary results that we will need for both claims.

  \paragraph{Computations Involving $T\ast T$.} We will use the following two facts about the convolution $T\ast T$.
  \begin{claim}
    \label{clm:ttbound}
    Let $T$ be the normalized indicator function of the half-slice. Then, $T*T$ satisfies the following:
    \begin{enumerate}
    \item   if $\frac{n}{3}\leq|\bx|\leq\frac{2n}{3}$, then $(T\ast T)(\bx)<4$;
    \item   $\big\|(T\ast T)^2\big\|_2=\calO(1)$.
    \end{enumerate}
  \end{claim}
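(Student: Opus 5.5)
Both parts come from an exact formula for $T\ast T$. After that, one needs standard bounds on central binomial coefficients and a concentration estimate for the binomial distribution.

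\emph{Exact formula.} By definition, $(T\ast T)(\bx)=\E_{\by}[T(\by)T(\bx+\by)]=\tau^{-2}2^{-n}\cdot\#\{\by\in T:\ \bx+\by\in T\}$. Let $w=|\bx|$, and let $a$ be the number of ones of $\by$ inside $\mathrm{supp}(\bx)$. Then $|\by|=n/2$ and $|\bx+\by|=w-a+(n/2-a)$. So both $\by$ and $\bx+\by$ lie in $T$ exactly when $a=w/2$. This forces $w$ to be even, and the count is then $\binom{w}{w/2}\binom{n-w}{(n-w)/2}$. Since $\tau=2^{-n}\binom{n}{n/2}$, we get
\[
(T\ast T)(\bx)~=~h(w)~:=~\frac{2^n\binom{w}{w/2}\binom{n-w}{(n-w)/2}}{\binom{n}{n/2}^2}\quad(w\text{ even}),\qquad h(w)=0\quad(w\text{ odd}).
\]

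\emph{Part 1.} Use the standard estimates $\frac{2^m}{\sqrt{2m}}\le\binom{m}{m/2}\le\frac{2^m}{\sqrt{\pi m/2}}$ for even $m\ge2$. Applying the upper bound to the numerator and the lower bound to the denominator gives
\[
h(w)~\le~\frac{2n\cdot 2}{\pi\sqrt{w(n-w)}}~=~\frac{4n}{\pi\sqrt{w(n-w)}}.
\]
For $n/3\le w\le 2n/3$ we have $w(n-w)\ge 2n^2/9$. Hence $h(w)\le \frac{4}{\pi}\cdot\frac{3}{\sqrt2}\approx 2.70<4$.

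\emph{Part 2.} Write
\[
\|(T\ast T)^2\|_2^2~=~\E_{\bx}[(T\ast T)(\bx)^4]~=~2^{-n}\sum_{w}\binom{n}{w}h(w)^4,
\]
and split the sum into three ranges.
\begin{enumerate}
\item For $w\in[n/3,2n/3]$, Part 1 gives $h(w)<4$. Since the binomial weights sum to at most $1$, this range contributes at most $4^4$.
\item For $w=0$ or $w=n$, we have $h(w)=2^n/\binom{n}{n/2}=O(\sqrt n)$. These two points contribute $O(2^{-n}n^2)=o(1)$.
\item For the remaining even $w$ with $1\le\min(w,n-w)<n/3$, the same binomial bounds give $h(w)\le\frac{4n}{\pi\sqrt{w(n-w)}}=O(\sqrt n)$. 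So $h(w)^4=O(n^2)$. By a Chernoff bound, $2^{-n}\sum_{|w-n/2|>n/6}\binom{n}{w}\le 2e^{-n/18}$. This range therefore contributes $O(n^2e^{-n/18})=o(1)$.
\end{enumerate}
Adding the three ranges gives $\|(T\ast T)^2\|_2=O(1)$.

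The only delicate point is using explicit (not merely asymptotic) central-binomial bounds in Part 1, so that the constant is provably below $4$. The bounds above leave enough slack for this.
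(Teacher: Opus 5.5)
Your proof is correct and follows essentially the same route as the paper. Both use the exact count $\binom{w}{w/2}\binom{n-w}{(n-w)/2}$ with central-binomial bounds for Part 1, and for Part 2 split into the middle range plus a Chernoff-small tail where $T\ast T=O(\sqrt n)$; the paper gets the tail bound directly from the trivial estimate $T\ast T\le\tau^{-1}$, so your separate treatment of $w\in\{0,n\}$ and the remaining tail weights is not needed.
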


  \begin{proof}
    \textbf{Item (1)}\quad Fix $\bx\in\zo^n$ with $\frac{n}{3}\leq|\bx|\leq\frac{2n}{3}$, and assume $|\bx|$ is even, since otherwise $(T\ast T)(\bx)=0$. Denote the complement of $x$ by $\overline{\bx}$.  Note the equivalence:
\[
\by\in T\niceand\bx+\by\in T\Longleftrightarrow|\by\wedge\bx|=\tfrac{1}{2}|\bx|\niceand|\by\wedge\overline{\bx}|=\tfrac{1}{2}|\overline{\bx}|\;. \]
Recall that $\tau=2^{-n}\cdot\binom{n}{n/2}$ is the density of $T\subset\zo^n$.  It follows that
  \begin{align*}
    (T\ast T)(\bx)
    ~&=~
    \tau^{-2}\cdot\Ex{\by\sim\zo^n}{\boldone_{\by\in T}\cdot\boldone_{\bx+\by\in T}}\\[1.1pt]
    ~&=~\tau^{-2}\cdot2^{-n}\cdot\binom{|\bx|}{|\bx|/2}\cdot\binom{|\overline\bx|}{|\overline\bx|/2} \\[1.1pt]
    ~&\leq~
    \tau^{-2}\cdot\biggl(\frac{\pi|\bx|}{2}\biggr)^{-1/2}\cdot\biggl(\frac{\pi|\overline\bx|}{2}\biggr)^{-1/2} &&\parens[\Bigg]{2^{-k}\binom{k}{k/2}\leq\Bigl(\frac{\pi k}{2}\Bigr)^{-\frac{1}{2}}}\\[1.1pt]
    ~&\leq~\tau^{-2}\cdot\frac{6}{\pi n} &&\parens[\Big]{|\bx|,|\overline{\bx}|\geq\frac{n}{3}}\\[1.1pt]
    ~&\leq~2n\cdot\frac{6}{\pi n}=\frac{12}{\pi}<4 &&\parens[\Bigg]{\binom{n}{n/2}\geq2^{n-1}\cdot\Bigl(\frac{n}{2}\Bigr)^{-1/2},\text{ for }n\geq8}.
  \end{align*}

\noindent \textbf{Item (2)}   We have
  \begin{align*}
    \big\|(T\ast T)^2\big\|_2^2
    ~&=~
    \Ex{\bx\sim\zo^n}{(T\ast T)(\bx)^4}\\
    ~&=~
    \Ex{\bx\sim\zo^n}{\boldone_{\frac{n}{3}\leq|\bx|\leq\frac{2n}{3}}\cdot(T\ast T)(\bx)^4} +\Ex{\bx\sim\zo^n}{\boldone_{|\bx|<\frac{n}{3}\text{ OR }|\bx|>\frac{2n}{3}}\cdot(T\ast T)(\bx)^4}\\
    ~&\leq~
    4^4+\tau^{-4}\cdot\Pr{\bx\sim\zo^n}{|\bx|< \frac{n}{3} ~\text{ OR }~|\bx|> \frac{2n}{3}}~=~\calO(1).
  \end{align*}
  The inequality on the third line has used the claim from part (i), and that $T\ast T\leq\tau^{-1}$. The final equality has used that $\tau^{-4}=\Theta(n^2)$, while the probability is $2^{-\Omega(n)}$ by Chernoff's inequality.
\end{proof}

  \paragraph{Cauchy-Schwarz Bounds.} Claims~\ref{clm:rttv} and~\ref{clm:dodos} both require bounds based on Cauchy-Schwarz.  To avoid repetition, we collect here the ideas that are common to the proofs of both bounds.

  \begin{claim}
    \label{clm:firstcs}
    Let $r\in\N$, and let $\varphi,h_i,h'_i:\zo^n\rightarrow\R$ for $i=1,\dots,r$ be arbitrary functions.  Then \[\ip{\phi}{\prod_{i=1}^r(h_i*h'_i)}^2 ~\leq~ \angles[\bigg]{ \phi*\phi\;, \; \prod_{i=1}^r h_i*h_i} \cdot\prod_{i=1}^r\|h'_i\|_2^2\;.\]
  \end{claim}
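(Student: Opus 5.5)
Writing $\ip{a}{b}=\Ex{\bx}{a(\bx)b(\bx)}$, the approach is to expand each convolution as an expectation, pull the sum over the shift variables out of the inner product, and apply Cauchy--Schwarz once in the form that separates the $h'_i$ from the rest.

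\textbf{Step 1 (expansion).} By definition $(h_i*h'_i)(\bx)=\Ex{\by_i}{h_i(\bx+\by_i)h'_i(\by_i)}$. Substituting this and exchanging the order of expectation gives
\[\ip{\phi}{\prod_{i=1}^r(h_i*h'_i)}~=~\Ex{\by_1,\dots,\by_r}{\prod_{i=1}^r h'_i(\by_i)\cdot \Ex{\bx}{\phi(\bx)\prod_{i=1}^r h_i(\bx+\by_i)}}.\]

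\textbf{Step 2 (Cauchy--Schwarz).} Apply Cauchy--Schwarz over the tuple $(\by_1,\dots,\by_r)$ uniform in $(\zo^n)^r$. The factor $\prod_i h'_i(\by_i)$ has squared $L^2$ norm $\prod_i\|h'_i\|_2^2$, because the $\by_i$ are independent.

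\textbf{Step 3 (the remaining term).} What is left to identify is
\[\Ex{\by_1,\dots,\by_r}{\Big(\Ex{\bx}{\phi(\bx)\prod_i h_i(\bx+\by_i)}\Big)^2}.\]
Expanding the square introduces two copies $\bx,\bx'$, giving
\[\Ex{\bx,\bx'}{\phi(\bx)\phi(\bx')\prod_i\Ex{\by_i}{h_i(\bx+\by_i)h_i(\bx'+\by_i)}},\]
where the $\by_i$-expectations factor by independence. Substituting $\bz=\bx+\bx'$, each inner factor becomes $\Ex{\bw}{h_i(\bw)h_i(\bw+\bz)}=(h_i*h_i)(\bz)$. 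Similarly, averaging $\phi(\bx)\phi(\bx+\bz)$ over $\bx$ with $\bz$ fixed gives $(\phi*\phi)(\bz)$. The whole expression is therefore $\Ex{\bz}{(\phi*\phi)(\bz)\prod_i(h_i*h_i)(\bz)}=\angles[\big]{\phi*\phi,\prod_i h_i*h_i}$, which is exactly the bound claimed.

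There is no real obstacle here. The only care needed is in the change of variables: in characteristic $2$ we have $\bx+\by_i=\bw$ and $\bx'+\by_i=\bw+\bz$, so $(\bx,\bz,\bw)$ ranges uniformly and independently once $\by_i$ is integrated out. One should also check that the convolution convention is symmetric, which holds since $\bz=-\bz$. No sign or ordering issues arise because everything is real-valued over $\F_2^n$.
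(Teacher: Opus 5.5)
Your proof is correct and is essentially the paper's argument: expand each $h_i*h'_i$ as an average over a shift, pull out the factor $\prod_i h'_i$, apply Cauchy--Schwarz over the shifts, and expand the square to obtain $\langle \phi*\phi, \prod_i h_i*h_i\rangle\cdot\prod_i\|h'_i\|_2^2$. The only difference is cosmetic: the paper first averages over an extra translation $z$ (writing $\phi(x+z)$ and $h'_i(a_i+z)$). Your version shows this is unnecessary, since $\phi*\phi$ already emerges from the substitution $x'=x+z$.
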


  \begin{proof}
    We have
  \begin{align}
    \nonumber \ip{\phi}{\prod_i(h_i*h'_i)}^2 ~&=~ {\Ex{x}{\phi(x) \prod_i(h_i*h'_i)(x)}}^2={\Ex{x,z}{\phi(x+z) \prod_i(h_i*h'_i)(x+z)}}^2\\
    \nonumber	~&=~ {\Ex{x,z}{\phi(x+z) \cdot \prod_i \Ex{a_i}{h_i(x+a_i)h'_i(a_i+z)}}}^2\\
    \nonumber~&=~ {\Ex{ \{a_i\}, z}{ \Ex{x}{\phi(x+z) \cdot \prod_i h_i(x+a_i)} \cdot \prod_i h'_i(a_i+z)}}^2\\
    \nonumber~&\leq~ \Ex{ \{a_i\}, z}{ {\Ex{x}{\phi(x+z) \cdot \prod_i h_i(x+a_i)}}^2} \cdot \Ex{ \{a_i\}, z}{ \prod_i h'_i(a_i+z)^2}\\
    \nonumber~&=~\Ex{ \{a_i\}, z, x, x'}{ \phi(x+z)\phi(x'+z) \cdot \prod_i h_i(x+a_i)h_i(x'+a_i)} \cdot \prod_i \|h'_i\|_2^2 \\
        \nonumber~&=~ \angles[\bigg]{ \phi*\phi\;, \; \prod_{i=1}^r h_i*h_i} \cdot \prod_i \|h'_i\|_2^2 \;,
  \end{align}
  where the inequality is Cauchy-Schwarz.
  \end{proof}

  \begin{claim}
    \label{claim:conv_ind}
  Let $r\geq2$ and let $h_1,\dots, h_r:\zo^n\rightarrow\R$ be arbitrary functions.  Then 
  \[\big\lVert h_1\ast\cdots\ast h_{r}\big\rVert_2^4 ~\leq~ \prod_{i=1}^{r}	\big\lVert{h_i\ast h_i}	\big\rVert_2^2 ~\leq~ \prod_{i=1}^{r}	\big\lVert{(h_i\ast h_i)^2}	\big\rVert_2 \;.\]
\end{claim}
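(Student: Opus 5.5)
The plan is to prove the two inequalities separately. The first one carries the content and will come from the Gowers--Cauchy--Schwarz style bound $\|h_1\ast\cdots\ast h_r\|_2^4\le\prod_i\|h_i\ast h_i\|_2^2$. The second is a pointwise application of Cauchy--Schwarz to each factor. Working on the Fourier side is cleanest, since convolution becomes pointwise multiplication: $\widehat{h_1\ast\cdots\ast h_r}=\prod_i\hat h_i$ and $\widehat{h_i\ast h_i}=\hat h_i^2$. By Parseval,
\[
\|h_1\ast\cdots\ast h_r\|_2^2=\sum_{\bu}\prod_{i=1}^r\hat h_i(\bu)^2,\qquad \|h_i\ast h_i\|_2^2=\sum_{\bu}\hat h_i(\bu)^4 .
\]
Writing $a_i(\bu):=\hat h_i(\bu)^2\ge0$, the first inequality reads
\[
\Bigl(\sum_{\bu}\prod_{i=1}^r a_i(\bu)\Bigr)^2\le\prod_{i=1}^r\sum_{\bu}a_i(\bu)^2 .
\]

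For $r=2$ this is exactly Cauchy--Schwarz. For $r\ge3$, I will use $\sum_{\bu}\prod_i a_i\le \|a_1\|_{\ell^2}\|a_2\|_{\ell^2}\prod_{i\ge3}\|a_i\|_{\ell^\infty}$, together with the monotonicity of norms on the counting measure, $\|a_i\|_{\ell^\infty}\le\|a_i\|_{\ell^2}$. Squaring then gives the claim. Equivalently, one can apply generalized H\"older with exponents $(2,2,\infty,\dots,\infty)$. Note that the counting measure on $\zo^n$, which is the relevant measure on the Fourier side, is what makes $\ell^\infty\le\ell^2$ valid. This is the only point requiring care, and it is why I am working in Fourier space rather than with expectations. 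As a sanity check on normalizations, the expectation-normalized physical-side norm $\|\cdot\|_2$ matches the Fourier-side $\ell^2$ sum exactly, by the Parseval convention used in the proof of \cref{clm:krav}.

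For the second inequality it suffices to show $\|h_i\ast h_i\|_2^2\le\|(h_i\ast h_i)^2\|_2$ for each $i$. Setting $\psi=h_i\ast h_i$, this says $\Ex{}{\psi^2}\le(\Ex{}{\psi^4})^{1/2}$, which is Cauchy--Schwarz (or Jensen) under the uniform probability measure on $\zo^n$. Taking the product over $i$ finishes the proof.

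I do not expect a real obstacle. The only point to check is that the measures match up: the Fourier side uses counting measure, where $\ell^\infty\le\ell^2$ holds, while the physical side uses probability measure, where $L^2\le L^4$ holds. I should also confirm that no factor like $2^n$ enters under the paper's normalization $\hat h(\bu)=\E[h\chi_{\bu}]$ together with $(f\ast g)(x)=\E_y f(y)g(x+y)$. These conventions are consistent with the computation in \cref{clm:firstcs}.
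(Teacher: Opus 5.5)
Your proof is correct, but for the first inequality it takes a different route from the paper's.

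The paper inducts on $r$ entirely on the physical side. It writes $h_1\ast\cdots\ast h_r=h\ast h'$ with $h=h_1$ and $h'=h_2\ast\cdots\ast h_r$. The identity $\langle h\ast h',h\ast h'\rangle=\langle h\ast h,h'\ast h'\rangle$ plus Cauchy--Schwarz gives $\|h\ast h'\|_2^2\le\|h\ast h\|_2\,\|h'\ast h'\|_2$. It then bounds $\|h'\ast h'\|_2\le\|h'\|_2^2$ by Cauchy--Schwarz, and applies the induction hypothesis to $\|h'\|_2^2$.

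You instead work in one shot on the Fourier side, using H\"older with exponents $(2,2,\infty,\dots,\infty)$ and $\ell^\infty\le\ell^2$ for counting measure. Translated to Fourier space, the paper's step $\|h'\ast h'\|_2\le\|h'\|_2^2$ is exactly $\sum_{\bu}\hat h'(\bu)^4\le\bigl(\sum_{\bu}\hat h'(\bu)^2\bigr)^2$. This is the same counting-measure monotonicity you invoke, only applied iteratively rather than once.

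What each buys: your version is shorter, avoids induction, and makes explicit where the normalization matters. It even shows that only two factors need $\ell^2$ control; the others enter only through $\max_{\bu}\hat h_i(\bu)^2$. The paper's version keeps the argument in the same physical-side Cauchy--Schwarz idiom used in \cref{clm:firstcs} and \cref{fact:cs}.

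Your treatment of the second inequality (Jensen, i.e.\ Cauchy--Schwarz under the uniform measure, factor by factor) matches the paper's.
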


\begin{proof} We prove the first inequality via induction on $r$. The second follows by applying Jensen's inequality. The base case is when $r=2$.  For any $h,h':\zo^n\rightarrow\R$, Cauchy-Schwarz gives:
\begin{equation}\label{eq:inductive}
\big\lVert{h\ast h'}	\big\rVert_2^2 ~=~ \ip{h*h'}{h*h'} ~=~ \ip{h*h}{h'*h'} ~\leq~ \norm{h*h}_2\cdot\norm{h'*h'}_2.
\end{equation}
Consider now the case when $r>2$, and let $h=h_1$ and $h'=h_2\ast\cdots\ast h_r$.  We have \[\big\lVert{h_1\ast\cdots\ast h_r}	\big\rVert_2^2 ~=~ \big\lVert{h\ast h'}	\big\rVert_2^2 ~\leq~ \norm{h*h}_2\cdot\norm{h'*h'}_2\leq\norm{h*h}_2\cdot\norm{h'}^2_2 ~\leq~ \prod_{i=1}^{r}	\big\lVert{h_i\ast h_i}	\big\rVert_2,\] where the first inequality is~\cref{eq:inductive}, the second is Cauchy-Schwarz, and the last holds by induction.
\end{proof}

\subsection{Proof of Claim~\ref{clm:rttv} $-$ Weak Approximator via Regularity}\label{sec:rttv}
The key tool to construct the weak approximator is the following result of Reingold, Trevisan, Tulsiani, and Vadhan~\cite{RTTV08}:

\begin{theorem}[{\bf Main Theorem of \cite{RTTV08}}]
  \label{thm:rttv}
  Let $A:\zo^n\rightarrow\R_{\geq0}$ be such that $\E_{\bx}\bigl[A(\bx)\bigr]\leq1$, let $\calH\subset\big\{h:\zo^n\rightarrow[-1,1]\big\}$ be a family of bounded functions, and let $\nu:\zo^n\rightarrow\R_{\geq0}$ be such that $\E_{\bx}\bigl[\nu(\bx)\bigr]\leq1$, $A\leq\nu$ and so that for all $r\in\N$ and $h_1,\dots,h_r\in\calH$, \[\angles[\big]{(\nu-1) \;,\; h_1\cdots h_r}~=~o(1).\]  Then there exists $w:\zo^n\rightarrow[0,1]$ such that $\ip{(A-w)}{h}=o(1)$ for all $h\in\calH$.
\end{theorem}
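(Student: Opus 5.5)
The plan is to argue by contradiction, using a min-max step and a polynomial approximation, which is the standard route to dense model theorems of this kind. Fix $\ep>0$ and suppose that for every $w:\zo^n\to[0,1]$ some $h\in\calH$ has $|\ip{A-w}{h}|\geq\ep$. Replacing $h$ by $-h$ if necessary, we may assume $\ip{A-w}{h}\geq\ep$, with $h\in\calH\cup(-\calH)$. We must show that this contradicts the hypothesis $\ip{\nu-1}{h_1\cdots h_r}=o(1)$ once $n$ is large. Here $o(1)$ is read as uniform over $h_i\in\calH$ for each fixed $r$.

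\textbf{Step 1 (min-max).} Consider the zero-sum game in which one player picks $w$ from the convex compact set $\{w:\zo^n\to[0,1]\}\subset\R^{2^n}$. The other player picks $h$ from the convex hull $\mathcal{C}$ of $\calH\cup(-\calH)$. The payoff $\ip{A-w}{h}$ is bilinear. By von Neumann's minimax theorem there is a single $h^\ast=\sum_j c_j\sigma_j h_j\in\mathcal{C}$ with
\[
\ip{A-w}{h^\ast}\geq\ep\quad\text{for all }w:\zo^n\to[0,1].
\]
Here $c_j\geq0$, $\sum_j c_j=1$, $\sigma_j\in\{\pm1\}$ and $h_j\in\calH$; the combination is finite by Carath\'eodory. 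Note that $h^\ast$ takes values in $[-1,1]$.

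\textbf{Step 2 (choose the best $w$).} Take $w=\boldone_{h^\ast>0}$. Then $\ip{w}{h^\ast}=\E[(h^\ast)_+]$. On the other side, $A\geq0$ and $A\leq\nu$ give $\ip{A}{h^\ast}\leq\ip{A}{(h^\ast)_+}\leq\ip{\nu}{(h^\ast)_+}$. Combining these,
\[
\ip{\nu-1}{(h^\ast)_+}\geq\ep.
\]

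\textbf{Step 3 (polynomial approximation and expansion).} This is the main step. By Weierstrass, choose a polynomial $P(t)=\sum_{m=0}^d p_m t^m$ with $|P(t)-t_+|\leq\ep/4$ on $[-1,1]$. Both $d$ and $B:=\sum_m|p_m|$ depend only on $\ep$. Since $\E\nu\leq1$, we have $\E|\nu-1|\leq2$, so
\[
\ip{\nu-1}{P(h^\ast)}\geq\ep-2\cdot\ep/4=\ep/2.
\]
Now expand:
\[
(h^\ast)^m=\sum_{j_1,\dots,j_m}c_{j_1}\cdots c_{j_m}\,\sigma_{j_1}\cdots\sigma_{j_m}\,h_{j_1}\cdots h_{j_m}.
\]
This is a signed average, with total weight $1$, of products of $m$ functions from $\calH$. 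The $m=0$ term contributes $p_0\ip{\nu-1}{1}$. This is also covered by the hypothesis (read with $r=0$, or with $r=2$ and $h=h_1=h_2$ when $\pm1\in\calH$). If the hypothesis only concerns $r\ge1$, one instead uses $\E\nu\le1$ and that $A$ carries most of the mass.

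Averaging, some product $h_{j_1}\cdots h_{j_m}$ with $m\leq d$ satisfies
\[
|\ip{\nu-1}{h_{j_1}\cdots h_{j_m}}|\geq\ep/(2B).
\]
This is a constant depending only on $\ep$. It contradicts the hypothesis for $n$ large, because only the finitely many degrees $r\leq d$ are involved. Since $\ep$ was arbitrary, this gives $\ip{A-w}{h}=o(1)$, uniformly in $h$, for a suitable $w$.

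The main obstacle is the bookkeeping in Step 3. The approximation must be uniform, so $d$ and $B$ have to depend only on $\ep$ and not on $n$ or on the number of summands in $h^\ast$. This is exactly why the hypothesis is stated for all products $h_1\cdots h_r$ rather than for single $h$. The constant term also has to be treated carefully.
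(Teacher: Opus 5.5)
Your proof is correct. The paper gives no proof of this statement and simply imports it from \cite{RTTV08}. Your argument is essentially the standard proof behind that result, in three steps:
\begin{itemize}
\item a min-max step that produces a single convex-combination distinguisher $h^\ast$ (up to replacing $\ep$ by $\ep/2$ if the supremum over the convex hull is not attained);
\item a model adapted to $h^\ast$, which here can simply be $w=\boldone_{h^\ast>0}$ because the statement puts no density constraint on $w$;
\item a Weierstrass approximation of $t_+$, which turns the resulting distinguisher for $\nu$ into a bounded-degree signed average of products $h_{j_1}\cdots h_{j_m}$.
\end{itemize}

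One cleanup: your appeal to ``$A$ carrying most of the mass'' for the constant term is not supported by the hypotheses, and it is not needed. Since $|p_0|=|P(0)-0_+|\leq\ep/4$ and $0\leq1-\E\nu\leq1$, the $m=0$ term costs at most $\ep/4$. Hence some product with $1\leq m\leq d$ still satisfies $|\ip{\nu-1}{h_{j_1}\cdots h_{j_m}}|\geq\ep/(4B)$, which contradicts the hypothesis for large $n$. A diagonalization over $\ep$ then gives the $o(1)$ conclusion.
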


\begin{corollary}
  \label{cor:rttv}
  Let $A$, $\calH$ and $\nu$ be as in the hypotheses of~\cref{thm:rttv}.  Let $\calP=\{\sfP_\alpha\}$ be a partition of $\zo^n$ into at most $\eta^{-1/2}$ sets, where $\eta>0$ is the $o(1)$ parameter in the conclusion of~\cref{thm:rttv}.  Let $\Psi:\zo^n\rightarrow\R$ be a function with the property that for each $\sfP_\alpha\in\calP$, there exists a sign $b_\alpha\in\{\pm1\}$ such that $\Psi(\bx)=b_\alpha\cdot A(\bx)$ for all $\bx\in\sfP_\alpha$.  Then there exists $w:\zo^n\rightarrow[-1,1]$ such that $\big|\ip{\Psi-w}{h}\big|=o(1)$ for all $h\in\calH$.  Furthermore, if $\varphi:\zo^n\rightarrow[-1,1]$ is a function which is constant on each $\sfP_\alpha\in\calP$, then $\big|\ip{\Psi-w}{\varphi}\big|=o(1)$ also holds.
\end{corollary}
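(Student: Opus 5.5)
We apply \cref{thm:rttv} separately to the positive and negative parts of $\Psi$ restricted to each cell, and then glue the resulting functions together with the signs $b_\alpha$.

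\textbf{Step 1: enlarge the family.} Let $\calH' := \calH \cup \{\boldone_{\sfP_\alpha}\}_\alpha \cup \{-\boldone_{\sfP_\alpha}\}_\alpha$. We want \cref{thm:rttv} to apply to $\calH'$ and to products $\boldone_{\sfP_\alpha} h$. So we check that $\nu$ still satisfies $\angles{\nu-1, h_1\cdots h_r} = o(1)$ when some of the $h_i$ are cell indicators.

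This step requires an assumption. We need $\nu$ to be pseudorandom against products of $\calH$ with cell indicators, or else the partition must be built from $\calH$ (for instance, level sets of finitely many $h\in\calH$). If the partition is arbitrary, this is the weak point. I would state the hypothesis in that form, or absorb the indicators via the $\eta^{-1/2}$ count as described next.

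\textbf{Step 2: avoid enlarging $\calH$ where possible.} Apply \cref{thm:rttv} to $A$ with $\calH$ to get $w_0:\zo^n\to[0,1]$. Then set
\[
w := \sum_\alpha b_\alpha \boldone_{\sfP_\alpha} w_0 ,
\]
which takes values in $[-1,1]$.

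For $h\in\calH$ we compute
\[
\angles{\Psi - w, h} = \sum_\alpha b_\alpha \angles{A - w_0, \boldone_{\sfP_\alpha} h}.
\]
To control this, we need each term to be $O(\eta)$ when tested against $\boldone_{\sfP_\alpha} h$. This is exactly why the statement bounds the number of cells by $\eta^{-1/2}$: summing gives
\[
\eta^{-1/2}\cdot \eta = \eta^{1/2} = o(1).
\]

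So the plan is as follows.
\begin{itemize}
\item Rerun the RTTV argument with the test class $\calH'' = \{\boldone_{\sfP_\alpha} h,\ \boldone_{\sfP_\alpha}\}$. These functions are still bounded by $1$.
\item Check that the pseudorandomness hypothesis for $\nu$ holds on products from $\calH''$. This is where the structure of $\nu$ and of the partition (intended as measurable with respect to $\calH$) enters. I expect this to be the main obstacle.
\item The conclusion $\angles{A - w_0, \boldone_{\sfP_\alpha} h} \le \eta$ then holds for all $\alpha$ and $h$ simultaneously.
\end{itemize}

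\textbf{Step 3: the "furthermore" clause.} Write $\varphi = \sum_\alpha c_\alpha \boldone_{\sfP_\alpha}$ with $|c_\alpha|\le 1$. Then
\[
\angles{\Psi - w, \varphi} = \sum_\alpha c_\alpha b_\alpha \angles{A - w_0, \boldone_{\sfP_\alpha}},
\]
which is at most $\eta^{-1/2}\cdot \eta = o(1)$, using that $\boldone_{\sfP_\alpha}$ lies in the test class.
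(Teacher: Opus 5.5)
Your Step~2 has a genuine gap, and the obstacle you flag is fatal rather than technical. \cref{thm:rttv} only provides pseudorandomness of $\nu$ against products from $\calH$, and in the setting where this corollary is used it fails for cell indicators. With $\nu-1=\frac12(T-S)$ and $\sfP^+=\{\bx\in T:F(\bx)>0\}$ one gets $\ip{\nu-1}{\boldone_{\sfP^+}}=\frac12\mathrm{Pr}_{\bx\sim T}[\bx\in\sfP^+]-O(\tau)$, which is typically of order $1$.

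Moreover, no choice of $w_0$ rescues the shape $w=\sum_\alpha b_\alpha\boldone_{\sfP_\alpha}w_0$. A $[0,1]$-valued model $w_0$ of $A$ puts almost all of its mass on $\overline T$, where only the single sign $b_{\overline T}$ is applied. Hence $\ip{w}{h}=b_{\overline T}\ip{A}{h}+O(\eta+\tau)$ for $h\in\calH$. Take $f$ linear, i.e.\ $F=T\chi_{\bu}$ with $2\le|\bu|\le n-2$, and test against $\chi_{\bu}\in\calH$. This gives $\ip{\Psi-w}{\chi_{\bu}}=\frac12\hat F(\bu)-\frac{b_{\overline T}}{2}\hat T(\bu)+o(1)=\frac12+o(1)$.

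The missing idea is to restrict the function being modeled instead of the test functions. The restriction $\Psi_\alpha:=A\boldone_{\sfP_\alpha}$ still satisfies $0\le\Psi_\alpha\le A\le\nu$. So \cref{thm:rttv} applies to each $\Psi_\alpha$ with the original class $\calH$, and no new pseudorandomness is needed. This is what the paper does.

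Be aware, however, that the paper's gluing step has a related defect. It defines $w=b_\alpha w_\alpha$ on $\sfP_\alpha$ and then writes $\ip{\boldone_{\sfP_\alpha}(\Psi-w)}{h}=b_\alpha\ip{\Psi_\alpha-w_\alpha}{h}$. This tacitly assumes that $w_\alpha$ vanishes off $\sfP_\alpha$. \cref{thm:rttv} does not provide this, and in the application it is impossible: on a cell of density at most $\tau$, a $[0,1]$-valued function has mass at most $\tau$, while $\ip{\Psi_\alpha}{{\bf 1}}$ can be of order $1$ and ${\bf 1}\in\calH$.

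The first assertion of the corollary is nevertheless true, via gluing by sign. Let $P^{\pm}$ be the union of the cells with $b_\alpha=\pm1$. Both $A\boldone_{P^{+}}$ and $A\boldone_{P^{-}}$ are dominated by $\nu$, so take models $w^{\pm}:\zo^n\to[0,1]$ of them and set $w=w^+-w^-\in[-1,1]$. Then $|\ip{\Psi-w}{h}|\le2\eta$ for every $h\in\calH$, without even using the bound on the number of cells.

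The ``furthermore'' clause, by contrast, is false as stated, so neither your Step~3 nor any other argument can prove it. Suppose some cell has density $o(1)$ but carries $\Omega(1)$ of the mass of $A$; for example, $A=\frac12T$ with the partition $\{T,\overline T\}$ and $\Psi=A$. Then $\varphi=\boldone_{\sfP_\alpha}$ gives $|\ip{\Psi}{\varphi}|=\ip{A}{\boldone_{\sfP_\alpha}}=\Omega(1)$. Yet $|\ip{w}{\varphi}|\le\mathrm{Pr}[\bx\in\sfP_\alpha]=o(1)$ for every $w$ with $|w|\le1$. Your suspicion that the cell indicators are the weak point is therefore well founded: that clause needs an extra hypothesis on $\varphi$ or on the partition.
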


\begin{proof}
  For each $\sfP_\alpha\in\calP$, define $\Psi_\alpha:\zo^n\rightarrow\R_{\geq0}$ as 
  \[
  \Psi_\alpha(\bx)~:=~ |\Psi(\bx)|\cdot\boldone_{\bx\in\sfP_\alpha} ~=~ A(\bx)\cdot\boldone_{\bx\in\sfP_\alpha} . \]
Thus, $\Psi_\alpha\leq A$ pointwise and so $\ip{\Psi_\alpha}{{\bf 1}}\leq1$ and $\Psi_\alpha\leq\nu$.  Thus, we can invoke~\cref{thm:rttv} on $\Psi_\alpha$, obtaining $w_\alpha:\zo^n\rightarrow[0,1]$ such that $\ip{\Psi_\alpha-w_\alpha}{h}=o(1)$ for all $h\in\calH$. 

   Now, let $w:\zo^n\rightarrow[-1,1]$ be the function defined by $w(\bx)=b_\alpha\cdot w_\alpha(\bx)$ for all $\bx\in\sfP_\alpha$.  For all $h\in\calH$, we have: 
   \[
   \big|\ip{\Psi-w}{h}\big|~=~\bigg|\sum_\alpha \ip{\boldone_{\sfP_\alpha}(\Psi-w)}{h}\bigg|~=~\bigg|\sum_\alpha b_\alpha\cdot\ip{\Psi_\alpha-w_\alpha}{h}\bigg|~\leq~\sum_\alpha\eta ~\leq~ \eta^{1/2}~=~o(1),
   \] where the second equality holds by definition of $\Psi_\alpha$ and $w_\alpha$, the first inequality holds by~\cref{thm:rttv}, and the second inequality holds because the partition size is at most $\eta^{-1/2}$.  Finally, if $\varphi:\zo^n\rightarrow[-1,1]$ is constant on each $\sfP_\alpha$, then we have \[\big|\ip{\Psi-w}{\varphi}\big|~=~\bigg|\sum_\alpha\ip{\boldone_{\sfP_\alpha}(\Psi-w)}{\boldone_{\sfP_\alpha}\varphi}\bigg|~=~\bigg|\sum_\alpha b_\alpha c_\alpha\cdot \ip{\Psi_\alpha-w_\alpha}{{\bf 1}}\bigg|\leq\eta^{1/2}~=~o(1),\] where $c_\alpha\in[-1,1]$ is the value taken by $\varphi$ on $\sfP_\alpha$.
\end{proof}

We can now prove Claim~\ref{clm:rttv}. Recall $T:\zo^n\rightarrow\R$ and $S:\zo^n\rightarrow\R$ are, respectively, the weighted indicator functions of the half-slice, and the set of strings whose Hamming weights have the same parity as $\frac{n}{2}$.

\paragraph{Claim~\ref{clm:rttv} (Restated).}\emph{Let $n,k\in\N$ be such that $n$ is even, $k\geq3$ and so that at least one of $n/2$ and $k$ is even.  Let $F:\zo^n\rightarrow\R$ be such that $|F|=T$.  Then there exists $g:\zo^n\rightarrow[-2,2]$ which is supported on $S$ (\emph{i.e.}, $g(\bx)=0$ when $\bx\notin S$) such that $\ip{F-g}{h}=o(1)$ for all $h\in\calH$.  Moreover, if $k=3$, then $g$ additionally satisfies $\ip{F-g}{F\ast F}=o(1)$.}

\begin{proof}
  Let $A=\frac{1}{2} T$, and $\nu=\frac{1}{2}\cdot(T-S)+1$.  Note that $A\leq\nu$ and $\ip{\nu}{{\bf 1}}=1$ both hold.  Let $\calH_k$ be the set of functions which are convolutions of $(k-1)$-bounded functions, \ie  \[
\calH_k = \braces[\big]{\, h_1*h_2*\cdots *h_{k-1} \mid h_i : \zo^n\to[-1,1] \,}.
\]  We will establish the following bound using Cauchy-Schwarz after completing the current proof.

\begin{fact}
  \label{fact:cs1}
For any $\varphi:\zo^n\rightarrow\R$, $r\in\N$ and $H_1,\dots,H_r\in\calH_k$, \[
\ip{\varphi}{H_1\cdots H_r}^4 ~\leq~ \ip{\hat\varphi^4}{{\bf 1}}.
\]
\end{fact}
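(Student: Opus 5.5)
The plan is to reduce \cref{fact:cs1} to two applications of Cauchy--Schwarz, the first of which is already packaged in \cref{clm:firstcs}. Since $k\geq3$, each $H_i\in\calH_k$ is a convolution of at least two bounded functions. So we can write $H_i=h_i\ast h'_i$, where $h_i:\zo^n\rightarrow[-1,1]$ is the first factor and $h'_i$ is the convolution of the remaining $k-2\geq1$ bounded factors. Every convolution of $[-1,1]$-valued functions is again $[-1,1]$-valued, because $|(a\ast b)(\bx)|\leq\E_{\by}|a(\by)||b(\bx+\by)|\leq1$. Hence $\|h'_i\|_2\leq1$ and $|h_i\ast h_i|\leq1$ pointwise.

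\textbf{Step 1.} Apply \cref{clm:firstcs} to $\varphi$ and the pairs $(h_i,h'_i)$. This gives
\[
\ip{\varphi}{H_1\cdots H_r}^2~\leq~\angles[\bigg]{\varphi\ast\varphi\;,\;\prod_{i=1}^r h_i\ast h_i}\cdot\prod_{i=1}^r\|h'_i\|_2^2~\leq~\angles[\bigg]{\varphi\ast\varphi\;,\;\prod_{i=1}^r h_i\ast h_i},
\]
where the second inequality uses $\|h'_i\|_2\leq1$.

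\textbf{Step 2.} Square the bound from Step 1 and apply Cauchy--Schwarz once more:
\[
\ip{\varphi}{H_1\cdots H_r}^4~\leq~\|\varphi\ast\varphi\|_2^2\cdot\Big\|\prod_{i=1}^r h_i\ast h_i\Big\|_2^2~\leq~\|\varphi\ast\varphi\|_2^2.
\]
The last inequality holds because the product $\prod_i h_i\ast h_i$ is pointwise bounded by $1$ in absolute value.

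\textbf{Step 3.} Identify $\|\varphi\ast\varphi\|_2^2$ in Fourier terms. The convolution theorem gives $\widehat{\varphi\ast\varphi}=\hat\varphi^{\,2}$, and Parseval then gives $\|\varphi\ast\varphi\|_2^2=\sum_{\bu}\hat\varphi(\bu)^4=\ip{\hat\varphi^4}{{\bf 1}}$, which is exactly the claimed bound.

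I expect no real obstacle. The only points needing care are that the normalization conventions (expectation inner product on $\zo^n$, expectation convolution, and $\hat h(\bu)=\E[h\chi_{\bu}]$) make the convolution theorem and Parseval hold without extra constants. This is the same convention already used in \cref{clm:firstcs} and in the identity $\ip{\hat F^k}{{\bf 1}}=\delta\ip{\hat T^k}{{\bf 1}}$. The other point is that the decomposition $H_i=h_i\ast h'_i$ requires $k-1\geq2$, which is guaranteed by $k\geq3$.
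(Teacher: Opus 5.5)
Your proof is correct and follows the paper's route: you use the same split $H_i=h_i\ast h'_i$, apply \cref{clm:firstcs}, drop the factor $\prod_i\|h'_i\|_2^2\le1$, apply Cauchy--Schwarz a second time, and use Parseval to get $\|\varphi\ast\varphi\|_2^2=\ip{\hat\varphi^4}{{\bf 1}}$. Your pointwise bounds $|h'_i|\le1$ and $|h_i\ast h_i|\le1$ are slightly cleaner than the paper's detour through \cref{claim:conv_ind} and ``submultiplicativity of norms'', which holds here only because the factors are $L^\infty$-bounded.
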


Invoking Fact~\ref{fact:cs1} with $\varphi=\nu-1=\frac{1}{2}\cdot(T-S)$ gives \[\ip{\nu-1}{H_1\cdots H_r}~\leq~\frac{1}{16}\cdot\ip{(\hat T-\hat S)^4}{{\bf 1}}~=~o(1),\] using~\cref{clm:krav}.  Thus, $A,\calH_k,\nu$ satisfy the hypotheses of~\cref{thm:rttv}.  Now, let $\Psi=\frac{1}{2} F$, and let $\calP$ be the partition $\zo^n=\sfP^+\cup\sfP^-\cup\overline T$, where \[\sfP^+:=\big\{\bx\in T:F(\bx) >0 \big\};\text{ and }\sfP^-:=\big\{\bx\in T:F(\bx) < 0\big\}.\]  Note $\Psi(\bx)=A(\bx)$ holds when $\bx\in\sfP^+\cup\overline T$ and $\Psi(\bx)=-A(\bx)$ for $\bx\in\sfP^-$.  When $k\geq4$, this is the partition we will use. When $k=3$, we further subdivide this partition as follows.  Let $\eta>0$ be the $o(1)$ parameter from the conclusion of Theorem~\ref{thm:rttv}, and let $\zeta>0$ be a $o(1)$ parameter such that $24/\zeta+2\leq\eta^{-1/2}$.  Let $N=8/\zeta$ and let $\{I_i\}_{i=1,\dots,N}$ be a partition of the real interval $[-4,4]$ into $N$ sub-intervals, each of length $\zeta$.  So,
  \[
 [-4,4] ~=~ \bigsqcup_{i=1}^N I_i,\quad\text{where }   I_i ~=~ \bigl[-4+(i-1)\zeta,-4+i\zeta\bigr).
  \]  
 
 Now, let $\sfP^+,\sfP^-,\overline T$ be as above. For $i \in [N]$,  we define the partitions as follows:
 \begin{align*}
 \sfQ_i ~&:=~\big\{\bx\in\zo^n:(F\ast F)(\bx)\in I_i\big\},\\
 \overline T_i ~&:=~ \overline T\cap\sfQ_i,\\
 \sfP^{\pm}_i ~&:=~\sfP^{\pm}\cap\sfQ_i,\\
 	\sfP^{\pm}_{\sf rem} ~&:=~ \big\{\bx\in\sfP^{\pm}:|(F\ast F)(\bx)|>4\big\}.
 \end{align*}
 The overall partition of $\zo^n$ is then:
 \[\zo^n ~=~ \bigcup_{i=1}^N\sfP^+_i\cup\bigcup_{i=1}^N\sfP^-_i\cup\bigcup_{i=1}^N\overline T_i\cup\sfP^+_{\sf rem}\cup\sfP^-_{\sf rem}\; .\]  
 Note the size of the partition is $3N+2=24/\zeta+2\leq\eta^{-1/2}$, and moreover as this is a sub-partition of $\zo^n=\sfP^+\cup\sfP^-\cup\overline T$, for each partition set $\sfP$, either $\Psi\equiv A$ on $\sfP$ or $\Psi\equiv-A$ on $\sfP$. 
 
  Both partitions (this one as well as $\zo^n=\sfP^+\cup\sfP^-\cup\overline T$ for the $k\geq4$ case) satisfy the hypotheses of Corollary~\ref{cor:rttv}, and so in both cases we obtain a function $w:\zo^n\rightarrow[-1,1]$ such that $\big|\ip{\Psi-w}{h}\big|=o(1)$ for all $h\in\calH$.  Our final function $g$ is $S\cdot w$, from which it is clear that $g:\zo^n\rightarrow[-2,2]$ holds and that $g$ is supported on $S$.  Since $F$ is supported on $S$, we get that, $S\cdot\Psi=\frac{1}{2}S\cdot F=F$. Using this, we obtain that for all $h\in\calH_k$:
  \begin{align*}
  	\big|\ip{F-g}{h}\big| 
  	~&=~ \big|\ip{S\cdot\Psi-S\cdot w}{h}\big|&& (\text{\small{$S\cdot\Psi = F$}})\\
  	~&=~\big|\ip{\Psi-w}{S\cdot h}\big|\\
  	~&=~\Big|\ip{\Psi-w}{h}+\ip{\Psi-w}{\chi_{{\bf 1}}\cdot h}\Big| && (\text{\small{$S(\bx)=1+(-1)^{|\bx|}=1+\chi_{{\bf 1}}(\bx)$}})\\
  	~&=~o(1) \;. && (\text{\small{$\chi_{{\bf 1}}\cdot h\in\calH_k$}})
  \end{align*}
  To explain the final equality, observe that if $h=h_2\ast\cdots\ast h_k$, then $\chi_{{\bf 1}}\cdot h=(\chi_{{\bf 1}}\cdot h_2)\ast\cdots\ast(\chi_{{\bf 1}}\cdot h_k)$.
   
  Finally, it remains to show that when $k=3$, $\big|\ip{F-g}{F\ast F}\big|=o(1)$.  For this purpose, we let $\lfloor z\rfloor$ denote $z$ rounded down to the nearest multiple of $\zeta$ for $z\in[-4,4]$. We define $\varphi:\zo^n\rightarrow[-4,4]$ to be the function: \[\varphi(\bx)=\left\{\begin{array}{cl}\lfloor(F\ast F)(\bx)\rfloor & \text{ if }|(F\ast F)(\bx)|\leq 4, \\ 0 & \text{ if }|(F\ast F)(\bx)|> 4\;. \end{array}\right.\] So, in particular, for each $\bx\in\sfP^+_i\cup\sfP^-_i\cup\overline T_i$, $\varphi(\bx)=-4+(i-1)\zeta$.  Note $\varphi\equiv0$ on $\sfP^+_{\sf rem}\cup\sfP^-_{\sf rem}, $ and so $\varphi$ is constant on the partition sets, and thus $\big|\ip{\Psi-w}{\varphi}\big|=4\cdot\big|\ip{\Psi-w}{(\varphi/4)}\big|=o(1)$ by Corollary~\ref{cor:rttv}. 
  
  Note that $|\Psi-w|\leq|\Psi|+1=\frac{1}{2}T+1\leq\frac{1}{2}(T-S)+2$. Now, we have that $\big|\ip{\Psi-w}{F\ast F}\big|=o(1)$ by the following computation:
  \begin{align*}
    \big|\ip{\Psi-w}{F\ast F-\varphi}\big|
    ~&\leq~
    \ip{|\Psi-w|}{|F\ast F-\varphi|}\\[0.7em]
    ~&\leq~\big\langle\tfrac{1}{2}(T-S)+2,|F\ast F-\varphi|\big\rangle && \text{\small{$\parens[\big]{|\Psi-w|\leq\frac{1}{2}(T-S)+2}$}}\\[0.7em]
    ~&\leq~
    \big\langle\tfrac{1}{2}(T-S)+2,\zeta\big\rangle+\tfrac{1}{2\tau}\cdot\Ex{\bx\sim\zo^n}{\boldone_{|(F\ast F)(\bx)|>4}\cdot\big|(F\ast F)(\bx)\big|} && (\text{\small{Definition of $\varphi$}})\\[0.7em]
    ~&\leq~
    2\zeta+\tfrac{1}{2\tau}\cdot\Ex{\bx\sim\zo^n}{\boldone_{(T\ast T)(\bx)>4}\cdot(T\ast T)(\bx)} && (\text{\small{$|F\ast F|\leq T\ast T$}})\\[0.7em]
    ~&\leq~
    2\zeta+\tfrac{1}{2\tau^2}\cdot\Ex{\bx\sim\zo^n}{\boldone_{|\bx|<\frac{n}{3}\text{ OR }|\bx|>\frac{2n}{3}}} ~=~o(1).
  \end{align*}
  The first inequality on the last line holds because $T\ast T\leq\tau^{-1}$, and because $(T\ast T)(\bx)\leq4$ whenever $\frac{n}{3}\leq|\bx|\leq\frac{2n}{3}$, by~\cref{clm:ttbound}.  The last $o(1)$ equation holds because $\zeta=o(1)$ and because $\tau^{-2}=\Theta(n)$ while by Chernoff's inequality,
  \[\Pr{\bx\sim\zo^n}{|\bx|<\frac{n}{3}\,\text{ OR }\,|\bx|>\frac{2n}{3}} ~=~ 2^{-\Omega(n)}.\]  This along with the fact that $S\cdot(F\ast F) = 2(F\ast F)$,  allows us to conclude that $\big|\ip{F-g}{F\ast F}\big|=o(1)$ as \[\ip{F-g}{F\ast F}~=~\ip{S\cdot\Psi-S\cdot w}{F\ast F}~=~\ip{\Psi-w}{S\cdot(F\ast F)}~=~2\ip{\Psi-w}{F\ast F}\;. \qedhere\]
\end{proof}

\paragraph{\cref{fact:cs1} (Restated).}\emph{For any $\varphi:\zo^n\rightarrow\R$, $r\in\N$ and $H_1,\dots,H_r\in\calH_k$, \[\ip{\varphi}{H_1\cdots H_r}^4 ~\leq~ \ip{\hat\varphi^4}{{\bf 1}}.\]}

\begin{proof}
  Recall each $H_i\in\calH_k$ is a $(k-1)$-fold convolution of bounded functions, $H_i=h_{i,1}\ast\cdots\ast h_{i,k-1}$ for $h_{i,j}:\zo^n\rightarrow[-1,1]$.  Write $h_i=h_{i,1}$ and $h_i'=h_{i,2}\ast\cdots\ast h_{i,k-1}$ so that $H_i=h_i\ast h_i'$ for $i=1,\dots,r$.  We have
  \begin{align*}
    \ip{\varphi}{H_1\cdots H_r}^2 ~&=~ \ip{\varphi}{\prod_i(h_i\ast h'_i)}^2\\
    ~&\leq~ { \angles[\bigg]{ \phi*\phi\;, \; \prod_{i=1}^r h_i*h_i}} \cdot\prod_{i=1}^r\|h'_i\|_2^2\\
    ~&\leq~ \angles[\bigg]{ \phi*\phi\;, \; \prod_{i=1}^r h_i*h_i} \\
    ~&\leq~ \norm{\phi*\phi} \cdot \prod_{i=1}^r \norm{h_i*h_i}  \\
    ~&\leq~ \ip{\hat\phi^4}{{\bf 1}}^{\frac{1}{2}} \;,\end{align*} where the first inequality is~\cref{clm:firstcs}, the second holds because 
    \[
    \prod_{i=1}^r\big\|h'_i\big\|_2^4~=~\prod_{i=1}^r\big\|h_{i,2}\ast\cdots\ast h_{i,k-1}\big\|_2^4~\leq~\prod_{i=1}^r\prod_{j=2}^{k-1}\big\|h_{i,j}\ast h_{i,j}\big\|_2^2~\leq~\prod_{i=1}^r\prod_{j=2}^{k-1}\big\|h_{i,j}\big\|_2^4~\leq~1,
    \] (using \cref{claim:conv_ind}, Cauchy-Schwarz, and that each $\big\|h_{i,j}\big\|_2^4\leq1$), the third inequality is Cauchy-Schwarz and submultiplicativity of norms, the fourth holds because $\norm{h_i*h_i}\leq 1$, for each $i \in [r]$.
\end{proof}

\subsection{Proof of Claim~\ref{clm:dodos} $-$ Boosting the Approximator}
\label{sec:dodos}
In this section we prove \cref{clm:dodos}, completing the proof of \cref{lem:main} and \cref{thm:main}.  We follow the argument from~\cite{DK16}, taking advantage of the following extra properties of our approximator.

\begin{claim}[{\bf Extra Approximator Properties}]
  \label{clm:measure}
  Let $g$ be the function promised by~\cref{clm:rttv}.  We have:
  \begin{itemize}[align=left]
  \item[{\bf (P1)}] $\big\|\bigl((F-g)\ast(F-g)\bigr)^2\big\|_2~=~\calO(1)$;
  \item[{\bf (P2)}] ${\rm Pr}_{\bx\sim\zo^n}\Bigl[\big|(F-g)^{\ast(k-1)}(\bx)\big|>4^{k-1}+1\Bigr]~=~o(1)$.
  \end{itemize}    
\end{claim}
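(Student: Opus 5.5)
For (P1), I will use that $|F|=T$ and $|g|\le 2\cdot\boldone_S = S$ pointwise, since $g$ takes values in $[-2,2]$, is supported on $S$, and $S=2\cdot\boldone_S$. So $|F-g|\le T+S$ pointwise. Convolution is monotone for nonnegative functions, so $|(F-g)\ast(F-g)|\le (T+S)\ast(T+S)$.

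Expanding this gives $T\ast T+2\,T\ast S+S\ast S$. Since $S=1+\chi_{\bf 1}$ and $T$ is supported on strings of weight $\equiv n/2 \bmod 2$, the cross term is $T\ast S = 1+\hat T({\bf 1})\chi_{\bf 1}$. This is bounded by $2$. Likewise $S\ast S=S\le 2$. Hence
\[
\big|(F-g)\ast(F-g)\big|\le T\ast T+6 .
\]
Squaring and taking the $L^2$ norm, I get $\|((F-g)\ast(F-g))^2\|_2\le \|(T\ast T+6)^2\|_2\le \|(T\ast T)^2\|_2+12\|T\ast T\|_2+36$. Each term is $\calO(1)$ by \cref{clm:ttbound}(2), using Jensen for the middle term. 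This proves (P1).

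For (P2), the same pointwise domination gives $|(F-g)^{\ast(k-1)}|\le (T+S)^{\ast(k-1)}$. The plan is to show that $(T+S)^{\ast(k-1)}\le 4^{k-1}+1$ except on a set of density $o(1)$.

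\textbf{Case $k=3$.} I will use $(T+S)^{\ast2}\le T\ast T+6$. Then \cref{clm:ttbound}(1) gives $T\ast T<4$ whenever $n/3\le|\bx|\le 2n/3$, so $T\ast T+6<10\le 4^2+1$ there. The complementary weights have probability $2^{-\Omega(n)}$ by Chernoff.

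\textbf{Case $k\ge4$.} I will first expand $(T+S)^{\ast(k-1)}$. Because $S\ast S=S$, $S\ast T$ is bounded, and $S$ acts as a bounded projection under convolution, every term containing at least one $S$ is bounded by an absolute constant of the form $2^{j}$. The only genuinely dangerous term is $T^{\ast(k-1)}$. Here I will show that $T^{\ast j}$ for $j\ge3$ is close to $S$ or $1$ in $L^2$. Its Fourier transform is $\hat T^{j}$, and by \cref{clm:krav} we have $\sum_{u\notin\{{\bf 0},{\bf 1}\}}\hat T(u)^{2j}=o(1)$ for $2j\ge 6$. By Parseval, $\|T^{\ast j}-(1+\hat T({\bf 1})^j\chi_{\bf 1})\|_2=o(1)$, and Markov then gives $T^{\ast j}\le 2+o(1)$ outside a set of density $o(1)$.

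\textbf{Main obstacle.} The hard part is the bookkeeping constant $4^{k-1}+1$. A clean route is to avoid expanding the sum term by term. Instead, I would write $(F-g)^{\ast(k-1)}=\Phi+E$, where $\Phi$ is the Fourier-truncated part on $\{{\bf 0},{\bf 1}\}$. On these two frequencies $|\hat F|,|\hat g|\le 2$, so $|\hat F-\hat g|\le 4$ there and $|\Phi|\le 2\cdot 4^{k-1}$. That crude bound is not quite enough, so I would refine it using $\hat F(u)-\hat g(u)=o(1)$ from \cref{clm:rttv}. This shows $\Phi=o(1)$ pointwise, so the bulk is governed by $E$. I would then bound $\|E\|_2^2=\sum_{u\notin\{{\bf 0},{\bf 1}\}}|\hat F-\hat g|^{2(k-1)}$ using $|\hat F-\hat g|\le|\hat T|+|\hat S|$ together with \cref{clm:krav}, noting $\hat S=0$ off $\{{\bf 0},{\bf 1}\}$. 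This gives $\|E\|_2=o(1)$, hence $|E|\le 1$ off an $o(1)$ set by Markov. The result is even stronger than stated, and $4^{k-1}+1$ leaves ample slack.

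For $k=3$ the Fourier tail bound still holds, because \cref{clm:krav} covers exponent $2(k-1)=4\ge3$. So the same argument handles all $k\ge3$ uniformly, and I would present it that way.
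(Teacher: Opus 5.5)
Your (P1) argument is correct and is essentially the paper's. You dominate $|F-g|$ by $T+S$ where the paper uses $T-S+4$, and either way $|(F-g)\ast(F-g)|\le T\ast T+\calO(1)$, after which Claim~\ref{clm:ttbound} finishes. Your $k=3$ argument for (P2) via Claim~\ref{clm:ttbound}(1) and Chernoff is also correct, and so is your first, term-by-term sketch for $k\ge4$.

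The gap is in the ``clean route'' you say you would actually present for all $k\ge3$. It needs $\|E\|_2^2=\sum_{u\notin\{{\bf 0},{\bf 1}\}}|\hat F(u)-\hat g(u)|^{2(k-1)}=o(1)$, and you get this from $|\hat F(u)-\hat g(u)|\le|\hat T(u)|+|\hat S(u)|$. That inequality is false: the pointwise bounds $|F|\le T$ and $|g|\le S$ do not transfer to Fourier coefficients.

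For instance, take $F=T\chi_v$ with $|v|=n/2$, so $f$ is exactly linear. Then $\hat F(v)=1$, while $|\hat T(v)|$ is exponentially small. Now take $g=S\chi_v$, which is supported on $S$, bounded by $2$, and satisfies $\hat F-\hat g=o(1)$ everywhere. It gives $\hat F(u)-\hat g(u)=\hat T(u+v)-\hat S(u+v)$. At $u=v+w$ with $|w|=2$ this has modulus $\frac{1}{n-1}$, whereas $|\hat T(u)|+|\hat S(u)|$ is exponentially small (or zero) there.

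Off $\{{\bf 0},{\bf 1}\}$, Claim~\ref{clm:rttv} tells you only two things about $\hat F-\hat g$. It is pointwise $o(1)$ at an unspecified rate, and $\sum_u|\hat F(u)-\hat g(u)|^2=\|F-g\|_2^2=\Theta(\sqrt n)$. These do not combine to give $\|E\|_2=o(1)$.

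The repair is to run your $\Phi+E$ decomposition on the nonnegative dominator, not on $(F-g)^{\ast(k-1)}$ itself. You have $|(F-g)^{\ast(k-1)}|\le(T+S)^{\ast(k-1)}$. The right-hand side has Fourier transform $(\hat T+\hat S)^{k-1}$, which has modulus $2^{k-1}$ at ${\bf 0}$ and ${\bf 1}$ and equals $\hat T^{k-1}$ elsewhere. So its main part is at most $2^k$ pointwise. Its error part has squared $L^2$ norm $\sum_{u\notin\{{\bf 0},{\bf 1}\}}\hat T(u)^{2(k-1)}=o(1)$ by Lemma~\ref{clm:krav}; the exponent is $2k-2\ge4$, so $k=3$ is covered too. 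Markov then gives $(T+S)^{\ast(k-1)}\le2^k+1\le4^{k-1}+1$ outside a set of measure $o(1)$.

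The paper does the same thing with the dominator $T-S+4$, whose Fourier coefficient at ${\bf 1}$ vanishes. Its $(k-1)$-fold convolution therefore concentrates, by Chebyshev, around its mean $4^{k-1}$. That is the entire source of the constant $4^{k-1}+1$, so there is no delicate bookkeeping to work around. The point to keep is that pointwise domination survives convolution with nonnegative dominators but not passage to Fourier coefficients, so the Parseval step has to be applied to the dominator.
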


\begin{proof}
Recall that $|F-g|\leq T-S+4$ since $|F|=T$ and $g$ is supported on $S$, with $|g|\leq 2$.

\noindent	{\bf Establishing (P1).}   Note that $(T-S)\ast4=0$, and $T\ast S=S\ast S=S$, Thus,
  \[(F-g)\ast(F-g)~\leq~(T-S+4)\ast(T-S+4)=T\ast T-S+16~\leq~ T\ast T+16 \;.\] By using the bound $(A+B)^2~\leq~2(A^2+B^2)$ twice, we obtain: 
  \[
  \Bigl((F-g)\ast(F-g)\Bigr)^4~\leq~\bigl(T\ast T+16\bigr)^4~\leq~8\cdot\bigl((T\ast T)^4+16^4\bigr)\;.
  \] It follows from~\cref{clm:ttbound} that \[
  \big\|\bigl((F-g)\ast(F-g)\bigr)^2\big\|_2^2=\big\langle\bigl((F-g)\ast(F-g)\bigr)^4,{\bf 1}\big\rangle ~=~\calO\bigl(\ip{(T\ast T)^4}{{\bf 1}}\bigr)~=~\calO(1)\;.\]

\noindent  {\bf Establishing (P2).}  Note that \[\Ex{\bx\sim\zo^n}{(T-S+4)^{*(k-1)}(\bx)}~=~ \Ex{\bx\sim\zo^n}{(T-S+4)(\bx)}^{k-1}~=~4^{k-1}.\]  Thus,
  \begin{align*}
    \Pr{\bx\sim\zo^n}{\big|(F-g)^{\ast(k-1)}(\bx)\big|>4^{k-1}+1}
     ~&\leq~ \Pr{\bx\sim\zo^n}{(T-S+4)^{\ast(k-1)}(\bx)>4^{k-1}+1}\\[1.1pt]
    ~&\leq~
    \Pr{\bx\sim\zo^n}{\Bigl((T-S+4)^{\ast(k-1)}(\bx)-4^{k-1}\Bigr)^2>1}\\[1.2pt]
    ~&\leq~
    \Ex{\bx\sim\zo^n}{\bigl((T-S+4)^{\ast(k-1)}(\bx)\bigr)^2}-4^{2(k-1)} &&\text{(Markov)}\\
     ~&=~ \sum_{\bu\neq {\bf 0}}\bigl(\hat T(\bu)-\hat S(\bu)\bigr)^{2(k-1)}&&\text{(Parseval)}\\[1.1pt]
    ~&=~
    \ip{(\hat T-\hat S)^{2k-2}}{{\bf 1}} ~=~ o(1) &&\text{(\cref{clm:krav})}.
  \end{align*}
The Parseval is invoked for the function $H:=(T-S+4)^{\ast(k-1)}$, and uses that $\hat H(\bu)=\bigl(\hat T(\bu)-\hat S(\bu)\bigr)^{k-1}$ for $\bu\neq{\bf 0}$ and $\hat H({\bf 0})=4^{k-1}$.
\end{proof}

\paragraph{\cref{clm:dodos} (Restated).}\emph{For all $t=1,\dots,k$, and $h_{t+1},\dots,h_k:\zo^n\rightarrow[-1,1]$,} \[\big|\ip{(\hat F-\hat g)^t\hat h_{t+1}\cdots\hat h_k}{{\bf 1}}\big| ~=~o(1),\] where the $o(1)$ term tends to $0$ as $n\rightarrow\infty$ with $k$ fixed.

\begin{proof}
  For $t=1,\dots,k$, let $\statement(t)$ be shorthand for the statement: \[\sup_{h_{t+1},\dots,h_k}\Big\{\big|\ip{(\hat F-\hat g)^t\hat h_{t+1}\cdots\hat h_k}{{\bf 1}}\big|\Big\}=o(1),\] where the supremum is over all bounded functions $h_{t+1},\dots,h_k:\zo^n\rightarrow[-1,1]$.  
  
  Claim~\ref{clm:rttv} establishes that $\statement(1)$ holds. We will show $\statement(t)\Rightarrow\statement(t+1)$ for $t\in\{1,\dots,k-1\}$ which suffices.  Actually, our argument does not work for establishing $\statement(2)\Rightarrow\statement(3)$ when $k=3$.  This is due to our use of Cauchy-Schwarz in a particular place (Fact~\ref{fact:cs}) which turns out to be too loose for this specific case.  However, when $k=3$, we can establish $\statement(3)$ directly using the identity \[\ip{(\hat F-\hat g)^3}{{\bf 1}}=\ip{F-g}{F\ast F}-2\ip{(\hat F-\hat g)^2\hat g}{{\bf 1}}-\ip{(\hat F-\hat g)\hat g^2}{{\bf 1}}\] and all three terms on the right are $o(1)$ (the first is by Claim~\ref{clm:rttv}, and the second two are because $\statement(1)$ and $\statement(2)$ both hold).

  So now, assume that $\statement(t)$ holds for some $t\in\{1,\dots,k-1\}$, we establish $\statement(t+1)$.  And as addressed above, assume we are not in the specific case when $k=3$ and $t=2$.  Fix bounded functions $h_{t+2},\dots,h_k:\zo^n\rightarrow[-1,1]$, and let $G=(F-g)^{\ast t}\ast h_{t+2}\ast\cdots\ast h_k$, so that the quantity we must bound is $|\ip{F-g}{G}|$.  Note that $|F-g|\leq T-S+4$ (since $g$ is supported on $S$ and has $|g|\leq2$), and so Cauchy-Schwarz gives
  \begin{align*}
    \ip{F-g}{G}^2
    ~&\leq~
    \ip{T-S+4}{{\bf 1}}\cdot\ip{T-S+4}{G^2}\\
    ~&=~
    4\,\ip{T-S}{G^2}+16\,\ip{\boldone_{|G(\cdot)|\leq4^{k-1}+1}}{\,G^2} + 16\,\ip{\boldone_{|G(\cdot)|>4^{k-1}+1}}{\,G^2}\\
    ~&=:~ 4E_1+16E_2+16E_3.
  \end{align*}
  We prove the lemma by showing $E_1,E_2,E_3=o(1)$.  We will use the following bound which is derived from Cauchy-Schwarz.  We prove it below, after we finish the current proof.

  \begin{fact}
    \label{fact:cs}
    Let $\phi,h_2\dots,h_k:\zo^n\rightarrow\R$ and set $G=h_2\ast\cdots\ast h_k$.  Then \[\ip{\phi}{G^2}^4 ~\leq~\ip{\hat \phi^4}{{\bf 1}}\cdot \prod_{i=2}^{k-1}\big\|(h_i\ast h_i)^2\big\|_2^2 \cdot\Gamma_k,\] where $\Gamma_k:=\big\|(h_k\ast h_k)^2\big\|_2^2$ when $k\geq4$, and $\Gamma_3:=\big\|h_3\big\|_2^8$.
  \end{fact}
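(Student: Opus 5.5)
We want $\ip{\phi}{G^2}^4 \le \ip{\hat\phi^4}{{\bf 1}}\cdot\prod_{i=2}^{k-1}\|(h_i\ast h_i)^2\|_2^2\cdot\Gamma_k$ where $G=h_2\ast\cdots\ast h_k$. The plan is two Cauchy--Schwarz steps. First apply \cref{clm:firstcs} with $r=2$, then Cauchy--Schwarz on the inner product, then \cref{claim:conv_ind} to split the convolutions.

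\textbf{Step 1: set up \cref{clm:firstcs}.} Write $G^2=G\cdot G$ and factor each copy as $h\ast h'$ with $h=h_2$ and $h'=h_3\ast\cdots\ast h_k$. For $k=3$ this means $h'=h_3$. \cref{clm:firstcs} with $r=2$ and $h_1=h_2'=h$, $h_1'=h_2'=h'$ gives
\[
\ip{\phi}{G^2}^2\;\le\;\ip{\phi\ast\phi}{(h_2\ast h_2)^2}\cdot\|h'\|_2^4 .
\]

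\textbf{Step 2: separate the $\phi$ part.} Square and apply Cauchy--Schwarz to the inner product, $\ip{\phi\ast\phi}{(h_2\ast h_2)^2}^2\le\|\phi\ast\phi\|_2^2\,\|(h_2\ast h_2)^2\|_2^2$. By Parseval, $\|\phi\ast\phi\|_2^2=\sum_u\hat\phi(u)^4=\ip{\hat\phi^4}{{\bf 1}}$. Hence
\[
\ip{\phi}{G^2}^4\;\le\;\ip{\hat\phi^4}{{\bf 1}}\cdot\|(h_2\ast h_2)^2\|_2^2\cdot\|h'\|_2^8 .
\]

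\textbf{Step 3: bound $\|h'\|_2^8$.} For $k=3$, $\|h'\|_2^8=\|h_3\|_2^8=\Gamma_3$ and we are done; the product over $i=2,\dots,k-1$ is then just the $i=2$ factor. For $k\ge4$, $h'$ is a convolution of $r=k-2\ge2$ functions. \cref{claim:conv_ind} gives $\|h'\|_2^4\le\prod_{i=3}^k\|(h_i\ast h_i)^2\|_2$, so $\|h'\|_2^8\le\prod_{i=3}^{k}\|(h_i\ast h_i)^2\|_2^2$. The $i=k$ factor is $\Gamma_k$ and the rest complete the product over $i=2,\dots,k-1$.

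The only delicate point is this case split. \cref{claim:conv_ind} needs at least two factors, which is exactly why $k=3$ has the different $\Gamma_3=\|h_3\|_2^8$. It also explains the paper's remark that the bound is too loose when $k=3,t=2$: $\|h_3\|_2$ for $h_3=F-g$ is unbounded, of order $\tau^{-1/2}$. Beyond that, one must check that Step 1 correctly matches the shapes in \cref{clm:firstcs}, namely taking the product of two identical convolutions $h\ast h'$.
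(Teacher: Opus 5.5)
Your proposal is correct and is essentially the paper's own proof: \cref{clm:firstcs} with $h=h_2$ and $h'=h_3\ast\cdots\ast h_k$, then Cauchy--Schwarz and Parseval to extract $\ip{\hat\phi^4}{{\bf 1}}$, then \cref{claim:conv_ind} for $k\geq4$ (and $\|h'\|_2=\|h_3\|_2$ when $k=3$), and you state the squared Cauchy--Schwarz step more explicitly than the paper's display does. One small typo: in Step 1 the substitution into \cref{clm:firstcs} should read $h_1=h_2=h$ and $h_1'=h_2'=h'$, whereas you wrote $h_1=h_2'=h$.
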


  \noindent We will invoke Fact~\ref{fact:cs} when each $h_2,\dots,h_k$ is such that either $|h_i|\leq1$ or else $h_i=F-g$.  In this special case, note that $\big\|(h_i\ast h_i)^2\big\|_2^2=\calO(1)$ for all $i=2,\dots,k$ (using property ${\bf (P1)}$ of~\cref{clm:measure} when $h_i=F-g$).  Moreover, since we are not considering the case of $\statement(3)$ when $k=3$ (recall we treated this case separately above), the $\Gamma_k$ term when $k=3$ is also $\calO(1)$ since $|h_3|\leq1$ in this case.  Thus, in all cases where we will use~\cref{fact:cs}, it will give \[\label{eq:cs}\ip{\phi}{G^2}^4=\ip{\hat\phi^4}{{\bf 1}}\cdot\calO(1).\tag{$*$}\]

\paragraph{Bounding $E_1$.} We have by (\ref{eq:cs}) and \cref{clm:krav}, \[E_1^4=\ip{T-S}{G^2}^4=\ip{(\hat T-\hat S)^4}{{\bf 1}}\cdot\calO(1)=o(1).\]

\paragraph{Bounding $E_2$.} Set $\varphi(\bx):=\boldone_{|G(\bx)|\leq4^{k-1}+1}\cdot\frac{G(\bx)}{4^{k-1}+1}$ so that $\varphi:\zo^n\rightarrow[-1,1]$.  We get:
\begin{align*}
  E_2
  &=\ip{\boldone_{|G(\cdot)|\leq4^{k-1}+1}}{G^2}=(4^{k-1}+1)\cdot\ip{\varphi}{G}=(4^{k-1}+1)\cdot\ip{(\hat F-\hat g)^t\hat\varphi\cdot\hat h_{t+2}\cdots\hat h_k}{{\bf 1}}\\
  &=
  (4^{k-1}+1)\cdot o(1)=o(1),
\end{align*}
since $\statement(t)$ holds.  We have used that $G=(F-g)^{\ast t}\ast h_{t+2}\ast\cdots\ast h_k$.

\paragraph{Bounding $E_3$.} If $t\in\{1,\dots,k-2\}$ then $h_k:\zo^n\rightarrow[-1,1]$ is bounded, and so 
\begin{align*}
  |G(\bx)|
  ~&\leq~
  \Ex{\by_2,\dots,\by_{k-1}\sim\zo^n}{\big|h_2(\by_2)\big|\cdots\big|h_{k-1}(\by_{k-1})\big|\cdot\big|h_k(\bx+\by_2+\cdots+\by_{k-1})\big|}\\
  ~&\leq~
  \prod_{i=2}^{k-1}\Ex{\by}{|h_i(\by)|} ~\leq~ 4^{k-2},
\end{align*}
since for $i=2,\dots,k-1$, either $|h_i|\leq1$, or $|h_i|\leq T-S+4$.  Thus, $E_3=\ip{\boldone_{|G(\cdot)|>4^{k-1}+1}}{G^2}=0$ in this case, as $\boldone_{|G(\cdot)|>4^{k-1}+1}\equiv0$.  So suppose $t=k-1$, in which case $G=(F-g)^{\ast(k-1)}$, and property ${\bf (P2)}$ of~\cref{clm:measure} gives 
\[
\Pr{\bx\sim\zo^n}{\big|G(\bx)\big|>4^{k-1}+1} ~=~ o(1)\;.
\]
  It thus follows from equation (\ref{eq:cs}) that 
  \[
  E_3^4 ~=~ \ip{\boldone_{|G(\cdot)|>4^{k-1}+1}}{G^2}=\ip{\hat\boldone_{|G(\cdot)|>4^{k-1}+1}^4}{{\bf 1}}\cdot\calO(1)=o(1)\cdot\calO(1)=o(1)\;. \qedhere
  \]
\end{proof}
  
\paragraph{\cref{fact:cs} (Restated).}\emph{Let $\phi,h_2\dots,h_k:\zo^n\rightarrow\R$ and set $G=h_2\ast\cdots\ast h_k$.  Then \[\ip{\phi}{G^2}^4 ~\leq~\ip{\hat \phi^4}{{\bf 1}}\cdot \prod_{i=2}^{k-1}\big\|(h_i\ast h_i)^2\big\|_2^2\cdot\Gamma_k,\] where $\Gamma_k:=\big\|(h_k\ast h_k)^2\big\|_2^2$ when $k\geq4$, and $\Gamma_3:=\big\|h_3\big\|_2^8$.}

\begin{proof}
  Write $h=h_2$ and $h'=h_3\ast\cdots\ast h_k$ so that $G=h\ast h'$.  We have
  \begin{align*}
    \ip{\phi}{G^2}^2
    ~&=~
    \ip{\phi}{(h\ast h')^2}^2\\
    ~&\leq~
    {\angles[\bigg]{ \phi*\phi\;, \; (h*h)^2}}\cdot\big\|h'\big\|_2^4 && (\text{\cref{clm:firstcs}})\\
    ~&\leq~  \ip{\hat\phi^4}{{\bf 1}}\cdot\big\|(h_2\ast h_2)^2\big\|_2^2\cdot\big\|h'\big\|_2^4 && \text{(Cauchy-Schwarz)}\;.
  \end{align*}
Finally, when $k\geq4$, we have \[\big\|h'\big\|_2^4 ~=~ \big\|h_3\ast\cdots\ast h_k\big\|_2^4 ~\leq~\prod_{i=3}^k\big\|(h_i\ast h_i)^2\big\|_2\] by~\cref{claim:conv_ind}, while when $k=3$, $\big\|h'\big\|_2^4=\big\|h_3\big\|_2^4$.  The fact follows.
\end{proof}

%% file: general.tex
\section{Linearity Testing on the $q$-slice}\label{sec:general}
Let $\F_q$ be a finite field of characteristic $p$, \ie $q = p^r$.  In this section, we focus on testing the proximity of a given function, $f: \F_q^n \to \F_q$, to $\mathrm{Lin}(\F_q^n,\F_q)$, the set of $\F_q$-linear maps from $\F_q^n$ to $\F_q$. These are maps of the form $\phi(x) = u\cdot x$ for some $u \in \F_q^n$.  The key result we will prove is the following:

\begin{restatable}[Test over $\F_q^n$]{theorem}{sliceq}
\label{thm:sliceq}Let $\qslice \subseteq \qcube$ be the set of vectors of Hamming weight $\frac{q-1}{q} n$.  	If for any odd $k \geq 5$, $f: \qslice\to \F_q$ passes \hyperref[test_gen]{$\F_q$-Test$_k$} with probability $\frac{1+ (q-1)\delta}{q}$, then there exists a $ u \in \F_q^n$, such that
\[ \Pr{\vec x \sim \qslice}{f(x) =  u\cdot x }~\geq~ \frac{1+ (q-1)\delta}{q} \,- o_n(1).  \] 
\end{restatable}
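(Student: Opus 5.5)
The proof follows the simple $5$-query strategy from the introduction, transplanted to $\F_q$ with Kiwi's test. Fix a nontrivial additive character $\omega:\F_q\to\C^\times$, say $\omega(a)=e^{2\pi i\,\Tr(a)/p}$. For $\lambda\in\F_q^*$ define
\[
F_\lambda(x):=T_q(x)\cdot\omega(\lambda f(x)),
\]
where $T_q$ is the indicator of $\qslice$ normalized by its density. The Fourier coefficients are $\hat h(u)=\E_x[h(x)\overline{\omega(u\cdot x)}]$. The standard identity
\[
\Pr{x\sim\qslice}{f(x)=u\cdot x}=\frac1q\sum_{\lambda\in\F_q}\hat F_\lambda(\lambda u)
\]
reduces the goal to showing $\max_u\frac1{q-1}\sum_{\lambda\neq0}\mathrm{Re}\,\hat F_\lambda(\lambda u)\ge\delta-o(1)$.

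\textbf{Step 1: express the acceptance probability in Fourier terms.} I take $\F_q$-Test$_k$ in Kiwi's form: pick $x_1,\dots,x_{k-1}$ and coefficients $c_i$ (with the sum constrained to lie in $\qslice$), and check a linear relation among $f(x_1),\dots,f(x_{k-1}),f(\sum c_ix_i)$. Averaging over the check's character gives
\[
\frac{1+(q-1)\delta}{q}=\frac1q\sum_\lambda\frac{\sum_u\prod_i\hat F_{\lambda}(c_iu)\cdots}{\ip{\hat T_q^k}{{\bf 1}}}.
\]
Roughly, this should yield the analogue of \cref{eq:ftdelta},
\[
\sum_{\lambda\neq0}\ip{\hat F_\lambda^{k}}{{\bf 1}}'=(q-1)\,\delta\,\ip{\hat T_q^{k}}{{\bf 1}}',
\]
where the primes indicate the appropriate scalings and conjugations dictated by the test. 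The oddness of $k$ is presumably what lets the $\lambda$-sum pair terms into real quantities.

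\textbf{Step 2: isolate the top Fourier coefficient.} Peel off $k-4$ factors with the maximum modulus and use Hölder:
\[
(q-1)\delta\,\ip{\hat T_q^k}{{\bf 1}}\le\max_{\lambda,u}|\hat F_\lambda(u)|^{k-4}\sum_\lambda\sum_u|\hat F_\lambda(u)|^4.
\]
Next, bound $\sum_u|\hat F_\lambda(u)|^4=\norm{F_\lambda*\overline{F_\lambda}}_2^2\le\norm{T_q*T_q}_2^2=\sum_u\hat T_q(u)^4$, since $|F_\lambda|=T_q$ pointwise and convolution of absolute values dominates.

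\textbf{Step 3: the $q$-ary Krawtchouk bound (main obstacle).} $\hat T_q(u)=\krawt^{(q)}_{w}(|u|)/\binom nw(q-1)^w$ with $w=\frac{q-1}qn$. The required analogue of \cref{clm:krav} (the promised \cref{claim:qkraw_mean}) is that $\sum_u|\hat T_q(u)|^m=(\text{number of }u\text{ with }|\hat T_q(u)|=1)+o(1)$ for $m\ge3$. I expect these extremal $u$ to be exactly the $u=\lambda{\bf 1}$ (there are $q$ of them), since $\hat T_q(\lambda{\bf 1})$ is a unimodular character value (the slice has a fixed coordinate sum only mod $p$ in a suitable sense); this needs care. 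The proof would copy \cref{claim:boolkraw_mean}:
\begin{itemize}
\item Plancherel gives $|\hat T_q(u)|^2\le\tau_q^{-1}\big/\bigl(\binom n{|u|}(q-1)^{|u|}\bigr)$ with $\tau_q^{-1}=O(\sqrt n)$.
\item Small weights $|u|\le c$ are handled by a direct computation showing $\krawt_w(j)/\krawt_w(0)=O(n^{-1/2})$, or better, for $1\le j\le c$.
\item A symmetric treatment handles $u$ near the scaled all-ones vectors, via the multiplication symmetry $x\mapsto$ coordinatewise shifts.
\end{itemize}
The middle terms then sum to $O(n^{c'-(m/2-1)\cdot})=o(1)$. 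Obtaining the small-weight decay estimates for general $q$ is the step I expect to be hardest.

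\textbf{Step 4: conclude.} Combining the previous steps, the ratio $\ip{\hat T_q^k}{{\bf 1}}/\ip{\hat T_q^4}{{\bf 1}}\to1$, which gives a large coefficient. I then average over $\lambda$ to turn this into an agreement of $\frac{1+(q-1)\delta}{q}-o(1)$ with some linear $u$. The contributions from the trivial large coefficients at $\lambda{\bf 1}$ must be shown to align with the linear structure rather than being merely affine.
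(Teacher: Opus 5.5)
\textbf{There is a genuine gap, between Step 2 and Step 4.}

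Agreement with $u\cdot x$ equals $\frac1q\bigl(1+(q-1)\widetilde{\agr}(u)\bigr)$, where $\widetilde{\agr}(u)=\frac{1}{q-1}\sum_{\lambda\neq 0}\hat F_\lambda(\lambda u)$. This is a real average over \emph{all} $\lambda$, taken at the coupled points $\lambda u$. Step 2 only gives you one coefficient of large modulus, and that says nothing about this average. For example, take $q\ge 3$ and $f(x)=u\cdot x+1$. Then $|\hat F_\lambda(\lambda u)|=1$ for every $\lambda\neq0$. Yet $\widetilde{\agr}(u)=\frac{1}{q-1}\sum_{\lambda\ne0}\omega(\lambda)=-\frac{1}{q-1}$, and $f$ agrees with every linear function on at most a $1/q+o(1)$ fraction of $\qslice$. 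So the ``average over $\lambda$'' in Step 4 cannot be carried out. The triangle inequality in Step 2 discards exactly the phase and cross-$\lambda$ information that separates linear from affine functions.

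The same problem appears in Step 1, whose identity is not correct as written. The check $\sum_i a_i f(x_i)=0$ multiplies each $f(x_i)$ by its own coefficient. Hence the $\beta$-th term of the character expansion is $\sum_u\prod_i\hat F_{\beta a_i}(\beta a_i u)$, with a different index $\beta a_i$ in each factor, not $\prod_i\hat F_\lambda(c_i u)$ with a single $\lambda$. Averaging over $\beta$ and the uniform $a_i\in\F_q^*$ yields exactly the $k$-th power of the $\lambda$-average: $\sum_{u}\widetilde{\agr}(u)^k=\delta\,\ip{\hat T_q^k}{{\bf 1}}$ (\cref{claim:agr_expression}). A diagonal sum $\sum_\lambda\ip{\hat F_\lambda^k}{{\bf 1}}$ would instead correspond to a fixed-coefficient (plain BLR) check, which is exactly what fails over $\F_q$.

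\textbf{The missing idea} is to run the moment argument on the real function $u\mapsto\widetilde{\agr}(u)$ itself. This is where the oddness of $k$ enters; realness holds for any $k$ by pairing $\lambda$ with $-\lambda$. Write $k=2r+1$. Since $\widetilde{\agr}(u)^{2r}\ge 0$, you get $\max_u\widetilde{\agr}(u)\ge\sum_u\widetilde{\agr}(u)^{2r+1}/\sum_u\widetilde{\agr}(u)^{2r}$. The same identity at exponent $2r$ bounds the denominator by $\delta_{2r}\ip{\hat T_q^{2r}}{{\bf 1}}\le\ip{\hat T_q^{2r}}{{\bf 1}}$; this is the paper's argument.

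Your fourth-moment bound can also be salvaged this way. By the power-mean inequality, $\sum_u\widetilde{\agr}(u)^4\le\frac{1}{q-1}\sum_{\lambda\ne0}\sum_u|\hat F_\lambda(u)|^4\le\ip{\hat T_q^4}{{\bf 1}}$. Since $k-4$ is odd, this gives $\delta\,\ip{\hat T_q^k}{{\bf 1}}\le\bigl(\max_u\widetilde{\agr}(u)\bigr)^{k-4}\ip{\hat T_q^4}{{\bf 1}}$.

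\textbf{On Step 3:} it is routine rather than the main obstacle, and your guess about it is wrong for $q\ge 3$. The coordinate sum is not constant on $\qslice$, so $|\hat T_q(\lambda{\bf 1})|=(q-1)^{-\frac{q-1}{q}n}$ is exponentially small. The only large coefficient is at $u=0$; having $q$ unimodular coefficients is a $q=2$ phenomenon. The exact values at weights $1$ and $2$ (namely $0$ and $O(1/n)$), together with the Plancherel bound, give $\ip{\hat T_q^m}{{\bf 1}}=1+O(n^{3-m})$, as in \cref{lem:cycles}.
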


\subsection{Preliminaries}

\begin{definition}[Characters and Fourier Transform over $G$]\label{def:fourier} For every finite abelian group $G$, there exists a set (which is also a group) $\hat{G}$ of its \textit{characters}, \ie the set of linear functions, $\hat{G} = \{\chi: G \to \C \mid \chi \text{ is linear}\}$. Additionally, $G\cong \hat{G}$ as groups, and one has the general Fourier transform for any $f:G\to \C$,
\begin{align*}
\hat{f}(\chi) ~&:=~ \Ex{x\sim G}{f(x)\cdot \overline{\chi}(x)}\;,&&\text{[Fourier Coefficients]}\\
f(x) ~&=~ \sum_{\chi \in \hat{G}} \hat{f}(\chi) \chi(x)\;,&&\text{[Fourier Transform]}\\
\norm{f}_2^2 ~=~ \Ex{x \in G}{|f(x)|^2} ~&=~ \sum_{\chi \in \hat{G}} |\hat{f}(\chi)|^2 \;.&&\text{[Plancharel Theorem]}
\end{align*}

\end{definition}

Fix $\omega$, a primitive $p$-th root of unity. Define the trace map as,
\[
 \Tr: \F_q\to \F_p, \; \Tr(x) = \sum_{i=0}^{r-1} x^{p^i} \;.
\]
	
	\begin{fact}[Characters and Fourier Transform over $\F_q$]\label{fact:char_fq} The set of characters of the additive group $\F_q$, \ie linear functions from $\F_q \to \C^{\times}$, is given by $\{ \omega^{ \Tr(\beta \cdot)} \mid \beta \in \F_q\}$. Using these characters, the Fourier transform of the function $\indicator{z=0}:\F_q \to \C$ is given by,
	\[
	\indicator{z = 0} ~=~ \Ex{\beta\in \F_q}{\omega^{\Tr(\beta\cdot z)}}.
	\] 
	We will use $\psi: \F_q \to \C^{\times }$ to denote the composition map, $\psi(\cdot ):=\omega^{ \Tr(\cdot)}$. 
	\end{fact}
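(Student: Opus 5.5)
The plan is to establish both parts of the fact from one ingredient: the trace $\Tr:\F_q\to\F_p$ is a surjective $\F_p$-linear map. \emph{Linearity} is immediate, since the Frobenius $x\mapsto x^{p}$ is additive in characteristic $p$ and fixes $\F_p$. This gives $\Tr(x+y)=\Tr(x)+\Tr(y)$ and $\Tr(ax)=a\Tr(x)$ for $a\in\F_p$. \emph{Surjectivity} comes from a root count. $\Tr$ is given by a nonzero polynomial of degree $p^{r-1}<q$, so it cannot vanish on all $q$ elements of $\F_q$. Hence $\Tr\not\equiv 0$, and a nonzero linear functional onto the one-dimensional $\F_p$-space $\F_p$ is surjective. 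Consequently every fiber $\Tr^{-1}(a)$ is a coset of the kernel, so each has exactly $q/p$ elements. Note also that $\Tr(x)^p=\Tr(x)$, so $\Tr$ indeed lands in $\F_p$; this is the step I would check most carefully, as it is the only nontrivial point.

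\textbf{Characters.} For each $\beta\in\F_q$, the map $x\mapsto\psi(\beta x)=\omega^{\Tr(\beta x)}$ is a homomorphism $(\F_q,+)\to\C^\times$. This is by additivity of $\Tr$, and it is well defined because $\omega$ has order $p$ and $\Tr(\beta x)\in\F_p$. Next, these homomorphisms are pairwise distinct. Suppose $\psi(\beta x)=\psi(\beta' x)$ for all $x$. Then $\Tr((\beta-\beta')x)=0$ for all $x$. If $\beta\neq\beta'$, then $(\beta-\beta')x$ ranges over all of $\F_q$, which contradicts $\Tr\not\equiv0$. So we obtain $q$ distinct characters. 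By \cref{def:fourier}, $|\widehat{\F_q}|=|\F_q|=q$, so this list is exactly the full set of characters.

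\textbf{The indicator identity.} If $z=0$, every term of $\Ex{\beta\in\F_q}{\omega^{\Tr(\beta z)}}$ equals $1$, so the average is $1$. If $z\neq0$, the map $\beta\mapsto\beta z$ is a bijection of $\F_q$, so the average equals $\Ex{y\in\F_q}{\omega^{\Tr(y)}}$. By the equal fiber sizes, this is
\[
\frac{1}{q}\cdot\frac{q}{p}\sum_{a\in\F_p}\omega^{a}=\frac1p\sum_{a=0}^{p-1}\omega^a=0,
\]
since $\omega\neq1$ is a $p$-th root of unity. Alternatively, this is orthogonality of the nontrivial character $y\mapsto\psi(y)$ against the trivial one. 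Together the two cases give $\indicator{z=0}=\Ex{\beta\in\F_q}{\omega^{\Tr(\beta z)}}$.
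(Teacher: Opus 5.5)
Your proof is correct and complete. You show that the trace is a surjective $\F_p$-linear functional, which gives $q$ distinct additive characters $x\mapsto\psi(\beta x)$. These are therefore all of them, since $|\widehat{\F_q}|=q$, and surjectivity also forces $\E_{y\in\F_q}[\psi(y)]=0$.

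The paper states this as a standard fact without proof, so there is nothing to compare against; your argument is the standard textbook one. The step you flagged, $\Tr(x)^p=\Tr(x)$, is routine. Applying Frobenius termwise shifts the sum by one index, and $x^{p^r}=x$ closes it up. The fixed points of $y\mapsto y^p$ in $\F_q$ are exactly the $p$ roots of $Y^p-Y$, that is, $\F_p$.
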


\begin{notation}[Shifted Agreement]
We will work throughout with the following quantity,
\[\tagrHom = \frac{q\agrHom - 1}{q-1}, \text{ where }, \agrHom ~=~ \Pr{x \in S}{ f(x) = \phi(x)}. 
\]
Note that this quantity depends on $S$, but we omit denoting it as it will be unambiguous. 
\end{notation}

In the Boolean case, $\tagrHom ~=~ \hat{F}(\phi)$. However, in the $\F_q$-case, it will be easier to work with $\tagrHom$ as the expression in terms of the Fourier coefficient is a bit cumbersome. However, we have the following expression:

\begin{corollary}\label{cor:agr_indicator}
	Let $S$ be any finite set and $f,\phi : S \to \F_q$ be any pair of functions. Then,
\[	\tagrHom ~=~\Ex{x\sim S}{\Ex{a \in \F_q^*}{\psi^{a}\parens[\big]{f(x)-\phi(x)}} }\;. \]
\end{corollary}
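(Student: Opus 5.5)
We must prove \cref{cor:agr_indicator}: $\tagrHom=\E_{x\sim S}\E_{a\in\F_q^*}\psi^{a}(f(x)-\phi(x))$, where $\psi^a(z)$ is read as $\psi(a z)=\omega^{\Tr(az)}$. The proof is a pointwise computation using the indicator expansion of \cref{fact:char_fq}, followed by averaging over $x\sim S$.

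\textbf{Step 1: pointwise identity.} For each $x\in S$ put $z=f(x)-\phi(x)\in\F_q$. By \cref{fact:char_fq},
\[
\indicator{z=0}~=~\Ex{\beta\in\F_q}{\psi(\beta z)}~=~\frac{1}{q}\Bigl(1+\sum_{a\in\F_q^*}\psi(az)\Bigr),
\]
since the $\beta=0$ term equals $1$. Rearranging gives
\[
\Ex{a\in\F_q^*}{\psi(az)}~=~\frac{q\,\indicator{z=0}-1}{q-1}.
\]

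\textbf{Step 2: average over $S$.} Take $\E_{x\sim S}$ of both sides. By linearity of expectation, $\E_{x\sim S}\indicator{f(x)=\phi(x)}=\agrHom$, so the right-hand side becomes $\frac{q\agrHom-1}{q-1}=\tagrHom$. This is exactly the claimed identity.

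\textbf{Main obstacle.} The only delicate point is the indicator expansion in Step 1. It requires that $\beta\mapsto\psi(\beta z)$ be a nontrivial character of $(\F_q,+)$ whenever $z\neq0$. This holds because the trace map is surjective onto $\F_p$ (it is a nonzero $\F_p$-linear map), and $\beta\mapsto\beta z$ is a bijection when $z\ne0$. Orthogonality of characters then gives $\E_{\beta}\psi(\beta z)=0$. This fact is already packaged in \cref{fact:char_fq}, so the corollary follows directly from it.
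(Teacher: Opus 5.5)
Your proof is correct and takes essentially the same approach as the paper: expand $\indicator{z=0}$ via the character sum of \cref{fact:char_fq}, separate the trivial character $\beta=0$, and rearrange using $\tagrHom=\frac{q\agrHom-1}{q-1}$. The only difference is cosmetic: you rearrange pointwise before averaging over $x\sim S$, whereas the paper averages first and then splits off the $\beta=0$ term.
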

\begin{proof}
It is a direct computation using the above fact.
\begin{align*}
	\agrHom ~&=~ \Ex{x\sim S}{\indicator{f(x)-\phi(x) = 0}}\\
	~&=~\Ex{x\sim S}{ \Ex{\beta \in \F_q}{\psi^{\beta}\parens[\big]{f(x)-\phi(x)}} }&&\text{[\cref{fact:char_fq}]}\\
		~&=~\Ex{x\sim S}{ \frac{q-1}{q} \Ex{a \in \F_q^*}{\psi^{a}\parens[\big]{f(x)-\phi(x)}}  + \frac{1}{q} }.
\end{align*}
The last equality separates out the term corresponding to the trivial character, \ie $\beta = 0$. The proof follows by rearranging as $\tagrHom = \frac{q\agrHom-1}{q-1}$.  
\end{proof}		

Finally, we need an analog of the convolution trick in~\cref{eq:ftdelta}. 

\begin{fact}[Fourier Expression for cycles]\label{fact:pk}
Let $S\subseteq \F_q^n$ be any subset, and $S$, be the indicator of the set normalized by the density, \ie  $S = \frac {q^n}{\abs{S}} \cdot \indicator{S}$. Then,

\[
P_k(S) ~:=~ \underset{\vec x\sim S^k}{\mathrm{Pr}}\brackets[\big]{{\textstyle\sum_i x_i = 0}} ~=~ \fourierinner{S}{k}. \]
\end{fact}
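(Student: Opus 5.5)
The plan is a direct Fourier expansion of the event $\sum_i x_i=0$ by orthogonality of characters, followed by factoring the $k$-fold average into a $k$-th power of a single Fourier coefficient. No earlier result is needed beyond \cref{def:fourier} and \cref{fact:char_fq}.

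\textbf{Step 1 (indicator of the event).} Write $\indicator{z=0}$ for $z\in\F_q^n$ as an average of characters of $\F_q^n$. By \cref{fact:char_fq} applied coordinatewise, these are $\chi_\beta(z)=\psi(\beta\cdot z)$ for $\beta\in\F_q^n$, and
\[
\indicator{z=0} ~=~ \Ex{\beta\in\F_q^n}{\overline{\chi_\beta}(z)}.
\]
Applying this with $z=\sum_i x_i$ and using multiplicativity, $\overline{\chi_\beta}(\sum_i x_i)=\prod_i\overline{\chi_\beta}(x_i)$, gives
\[
P_k(S) ~=~ \Ex{\beta}{\Ex{\vec x\sim S^k}{\prod_{i=1}^k\overline{\chi_\beta}(x_i)}}.
\]

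\textbf{Step 2 (factor and identify $\hat S$).} Since $\vec x$ is uniform on $S^k$, the inner average factors as $\prod_i\Ex{x\sim S}{\overline{\chi_\beta}(x)}$. Because $S$ denotes $\frac{q^n}{|S|}\indicator{S}$, we have
\[
\Ex{x\sim S}{\overline{\chi_\beta}(x)} ~=~ \Ex{x\sim\F_q^n}{S(x)\,\overline{\chi_\beta}(x)} ~=~ \hat S(\chi_\beta),
\]
which is exactly the Fourier coefficient of \cref{def:fourier}. Hence $P_k(S)=\Ex{\beta}{\hat S(\chi_\beta)^k}$, i.e.\ the $k$-th moment of $\hat S$ paired against the constant function ${\bf 1}$ on $\widehat{\F_q^n}\cong\F_q^n$.

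\textbf{Step 3 (normalization — the only real issue).} Everything above is routine. The one point to get right is the measure on the dual group implicit in $\fourierinner{S}{k}$.

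With the average over $\beta$ (normalized counting measure on characters), which is the convention needed for $P_k(S)$ to be a genuine probability, Step 2 is literally $\fourierinner{S}{k}$. This also explains why $\hat S$ is real in the symmetric cases used later: $\hat S(\chi_\beta)=\hat S(\chi_{-\beta})$ is the conjugate of itself once $S=-S$.

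If instead the pairing is read as an unnormalized sum over characters, as in the Boolean overview, the same computation gives $\sum_\beta\hat S(\chi_\beta)^k=q^n\,P_k(S)$. In that case the identity should be stated with this constant; the constant is harmless in every application, since only ratios such as \cref{eq:ftdelta} are used.

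I would write the proof with the averaged pairing so that the displayed equality holds verbatim, and add one sentence recording the $q^n$ factor for the summed convention.
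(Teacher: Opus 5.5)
Your proof is correct, and it reaches the identity by a slightly different route than the paper. The paper argues spectrally. It takes the normalized adjacency matrix $T_S$ of $\Cay(\F_q^n,S)$ and uses that the characters diagonalize it with eigenvalues $\hat S(\chi)$. It then identifies $P_k(S)$ with a diagonal entry of $T_S^k$; these entries are all equal by translation invariance, so $P_k(S)=\Tr(T_S^k)/q^n$. You instead expand $\indicator{\sum_i x_i=0}$ in characters and factor the average over the independent $x_i$. This is self-contained, needing only \cref{fact:char_fq}, and it keeps the normalization in plain view, which the trace argument obscures.

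Your Step 3 is right. The paper's own argument gives $P_k(S)=q^{-n}\sum_{\chi}\hat S(\chi)^k$, i.e., the averaged pairing. Elsewhere (\cref{clm:krav}, \cref{lem:cycles}), however, $\fourierinner{\cdot}{k}$ denotes the unnormalized sum, which equals $q^n P_k(S)$. As you say, the factor is harmless. The left side of \cref{claim:agr_expression} is likewise an average over $\phi$. The final $q$-slice theorem uses only the ratio $\fourierinner{\slice_t}{2r+1}/\fourierinner{\slice_t}{2r}$, in which $q^n$ cancels.

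Your remark that $\hat S$ is real when $S=-S$ is true but not needed. The identity holds for arbitrary $S$ as an equality of complex numbers.
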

\begin{proof}
	Let $T_S$ be the normalized adjacency matrix of $\Cay(\qcube, S)$. Then, its eigenvalues are $\{ \widehat{\tindicator{S}}(\chi) \mid \chi \in \widehat{\qcube} \}$. Now, since the graph is a Cayley graph, the number of $k$-cycles containing the identity is $\frac{\Tr(T_S^k)}{q^n}$  which is the same as $P_k(S)$. 
\end{proof}

\subsection{The test and its analysis}
Over general fields $\F_q$, the BLR test does not yield good agreement guarantees, even over the full hypercube $\F_q^n$. To remedy this, Kiwi~\cite{Kiwi03} defined a modification of the BLR test, and alternate Fourier-theoretic proofs for this test were given by~\cite{Gopalan13,GKS06}. Therefore, it is natural to use this test for functions over subsets of $\F_q^n$. 

\vspace{0.6em}
\begin{tcolorbox}[colframe=teal, colback=white, title={$\F_q$-$\mathrm{Test}_k$}, label={test_gen}]
	\begin{itemize}
		\item Sample $(a_1,\dots, a_k) \sim \F_q^*$ and $(x_1,\dots, x_k) \sim S^k$ such that $a_1x_1+\dots +a_kx_k=0$.
		\item If $a_1f(x_1)+\dots +a_kf(x_k)= 0$: return $1$; otherwise: return $0$.
	\end{itemize}
\end{tcolorbox}
\vspace{0.6em}

\paragraph{Test Distribution}

Let $\mu_k$ be the distribution over $(\F_q^*)^k\times S^k$ that is uniformly supported on elements, $((a_1,\dots, a_k), (x_1,\dots, x_k))$ such that $\sum_i a_ix_i = 0$. This is the distribution used to define the test.  We note two useful properties of this distribution.

\begin{claim}
 For any fixed $\vec{a} = (a_1,\cdots a_k) \in (\F_q^*)^k$, the distribution, $\mu_k$, satisfies:
	\begin{align}
\label{eqn:scalar_invariance}		\mu_k{(\vec{a}, \vec{x})} ~&=~  \mu_k{(\beta\vec{a}, \vec{x})}, \quad \text{for any }\beta \in \F_q^*, \;\vec{x} \in S^k.\\
\label{eqn:prob_invariance}\underset{\vec x\sim S^k}{\mathrm{Pr}}\brackets[\big]{\textstyle{\sum_i a_i x_i = 0}}	 ~&=~ \underset{\vec x\sim S^k}{\mathrm{Pr}}\brackets[\big]{{\textstyle\sum_i x_i = 0}} ~=~ \fourierinner{S}{k}. 
	\end{align}
\end{claim}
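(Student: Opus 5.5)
The plan is to handle the two identities separately; both come down to reindexing the uniform distribution $\mu_k$ by a bijection.

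For \eqref{eqn:scalar_invariance}, fix $\vec x\in S^k$ and $\beta\in\F_q^*$. The support condition $\sum_i a_ix_i=0$ holds if and only if $\sum_i(\beta a_i)x_i=\beta\sum_i a_ix_i=0$, since $\beta$ is invertible. Also, $\vec a\mapsto\beta\vec a$ is a bijection of $(\F_q^*)^k$. So $(\vec a,\vec x)$ lies in the support of $\mu_k$ exactly when $(\beta\vec a,\vec x)$ does. Because $\mu_k$ is uniform on its support, the two points get the same mass.

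For \eqref{eqn:prob_invariance}, fix $\vec a\in(\F_q^*)^k$ and use the change of variables $x_i\mapsto a_ix_i$, coordinatewise on each vector. This requires the set $S$ to be closed under multiplication by nonzero scalars. That holds in the case of interest, $S=\qslice$, because multiplying a vector by $a\neq0$ does not change its Hamming weight. Under this assumption the map $\vec x\mapsto(a_1x_1,\dots,a_kx_k)$ is a bijection of $S^k$, and it carries the event $\{\sum_i a_ix_i=0\}$ onto $\{\sum_i y_i=0\}$. Uniform measure on $S^k$ is preserved, so the two probabilities are equal. The final equality with $\fourierinner{S}{k}$ is then exactly \cref{fact:pk}.

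The only point needing care is this closure hypothesis, which the claim uses implicitly. For a general $S$ the identity \eqref{eqn:prob_invariance} can fail, so I would state the scalar-invariance assumption explicitly and note that $\qslice$ satisfies it. Everything else is routine bookkeeping of bijections.
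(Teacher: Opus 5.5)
Your proposal is correct and follows the paper's proof. Identity \eqref{eqn:scalar_invariance} comes from the equivalence of the support conditions plus uniformity of $\mu_k$, and identity \eqref{eqn:prob_invariance} comes from the coordinatewise rescaling bijection plus \cref{fact:pk}. You are also more careful than the paper in two places: its bijection silently needs $S$ to be closed under multiplication by $\F_q^*$, which you state and verify for $\qslice$; and your map $\vec x\mapsto(a_1x_1,\dots,a_kx_k)$ from $\{\sum_i a_ix_i=0\}$ onto $\{\sum_i y_i=0\}$ runs in the right direction, which the paper's write-up muddles.

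One caveat comes from \cref{fact:pk} itself, not from your reasoning. With the paper's normalization $\hat f(\chi)=\mathbb{E}_x[f(x)\overline{\chi}(x)]$ and $\fourierinner{S}{k}=\sum_\chi\hat S(\chi)^k$, one actually has $\mathrm{Pr}_{\vec x\sim S^k}\bigl[\sum_i x_i=0\bigr]=q^{-n}\fourierinner{S}{k}$. This factor cancels in the ratio used for the final theorem, so nothing downstream breaks.
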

\begin{proof}
The first claim is direct as $\sum_i a_i x_i = 0$ if and only if $\sum_i \beta a_i x_i = 0$ as $\beta \neq 0$. For the second claim, we observe that there is a bijection matching the support of the two sides of the equation. Since the probability is uniform over the support, their total probability mass is the same. Let $\vec{x}$ be such that $\sum_i x_i = 0$, \ie $\mu(\vec{1},\vec{x}) \neq 0$. We map the vector to  $(a_1x_1,\dots, a_kx_k)$. This clearly satisfies that $\sum_i a_ix_i = 0$, and thus it lies in the support. The map is clearly injective, and it is surjective because $a_i \neq 0$, so it can be inverted. The last equality is again using the convolution expression as in~\cref{fact:pk}.
\end{proof}

\begin{claim}[Test Passing Probability]
			Let $S\subseteq \F_q^n$, $f:S \to \F_q$. Let  $\frac{1 + (q-1)\delta_k}{q}$ be the probability that $f$ passes the General Test. Then, we have
			\begin{align*}
				\delta_k ~=~\Ex{ (\vec{a}, \vec{x}) \sim \mu_k }{\;\psi \parens[\big]{\textstyle \sum_i a_if(x_i)} } \;. 
			\end{align*}
		\end{claim}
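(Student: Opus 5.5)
The plan is to compute the passing probability directly from the definition of the test, using the orthogonality identity of \cref{fact:char_fq}. The test accepts exactly when $Z:=\sum_i a_i f(x_i)$ equals $0$, so the acceptance probability is
\[
\Ex{(\vec a,\vec x)\sim\mu_k}{\indicator{Z=0}}~=~\Ex{(\vec a,\vec x)\sim\mu_k}{\Ex{\beta\in\F_q}{\psi(\beta Z)}}.
\]
Separating the term $\beta=0$, which contributes $\frac1q$, leaves
\[
\frac{1}{q}+\frac{q-1}{q}\,\Ex{\beta\in\F_q^*}{\Ex{(\vec a,\vec x)\sim\mu_k}{\psi\parens[\big]{\textstyle\sum_i \beta a_i f(x_i)}}}.
\]
Comparing this with the definition of $\delta_k$ through the expression $\frac{1+(q-1)\delta_k}{q}$, it suffices to show that the inner expectation does not depend on $\beta\in\F_q^*$.

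This independence comes from the scalar invariance \cref{eqn:scalar_invariance}. For a fixed $\beta\neq0$, the map $(\vec a,\vec x)\mapsto(\beta\vec a,\vec x)$ is a bijection of $(\F_q^*)^k\times S^k$. It preserves the constraint $\sum_i a_ix_i=0$ and preserves the uniform measure $\mu_k$. Therefore the distribution of $\sum_i\beta a_if(x_i)$ under $\mu_k$ is the same as the distribution of $\sum_i a_if(x_i)$. So each inner expectation equals $\Ex{\mu_k}{\psi(\sum_i a_if(x_i))}$, and averaging over $\beta$ changes nothing.

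Substituting back, the acceptance probability is $\frac{1+(q-1)E}{q}$ with $E=\Ex{\mu_k}{\psi(\sum_i a_if(x_i))}$. Since this equals $\frac{1+(q-1)\delta_k}{q}$, we get $\delta_k=E$, which is the claim. As a side remark, $E$ is real: the substitution $\beta=-1$ shows that $E$ equals its own complex conjugate.

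The only step that needs care is the measure-preservation justification, namely that multiplying $\vec a$ by $\beta$ does not leave the support of $\mu_k$. That is exactly what \cref{eqn:scalar_invariance} records, so there is no real obstacle here.
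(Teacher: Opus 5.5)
Your proposal is correct and matches the paper's proof step for step. Both expand the acceptance indicator with the character identity of \cref{fact:char_fq}, split off the $\beta=0$ term, and use the scalar invariance \cref{eqn:scalar_invariance} to remove the average over $\beta\in\F_q^*$. Your side remark that the resulting quantity is real, obtained by taking $\beta=-1$, is a harmless addition that the paper does not include.
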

		\begin{proof} We start by rewriting the test probability using~\cref{fact:char_fq}.
\begin{align*}
\frac{1 + (q-1)\delta_k}{q} ~&=~\Ex{ (\vec{a}, \vec{x}) \sim \mu }{ \indicator{\sum_i a_if(x_i) = 0}}\\
 ~&=~ \Ex{ (\vec{a}, \vec{x}) \sim \mu }{ \Ex{\beta \sim \F_q}{\psi\parens[\big]{\textstyle \beta \sum_i  a_i f(x_i) }}}  &&\text{[\cref{fact:char_fq}]}\\
  ~&=~ \frac{q-1}{q} \Ex{ (\vec{a}, \vec{x}) \sim \mu }{ \Ex{\beta \sim \F_q^*}{\psi\parens[\big]{\textstyle \beta \sum_i  a_i f(x_i) }}} + \frac{1}{q}. &&\text{[Separating } \beta = 0]\\
	\delta_k		~&=~ \Ex{ (\vec{a}, \vec{x}) \sim \mu }{ \Ex{\beta \sim \F_q^*}{\psi\parens[\big]{\textstyle \beta \sum_i  a_i f(x_i) }}}
				&&\text{[Rearranging]}.
			\end{align*}
Now, we will use the invariance property of the distribution $\mu_k$ (denoted as $\mu$ for readability).
\begin{align*}
		\delta_k		~&=~ \Ex{ (\vec{a}, \vec{x}) \sim \mu_k }{ \Ex{\beta \sim \F_q^*}{\psi\parens[\big]{\textstyle \beta \sum_i  a_i f(x_i) }}}
				.\\
		~&=~ \Ex{\beta \sim \F_q^*}{ \Ex{ (\vec{a}, \vec{x}) \sim \mu }{\psi\parens[\big]{\textstyle \beta \sum_i  a_i f(x_i) }}}
				&&\text{[Fubini]}.\\
		~&=~ \Ex{\beta \sim \F_q^*}{ \Ex{ (\beta^{-1}\vec{a}, \vec{x}) \sim \mu }{\psi\parens[\big]{\textstyle \beta \sum_i  \beta^{-1}a_i f(x_i) }}}
				&&\text{[\cref{eqn:scalar_invariance}]}.\\
				~&=~  \Ex{ (\vec{a}, \vec{x}) \sim \mu }{\psi\parens[\big]{\textstyle \sum_i  a_i f(x_i) }}.&&\text{[Using~\cref{eqn:scalar_invariance} again]}.\qedhere
\end{align*}
		\end{proof}
We now write an exact expression for the $k$-moment of the agreement. This is the key lemma that enables the analysis of the modified BLR test. This is the $\F_q$-variant of~\cref{eq:ftdelta}.
 				
		\begin{lemma}[Agreement Expression]\label{claim:agr_expression} Let $f: S\to \F_q$, which we extend to $\qcube$ by defining it to be zero outside $S$.  
			\[
			\Ex{\phi \in  \mathrm{Lin}(\F^n_q,\F_q)}{\tagrHom^k} ~=~  \delta_k \cdot \fourierinner{S}{k}
			\]	
		\end{lemma}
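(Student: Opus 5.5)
The plan is to expand $\tagrHom^k$ via \cref{cor:agr_indicator}, average over linear $\phi$, and recognize the Fourier expression of \cref{fact:pk} together with the test-passing formula. Write $\phi_u(x)=u\cdot x$. By \cref{cor:agr_indicator},
\[
\tagrHom^k ~=~ \Ex{x_1,\dots,x_k\sim S}{\Ex{a_1,\dots,a_k\in\F_q^*}{\psi\parens[\big]{\textstyle\sum_i a_i f(x_i)}\cdot\psi\parens[\big]{-\textstyle\sum_i a_i\, u\cdot x_i}}},
\]
using that $\psi$ is additive, so products of $\psi$-values combine into $\psi$ of a sum.

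Next, average over $u\in\F_q^n$ (uniform over $\mathrm{Lin}(\F_q^n,\F_q)$) and exchange expectations. The only $u$-dependence is $\psi(-u\cdot\sum_i a_ix_i)$. By \cref{fact:char_fq} applied coordinatewise,
\[
\Ex{u}{\psi\parens[\big]{u\cdot z}}=\indicator{z=0}.
\]
The average therefore becomes
\[
\Ex{\vec a\in(\F_q^*)^k}{\Ex{\vec x\sim S^k}{\indicator{\sum_i a_ix_i=0}\,\psi\parens[\big]{\textstyle\sum_i a_if(x_i)}}}.
\]

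Now factor the indicator as a probability times a conditional expectation. For each fixed $\vec a$,
\[
\Ex{\vec x}{\indicator{\sum_i a_ix_i=0}\,\psi(\cdot)}=\Pr{\vec x\sim S^k}{\textstyle\sum_i a_ix_i=0}\cdot\Ex{\vec x\mid \sum a_ix_i=0}{\psi(\cdot)}.
\]
By \cref{eqn:prob_invariance}, this probability equals $\fourierinner{S}{k}$ independently of $\vec a$, so it factors out.

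It remains to check that averaging the conditional expectation uniformly over $\vec a$ reproduces $\mu_k$. This is the step that needs care. Since $\mu_k$ is uniform on pairs $(\vec a,\vec x)$ with $\sum a_ix_i=0$, its marginal on $\vec a$ is proportional to the number of valid $\vec x$. By \cref{eqn:prob_invariance} that count is the same for every $\vec a$, so the marginal is uniform and the conditional on $\vec a$ is uniform over solutions. Hence the average equals $\Ex{\mu_k}{\psi(\sum_i a_if(x_i))}=\delta_k$ by the test passing probability claim, which gives the identity.

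The main subtlety is normalization. The paper's $S$ is the normalized indicator, so $\fourierinner{S}{k}$ should be read as the probability in \cref{fact:pk} up to the factor $q^n$ coming from $\Pr[\sum x_i=0]$ versus a density. I would check this constant against \cref{eqn:prob_invariance} and adopt its convention, which makes the identity exact.
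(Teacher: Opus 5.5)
Your proposal is correct and follows the paper's proof essentially step for step: expand $\tagrHom^k$ via \cref{cor:agr_indicator}, average over $\phi$ to produce $\indicator{\sum_i a_ix_i=0}$, pull out the probability using \cref{eqn:prob_invariance}, and recognize the remaining $\mu_k$-average as $\delta_k$. (The paper factors the joint indicator directly, so your per-$\vec a$ marginal argument is an unnecessary but harmless detour.) Your normalization worry is justified: with the paper's Fourier convention one has $\fourierinner{S}{k}=q^n\cdot\mathrm{Pr}_{\vec x\sim S^k}[\sum_i x_i=0]$, so the identity is exact only if $\E_{\phi}$ is replaced by $\sum_{\phi}$ (as in \cref{eq:ftdelta}) or $\fourierinner{S}{k}$ is read as $P_k(S)$, and this factor cancels in the ratio used in the $q$-slice theorem.
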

		\begin{proof}
		We start by using the expression for the agreement from~\cref{cor:agr_indicator},
			\begin{align*}
				\tagrHom ~&=~ \Ex{x\sim S}{\Ex{a \in \F_q^*}{\psi\parens[\big]{af(x)-a\phi(x)} }} ,\\
				\tagrHom^k ~&=~ \Ex{\vec x\sim S}{\Ex{\vec a \in {\F_q^*}^k}{\psi\parens[\big]{\scriptstyle \sum_i a_i\parens{f(x_i)-\phi(x_i)}} }}.
			\end{align*}
Since $\psi$ is a character, it satisfies $\psi(a+b)= \psi(a)\cdot \psi(b)$. Using this, we get, 
\[		\Ex{\phi\in \mathrm{Lin}(\qcube,\F_q)}{\tagrHom^k} ~=~ \Ex{\stackalign{\vec x\sim S^k\\ \vec a \in {\F_q^*}^k} }{ \psi \parens[\big]{ {\scriptstyle \sum_i a_if(x_i)}}\cdot \Ex\phi{  \psi \circ \phi \parens[\big]{-{\scriptstyle \sum_i a_ix_i} }}}\;. \]
			Now, for any $v \in \F_q^n$, we have, 
			\begin{align*}
				\Ex{\phi\in \mathrm{Lin}(\qcube,\F_q)}{ \psi\circ \phi(v)} 
				~&=~ \begin{cases}
				 1 \quad \text{ if }\quad	v = 0, \\
				 \E_{\beta\in \F_q}\brackets{\psi(\beta)} ~=~ 0 \quad \text{ if }\quad	v \neq 0. \\
				\end{cases} .
			\end{align*}
The second equality above uses that the image of $v$ is uniform over $\F_q$ if $v\neq 0$. And then, from~\cref{fact:char_fq}, we get that this sum is $\indicator{1=0} = 0$. 			
			Plugging this back, we get,
			\begin{align*}
				\Ex{\phi\in \Hom(\F^n_q,\F_q)}{\tagrHom^k}  ~&=~
				\Ex{\stackalign{\vec x\sim S^k\\ \vec a \in {\F_q^*}^k } }{ \psi \parens[\big]{ {\scriptstyle \sum_i a_if(x_i)}} \cdot  \indicator{\sum_i a_i x_i = 0}}\\[0.7em]
				~&=~ \Pr{\vec x\sim S^k, \vec a \in {\F_q^*}^k}{{\scriptstyle\sum_i a_i x_i = 0}} \cdot 
				\Ex{ (\vec{a}, \vec{x}) \sim \mu }{ \psi \parens[\big]{\scriptstyle \sum_i a_if(x_i)}}
				\\[0.7em]
				~&=~ \fourierinner{S}{k} \cdot \delta_k \;. \quad \text{[\cref{eqn:prob_invariance}]} \qedhere
			\end{align*}
		\end{proof}

%% file: qslice.tex
\subsection{Cycle counts over the $q$-slice}
Let $\slice_{t}$ be the subset\footnote{We omit denoting $q$ here as we will work with a single fixed $q$ throughout the section.} of vectors of Hamming weight $t$ over $\F_q^n$. We will denote the Fourier coefficients of its normalized indicator by $\iotaslice{\chi}$ instead of $\widehat{\iota_{\slice_{t}}}(\chi)$ for brevity. It is a standard fact from the theory of association schemes that these Fourier coefficients are given by the \textit{$q$-Krawtchouk polynomials}.

\begin{fact}[Fourier Coefficients of $q$-slice]\textup{\cite[Prop 2.84]{BBIT21}} \label{fact:qslice_coeff} Let $n >0$ be an integer and $0 \leq k, t \leq n$. 
\[\widehat{\slice}_{t}(\chi) ~=~ \frac{\krawt_{q,t}(|\chi|)}{|\slice_{t}|}\quad, \text{where},\quad \krawt_{q,t}(x) = \sum_{i=0}^t  \binom{x}{i} \binom{n-x}{t-i} (1-q)^{t-i} \;.
\] 
Therefore, $\fourierinner{\slice_t}{k} =  \sum^{n}_{x=0} \binom{n}{x} (q-1)^x  \cdot \hat{\slice_t}(x)^k$.
\end{fact}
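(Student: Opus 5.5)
This is a standard fact from association schemes. The plan is to verify it directly and then derive the cycle-count formula.

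\textbf{Step 1: the coefficient at a fixed character.} Every character of $\F_q^n$ has the form $\chi_v(x)=\psi(v\cdot x)$ for some $v\in\F_q^n$, and we write $|\chi|:=|v|$ for the Hamming weight of $v$. Since $\iota_{\slice_t}=\frac{q^n}{|\slice_t|}\indicator{\slice_t}$, we have
\[\widehat{\slice}_t(\chi_v)=\frac{1}{|\slice_t|}\sum_{x\in\slice_t}\overline{\psi(v\cdot x)}.\]
So it suffices to show $\sum_{x\in\slice_t}\psi(v\cdot x)=\krawt_{q,t}(|v|)$. This sum is real, because $\slice_t$ is closed under negation, so the conjugation does not matter.

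\textbf{Step 2: evaluate the sum.} Let $A=\mathrm{supp}(v)$, so $|A|=x$. Group the elements of $\slice_t$ by their support $B$, with $|B|=t$ and $|A\cap B|=i$; the number of such $B$ is $\binom{x}{i}\binom{n-x}{t-i}$. For a fixed $B$, the sum over all $y$ with support exactly $B$ factorizes over coordinates:
\[\prod_{j\in B}\;\sum_{y_j\in\F_q^*}\psi(v_jy_j).\]
By \cref{fact:char_fq}, the full sum over $y_j\in\F_q$ is $q\cdot\indicator{v_j=0}$. Removing $y_j=0$, the inner sum equals $q-1$ when $v_j=0$ and $-1$ when $v_j\neq0$. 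Hence the product is $(-1)^i(q-1)^{t-i}$. Summing over $B$ gives
\[\krawt_{q,t}(x)=\sum_i(-1)^i\binom{x}{i}\binom{n-x}{t-i}(q-1)^{t-i}.\]

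\textbf{Step 3: reconcile with the stated formula.} The statement writes $(1-q)^{t-i}$ in place of $(-1)^i(q-1)^{t-i}$. These differ by the global sign $(-1)^t$, since $(1-q)^{t-i}=(-1)^{t}(-1)^{i}(q-1)^{t-i}$. I expect this sign convention to be the only delicate point. I would note that the stated formula should be read either with the $(-1)^i$ form or up to the sign $(-1)^t$, and that the sign is harmless in the applications, which involve even $k$ or absolute values.

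\textbf{Step 4: the cycle-count formula.} By Step 2 the coefficient depends only on $|\chi|$. The number of characters of weight $x$ equals the number of $v\in\F_q^n$ with $|v|=x$, which is $\binom{n}{x}(q-1)^x$. Grouping the sum $\fourierinner{\slice_t}{k}$ by weight then gives
\[\fourierinner{\slice_t}{k}=\sum_{x=0}^{n}\binom{n}{x}(q-1)^x\,\widehat{\slice_t}(x)^k.\]
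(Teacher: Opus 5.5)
Your proof is correct, and it takes a different route from the paper. The paper gives no argument for this fact: it cites [BBIT21, Prop.~2.84], i.e.\ the eigenvalue formula for the $q$-ary Hamming association scheme. You instead compute the character sum $\sum_{x\in T_t}\psi(v\cdot x)$ directly, grouping by support and factoring over coordinates, using only the orthogonality relation from the paper's fact on characters of $\F_q$. The citation places the formula inside a general framework (eigenvalues of all distance classes) at no cost. Your computation is self-contained, and it is the $q$-ary analogue of the ``direct calculation'' the paper does for the Boolean slice.

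Your computation also surfaces a genuine sign error in the statement as written, which the citation hides. Since $(1-q)^{t-i}=(-1)^t(-1)^i(q-1)^{t-i}$, the displayed formula equals $(-1)^t$ times the true value. For example, at $x=0$ it gives $\widehat{T}_t(0)=(-1)^t$ instead of $1$. At $q=2$ it also disagrees, by the same sign, with the paper's own Boolean formula $\sum_r(-1)^r\binom{x}{r}\binom{n-x}{t-r}$. Your Step~3 resolution is the right one: the correct polynomial is $\sum_i(-1)^i\binom{x}{i}\binom{n-x}{t-i}(q-1)^{t-i}$.

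The sign does not affect the paper's later use, including the odd-$k$ theorem. That argument needs only three things, all of which hold with the corrected formula: $\widehat{T}_t(0)=1$, absolute-value bounds on the remaining coefficients, and the weight-grouped sum from your Step~4.
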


\paragraph{Some Bounds on Krawtchouk Polynomials}

\begin{fact}[Stirling's Bound]\label{fact:stirling}
For any $q,n \geq 2$, and $\alpha \in (0,1)$, we have, 
\[
\abs{\slice_{\alpha n}} ~=~\binom{n}{\alpha n} (q-1)^{\alpha n} ~\approx~ \frac{1}{\sqrt{2\pi \alpha (1-\alpha) n }} \cdot \parens[\Bigg]{\frac{(q-1)(1-\alpha)}{\alpha}}^{\alpha n} \cdot \frac{1}{(1-\alpha)^n}
\]
	In particular, for $\alpha = \frac{q-1}{q}$, one gets a bound of $\approx\frac{q^n}{\sqrt{n}}$, while for $\alpha = \frac{1}{2}$, one gets, $\approx\frac{\parens[\big]{2\sqrt{q-1}}^n}{\sqrt{n}}$.
\end{fact}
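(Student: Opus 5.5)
The plan is to apply Stirling's formula $m! = \sqrt{2\pi m}\,(m/e)^m\,(1+O(1/m))$ to each of the three factorials in $\binom{n}{\alpha n} = \frac{n!}{(\alpha n)!\,((1-\alpha)n)!}$. Here $\alpha n$ is assumed to be an integer, and $\alpha\in(0,1)$ is fixed, so that $\alpha n$ and $(1-\alpha)n$ both tend to infinity. We read ``$\approx$'' as equality up to a multiplicative factor $1+O_\alpha(1/n)$.

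First, the exponential parts $e^{-n}$, $e^{-\alpha n}$ and $e^{-(1-\alpha)n}$ cancel exactly. The powers of $n$ also cancel: $n^n / \bigl(n^{\alpha n} n^{(1-\alpha)n}\bigr) = 1$. What is left of the main terms is $\alpha^{-\alpha n}(1-\alpha)^{-(1-\alpha)n}$. Next, the square-root prefactors combine to
\[
\frac{\sqrt{2\pi n}}{\sqrt{2\pi\alpha n}\,\sqrt{2\pi(1-\alpha)n}} ~=~ \frac{1}{\sqrt{2\pi\alpha(1-\alpha)n}}.
\]
We then rewrite the main term in the form stated in the Fact:
\[
\alpha^{-\alpha n}(1-\alpha)^{-(1-\alpha)n} ~=~ \Bigl(\frac{1-\alpha}{\alpha}\Bigr)^{\alpha n}\cdot\frac{1}{(1-\alpha)^n}.
\]
Finally, multiplying by $(q-1)^{\alpha n}$ and absorbing it into the parenthesized base gives the displayed formula for $|\slice_{\alpha n}|$.

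For the two special cases, we substitute directly.

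\textbf{Case $\alpha=\frac{q-1}{q}$.} Here $1-\alpha = 1/q$, so the base becomes $(q-1)\cdot\frac{1/q}{(q-1)/q}=1$, and $(1-\alpha)^{-n}=q^n$. The prefactor is $\bigl(2\pi (q-1)n/q^2\bigr)^{-1/2}$, which is $\Theta_q(n^{-1/2})$. This gives $|\slice_{\alpha n}| \approx q^n/\sqrt n$ up to a constant depending only on $q$.

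\textbf{Case $\alpha=\frac12$.} The base is $q-1$, raised to the power $n/2$, and $(1-\alpha)^{-n}=2^n$. Together these give $(2\sqrt{q-1})^n$, with prefactor $\sqrt{2/(\pi n)}$.

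There is no real obstacle here; the only care needed is bookkeeping. The error terms $1+O(1/(\alpha n))$ and $1+O(1/((1-\alpha)n))$ are uniform as long as $\alpha$ stays bounded away from $0$ and $1$, which is the case for both specializations. The ``in particular'' statements hold up to constants depending on $q$ only, which is all that is used later when the bounds are compared against Krawtchouk-polynomial estimates.
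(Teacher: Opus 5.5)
Your proposal is correct. It is exactly the standard Stirling computation the paper intends: the Fact is stated without proof, and its name ``Stirling's Bound'' points to this derivation. Your bookkeeping of the prefactors is accurate, including the remark that the special cases hold only up to a $q$-dependent constant such as $q/\sqrt{2\pi(q-1)}$, and that is all the later use $q^n/|T_t| = O(\sqrt{n})$ in the Krawtchouk bounds requires.
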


\begin{claim}[Bounds for the Mean-Slice]
	\label{claim:qkraw_mean} For $t = \frac{q-1}{q} n$, the following bounds hold:
	\begin{enumerate}
		\item For any $q$, $\hat{\slice_t}(1)= 0$, and thus for $q=2$, $\iotaslice{1} = \hat{\slice_t}(n-1) = 0$
		\item For $x=2, n-2$, the Fourier coefficient satisfies, $\;|\hat{\slice_t}(x) | ~\leq~ O\parens[\big]{\frac{1}{n}}$.
		\item For any $0\leq x \leq n$, the Fourier coefficients satisfy, $\;\hat{\slice_t}(x)^2 ~\leq~ O\parens{\sqrt{n}}  \cdot \binom{n}{x}^{-1} \cdot (q-1)^{-x}$ .
	\end{enumerate}	
\end{claim}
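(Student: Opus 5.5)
Item 1 follows the Boolean argument through the generating function of $\krawt_{q,t}$, item 2 is a direct evaluation at $x=2$ and $x=n-2$, and item 3 comes from Plancherel, exactly as in the proof of \cref{claim:boolkraw_mean}.

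\textbf{Item 1.} The generating function is
\[\sum_t \krawt_{q,t}(x)z^t=(1+(q-1)z)^{n-x}(1-z)^x .\]
For $x=1$, $\krawt_{q,t}(1)$ is the coefficient of $z^t$ in $(1+(q-1)z)^{n-1}(1-z)$, namely
\[\binom{n-1}{t}(q-1)^t-\binom{n-1}{t-1}(q-1)^{t-1}=\binom{n-1}{t-1}(q-1)^{t-1}\Bigl(\tfrac{(n-t)(q-1)}{t}-1\Bigr).\]
This vanishes exactly when $t=\frac{q-1}{q}n$.

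For $q=2$ we also need $x=n-1$. Here I use the symmetry $\krawt_{n/2}(n-i)=(-1)^{i}\krawt_{n/2}(i)$ from \cref{claim:boolkraw_mean}, or simply reuse that claim's statement that odd weights vanish.

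\textbf{Item 2.} For $x=2$, the coefficient of $z^t$ in $(1+(q-1)z)^{n-2}(1-z)^2$ is
\[\binom{n-2}{t}(q-1)^t-2\binom{n-2}{t-1}(q-1)^{t-1}+\binom{n-2}{t-2}(q-1)^{t-2}.\]
I factor out $\binom{n-2}{t-1}(q-1)^{t-1}$. The bracket becomes
\[(q-1)\tfrac{n-t-1}{t}-2+\tfrac{t-1}{(q-1)(n-t)}.\]
With $t=\frac{q-1}{q}n$ we have $\frac{(q-1)(n-t)}{t}=1$, so the bracket equals $-\frac{q-1}{t}-\frac{1}{(q-1)(n-t)}=O(q/n)$.

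Dividing by $|\slice_t|=\binom nt(q-1)^t$ gives
\[\frac{\binom{n-2}{t-1}}{\binom nt(q-1)}=\frac{t(n-t)}{n(n-1)(q-1)}=\Theta(1/q),\]
so $|\hat\slice_t(2)|=O(1/n)$ for fixed $q$.

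For $x=n-2$ (only needed when $q=2$) I appeal to the Boolean symmetry again. For general $q$, $(q-1)^{n-2}$ is huge and item 3 already gives a sufficient bound.

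\textbf{Item 3.} Plancherel (\cref{def:fourier}) gives
\[\sum_x\binom nx(q-1)^x\hat\slice_t(x)^2=\|\slice_t\|_2^2=q^n/|\slice_t|.\]
By \cref{fact:stirling} at $\alpha=\frac{q-1}{q}$, this is $O(\sqrt n)$. Since every summand is nonnegative, each one is individually at most $O(\sqrt n)$, which is item 3.

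The only step needing care is the algebraic cancellation in item 2, specifically that the leading terms cancel exactly at $t=\frac{q-1}{q}n$. The rest is routine.
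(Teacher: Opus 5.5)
Your proof is correct and follows the paper's argument. Items 1 and 2 use explicit coefficient extraction at $x=1,2$ via $(q-1)(n-t)=t$, with the Boolean symmetry handling $x=n-1,n-2$ when $q=2$, and item 3 uses Plancherel with Stirling; the only variation is that for $q\geq 3$ you get $x=n-2$ from item 3 instead of the paper's explicit evaluation of $\krawt_{q,t}(n-2)$, which is equally valid. There are two harmless slips: $\frac{t(n-t)}{n(n-1)(q-1)}=\frac{n}{q^2(n-1)}=\Theta(1/q^2)$, not $\Theta(1/q)$ (this only sharpens your bound to the exact $|\hat{\slice}_t(2)|=\frac{1}{(q-1)(n-1)}$), and the binary symmetry you cite should read $\krawt_t(n-x)=(-1)^t\krawt_t(x)$, which still gives $|\krawt_{n/2}(n-2)|=|\krawt_{n/2}(2)|$ and everything else you use.
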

\begin{proof}
 Observe that for $q=2$, $\krawt_{2,t}(n-i) = \krawt_{2,t}(i)$ for any $i$ as the terms are the same. We now compute the level-1 coefficient.  
\begin{align*}
	\krawt_{q,t}(1) ~&=~    \binom{n-1}{t} (1-q)^{t} + \binom{n-1}{t-1} (1-q)^{t-1} , \\[0.7em]
	~&=~   \binom{n-1}{t} (1-q)^{t-1}\parens[\Big]{ (1-q) + \frac{t}{n-t} } ~=~ 0. && \brackets[\Big]{t ~=~ n- \frac{n}{q}}
\end{align*}

\begin{align*}
	\krawt_{q,t}(n-1) ~&=~    \binom{n-1}{t}  + \binom{n-1}{t-1} (1-q) \\[0.7em]
	~&=~   \binom{n-1}{t} q(2-q). \\[0.7em]
	 \abs{\iotaslice{n-1}} ~&=~ \frac{(2-q)}{(q-1)^{n-1}}. 
\end{align*}

The second claim is also a direct calculation.

\begin{align*}
	\krawt_{q,t}(2) ~&=~    \binom{n-2}{t} (1-q)^{t} + 2\binom{n-2}{t-1} (1-q)^{t-1} + \binom{n-2}{t-2} (1-q)^{t-2} \\[0.6em]
	~&=~    \binom{n-2}{t-1} \cdot(1-q)^{t-1}\cdot\parens[\Bigg]{ \frac{n-t-1}{t} (1-q) + 2 + \frac{t-1}{n-t} (1-q)^{-1} } \\[0.6em]
	~&=~    \binom{n-2}{t-1} \cdot(1-q)^{t-1}\cdot\parens[\Bigg]{ \frac{q-1}{t}  + \frac{1}{(q-1)n-t} }  && \brackets[\Big]{t ~=~ n- \frac{n}{q}}\\[0.6em]
	~&=~    -\binom{n-2}{t-1}\cdot (1-q)^{t-2}\cdot { \frac{q^2}{n} }, \\
	\implies\; \abs{\iotaslice{2}} ~&=~ \frac{1}{(q-1)(n-1)} \; .
\end{align*}
The above settles the case for $q=2$, as $\krawt_{2,t}(n-2) = \krawt_{2,t}(2)$.
We now compute the expression for a general $q$.
\begin{align*}
	\krawt_{q,t}(n-2) ~&=~    \binom{n-2}{t}  + 2\binom{n-2}{t-1} (1-q) + \binom{n-2}{t-2} (1-q)^{2} \\[0.6em]
	~&=~    \binom{n-2}{t-1} \cdot(1-q)\cdot\parens[\Bigg]{ \frac{n-t-1}{t} (1-q)^{-1} + 2 + \frac{t-1}{n-t} (1-q) } \\[0.6em]
		~&=~    \binom{n-2}{t-1} \cdot(1-q)\cdot\parens[\Bigg]{-(1-q)^{-2} +  \frac{q}{(q-1)^2n} + 2 - (q-1)^2 + \frac{q(q-1)}{n} } \\[0.6em]
\abs{\krawt_{q,t}(n-2)}	~&\leq~    2 \binom{n-2}{t-1} \cdot(q-1)^3\\[0.6em]
	\implies\; \abs{\iotaslice{n-2}} ~&\leq~ \frac{2n}{(q-1)^{t-2}(n-1)} ~\leq~ O\parens[\Big]{(q-1)^{-(t-2)}} \; .
\end{align*}
The last claim follows from Plancharel~(\cref{def:fourier}) when applied to the function, ${\slice_{q,t}} $.
	\[\sum_{x=0}^n {n \choose x} (q-1)^x \cdot \hat{\slice_t}(x)^2 ~=~ \norm{\slice_{q,\alpha n}}_2^2  ~=~ \frac{q^n}{|\slice_{q,\alpha n}|} ~\leq~ O\parens{\sqrt{n}}  \;. \]
The last inequality uses the Stirling bound~\cref{fact:stirling}. 	Since the terms are non-negative for each $x$, the above inequality holds for every summand. This implies our desired bound.
\end{proof}

We now prove our Fourier norm bound, using which we can finish up the proof as in the Boolean case.

\begin{lemma}\label{lem:cycles}
	Let $q\neq 2$ be a power of a prime, $n \geq 10^8$ and $k\geq 4$ be  integers. Then, \[ 
	\fourierinner{\slice_t}{k}  ~=~ 1+ O\parens[\big]{n^{3-k}}  .\]
\end{lemma}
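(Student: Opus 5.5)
We follow the template of \cref{clm:krav}. By \cref{fact:qslice_coeff},
\[
\fourierinner{\slice_t}{k} ~=~ \sum_{x=0}^n A_x, \qquad A_x ~:=~ \binom{n}{x}(q-1)^x\,\hat{\slice_t}(x)^k,
\]
with $t=\frac{q-1}{q}n$. We bound each $A_x$ using \cref{claim:qkraw_mean}.

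\textbf{The terms $x=0,1,2$.} The term $x=0$ gives exactly $A_0=1$. For $q\neq 2$, unlike the Boolean case, there is no symmetry $x\leftrightarrow n-x$, so $A_n$ will not contribute a second main term; we only need it to be small. Item 1 of \cref{claim:qkraw_mean} gives $A_1=0$. For $x=2$, item 2 gives $|\hat{\slice_t}(2)|\le O(1/n)$. Since $\binom{n}{2}(q-1)^2=O(n^2)$, we get $A_2=O(n^{2-k})$.

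\textbf{The top levels.} Item 3 gives $\hat{\slice_t}(x)^2\le O(\sqrt n)\,[\binom{n}{x}(q-1)^x]^{-1}$. For $x$ near $n$ this is tiny, because $(q-1)^x\geq 2^x$ is exponentially large when $q\ge3$. Concretely, it yields
\[
|A_x| ~\le~ O(n^{k/4})\cdot\Big[\binom{n}{x}(q-1)^x\Big]^{1-k/2}.
\]
Since $k\ge4$, the exponent satisfies $1-k/2\le -1$. For $x\ge n/2$ we have $\binom{n}{x}(q-1)^x\ge 2^{n/2}$, so these terms sum to at most $n\cdot O(n^{k/4})2^{-n/2}$, which is negligible. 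The explicit bounds on $\hat{\slice_t}(n-1)$ and $\hat{\slice_t}(n-2)$ computed in the proof of \cref{claim:qkraw_mean} confirm the same exponential decay at the very top.

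\textbf{The middle range $3\le x\le n/2$.} Here we use $\binom{n}{x}(q-1)^x\ge \binom{n}{x}2^x\ge (2n/x)^x$, or more crudely $\binom{n}{x}\ge (n/x)^x$. The general term is then bounded by
\[
O(n^{k/4})\,\binom{n}{x}^{1-k/2}.
\]
For $x=3$ this gives $n^{k/4}\cdot n^{3(1-k/2)}=n^{3-5k/4}$, which is $O(n^{3-k})$ (indeed much smaller). The ratio between consecutive terms is $\big(\binom{n}{x+1}/\binom{n}{x}\big)^{1-k/2}\le ((n-x)/(x+1))^{-1}\le O(1/n)$ for $x\le n/4$. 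On $[n/4,n/2]$ the terms are exponentially small. So the sum is dominated by the first term.

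\textbf{Main obstacle.} The only delicate point is that the $\sqrt n$ loss in item 3 could overwhelm the low levels. One must check that $x=3$ is already safe, and that $x=2$ is handled by the sharper item 2, which supplies the $n^{2-k}$ (hence $n^{3-k}$) error term. Combining the three ranges gives $\fourierinner{\slice_t}{k}=1+O(n^{3-k})$. We must also confirm there is no additional order-one contribution at $x=n$: item 3 gives $A_n=O(n^{k/4}(q-1)^{-n(k/2-1)})$, which is negligible.
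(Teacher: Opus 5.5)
Your proposal is correct and follows the paper's proof. Both split $\sum_x A_x$ by weight level, take $A_0=1$ exactly, use $A_1=0$, bound $A_2$ by item 2, and control all remaining levels by the Parseval bound of item 3. Your version is in fact tidier in two respects: you use the $(q-1)^x$ factor to get exponential decay at the top levels, in place of the paper's $\min(x,n-x)$ bound, and you bound $|A_x|$, so odd $k$ is covered and not just $k=2m$.

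One small slip: the ratio $\bigl(\frac{x+1}{n-x}\bigr)^{k/2-1}$ is $O(1/n)$ only for bounded $x$. It is still below $1/2$ throughout $3\le x\le n/4$, and that is all you need for the geometric series to be dominated by the $x=3$ term.
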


\begin{proof}
	We will write $k=2m$ for some $m\geq 2$. 
	By \cref{fact:qslice_coeff} and \cref{fact:stirling} we have that, 
\[
   \fourierinner{\slice_t}{k} ~=~   \sum^{n}_{x=0} \binom{n}{x} (q-1)^x  \cdot \hat{\slice_t}{x}^{2m} ~=:~ \sum_{x=0}^n T_x .
\]
We will now use~\cref{claim:qkraw_mean} to bound the terms. We assume here that $m\geq 2$.
\begin{align*}
    T_0 ~&=~  1,\; T_1 ~=~ 0,\\
    T_n ~&=~ 2^{-\Omega(n)}\\
    T_2, T_{n-2} ~&=~ O(n^{2-2m})\\
    T_x ~&\leq~ n^m \cdot \parens[\Bigg]{\binom{n}{x} (q-1)^x}^{1-m} \leq~ O\parens[\Big]{n^{m} \cdot n^{(1-m)\min(x, n-x)}} , \qquad \forall\, x \in [3,n-3] ,\\
\implies \sum_{x=3}^{n-3}   T_x   ~&\leq~   O(n^{3-2m}) .
\end{align*}
Now, the result follows by summing up these terms.
\end{proof}

\begin{theorem}[Test for the $q$-Slice]
Let $\qslice \subseteq \F_q^n$ denote the mean-slice. If, for any odd $k \geq 5$,  a function $f: \qslice \to \F_q$ passes the \hyperref[test_gen]{$\F_q$-$\mathrm{Test}_k$} with probability $\frac{1+ (q-1)\delta}{q}$, then
	\[ \max_{\phi\in \widehat{\qcube}} \agrHom ~\geq~ \frac{1}{q}+ \frac{(q-1)}{q}\cdot \delta \cdot (1- o_n(1)).  \] 
	\end{theorem}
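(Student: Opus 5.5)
We follow the strategy of the simple $5$-query test from the introduction, now in the $\F_q$ setting with shifted agreements. Write $k=2m+1$ with $m\geq 2$ and $S=\qslice$. By \cref{claim:agr_expression},
\[
\Ex{\phi}{\tagrHom^k} ~=~ \delta\cdot\fourierinner{S}{k}.
\]
The plan is to peel off one factor of the maximum, using $\tagrHom^k\leq \max_\phi \tagrHom\cdot \tagrHom^{2m}$ whenever $\tagrHom\geq 0$. We may assume $\delta>0$, since otherwise the claim is trivial. If $\max_\phi\tagrHom\geq 0$, then
\[
\delta\,\fourierinner{S}{k}~\leq~ \max_\phi\tagrHom\cdot \Ex{\phi}{\tagrHom^{2m}},
\]
where negative terms only help on the left. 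The real quantity $\tagrHom$ is at least $-1/(q-1)$, so if $q>2$ its odd power can be negative. We handle this by bounding $\E[\tagrHom^k\indicator{\tagrHom\geq 0}]\geq \E[\tagrHom^k]$, so negative values cause no trouble on this side. Note also that $\max_\phi\tagrHom\geq 0$ holds automatically, because the average over $\phi$ of $\tagrHom$ is $\delta_1$-like and is nonnegative; indeed this average is $0$ up to the $\phi=0$ term.

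The second step bounds the even moment $\Ex{\phi}{\tagrHom^{2m}}$ by applying \cref{claim:agr_expression} again with $k=2m$ to the same $f$. This requires the test to make sense for an even number of queries, and it does, since the lemma holds for any $k$. We get
\[
\Ex{\phi}{\tagrHom^{2m}}~=~\delta_{2m}\cdot\fourierinner{S}{2m}~\leq~ \fourierinner{S}{2m},
\]
because $|\delta_{2m}|\leq 1$.

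Combining the two steps gives
\[
\max_\phi\tagrHom~\geq~\delta\cdot\frac{\fourierinner{S}{k}}{\fourierinner{S}{2m}}.
\]
By \cref{lem:cycles} the numerator and denominator are both $1+O(n^{3-2m})=1+o(1)$, using $2m\geq 4$; this is where $q\neq 2$ and $k\geq 5$ enter. For the odd exponent $k$, the lemma's proof uses $k=2m$. We redo it for odd $k$ by bounding $|T_x|$ in the same way with absolute values, which gives $\fourierinner{S}{k}\geq 1-O(n^{3-k})$. The only delicate term is $T_n$, which is $2^{-\Omega(n)}$ in absolute value. The ratio is therefore $1-o(1)$. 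Finally, $\agrHom=\frac1q+\frac{q-1}{q}\tagrHom$ converts this into the stated bound. For $q=2$, the same argument goes through with \cref{clm:krav} in place of \cref{lem:cycles}.

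The main obstacle is controlling the odd-exponent Fourier sum, where signs could cancel against the main term $T_0=1$. We handle it by an absolute-value version of the estimates in \cref{claim:qkraw_mean}: $|T_x|\leq \binom{n}{x}(q-1)^x|\hat{\slice_t}(x)|^k$, with item 3 giving the same geometric decay as in the even case. Only the lower bound on $\fourierinner{S}{k}$ is needed for this.
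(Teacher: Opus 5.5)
Your proposal is correct and is essentially the paper's proof: you lower-bound $\max_\phi \tagrHom$ by the ratio of the $(2m+1)$-st to the $2m$-th moment, evaluate both moments with \cref{claim:agr_expression}, use $\delta_{2m}\le 1$, and conclude with \cref{lem:cycles}. Your absolute-value lower bound on the odd-exponent sum $\fourierinner{S}{k}$, together with the separate appeal to \cref{clm:krav} for $q=2$ (where odd $k$ implicitly requires $n/2$ even for the test to be defined), fills in a step the paper leaves implicit, since \cref{lem:cycles} is only proved for even exponents and excludes $q=2$.
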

\begin{proof}
\cref{lem:cycles} implies that for any $r \geq 2$, we get
\[
 \frac{\fourierinner{\slice_t}{2r+1} }{\fourierinner{\slice_t}{2r}} ~\geq~ (1-o_n(1)).
\]
Let $k = 2r+1 \geq 5$. Now the main trick of the proof is to use the following inequality: 
	\begin{align*}
		 \max_{\phi\in \mathrm{Lin}(\qcube, \F_q)} \tagrHom ~&\geq~ \frac{
		  \sum_{\phi\in \widehat{\qcube}} \tagrHom^{2r+1}
		  }{ \sum_{\phi\in \widehat{\qcube}} \tagrHom^{2r}}\\[0.7em]
		   ~&=~ \frac{\delta_{2r+1}\fourierinner{\slice_t}{2r+1} }{\delta_{2r}\cdot\fourierinner{\slice_t}{2r}} && [\text{\cref{claim:agr_expression}}]\\[0.7em]
		   ~&\geq~  \delta_{2r+1} \cdot \frac{\fourierinner{\slice_t}{2r+1} }{\fourierinner{\slice_t}{2r}}    && [\delta_{2r} \leq 1] \\[0.7em]
		   ~&\geq~ (1-o_n(1)) \cdot \delta_{2r+1}.
		 \end{align*}		 
Thus, we get $\frac{q\max_\phi \agrHom -1}{q-1} ~\geq~  (1-o_n(1)) \cdot  {\delta_{k}}$, which yields the claim.  
\end{proof}